\documentclass[
    aps,
    prx,
    reprint,
    superscriptaddress
]{revtex4-2}

\usepackage{amsmath,amssymb}
\usepackage{bm}
\usepackage{algorithm}
\usepackage{algpseudocode}
\usepackage{siunitx}
\usepackage{textcomp}
\usepackage{xcolor}
\usepackage[hidelinks]{hyperref}
\usepackage{comment}
\usepackage[nobottomtitles]{titlesec} 
\usepackage{bbm}

\usepackage[capitalize]{cleveref}
\usepackage{etoolbox}

\usepackage{graphicx}
\usepackage{caption}
\usepackage{subcaption}
\usepackage{ragged2e}
\usepackage{enumitem}
\usepackage{mathtools}
\usepackage{tabularx}
\usepackage{array}

\makeatletter
\def\@makecaption#1#2{%
  \vskip\abovecaptionskip
  \sbox\@tempboxa{%
    \parbox{\linewidth}{%
      \raggedright\justifying
      \textbf{#1}~#2%
    }%
  }%
  \ifdim \wd\@tempboxa >\linewidth
    \parbox{\linewidth}{\raggedright\justifying \textbf{#1}~#2}%
  \else
    \parbox{\linewidth}{\raggedright\justifying \textbf{#1}~#2}%
  \fi
  \vskip\belowcaptionskip
}
\makeatother

\usepackage[stable]{footmisc}
\usepackage{physics}
\usepackage{cancel}

\usepackage{amsthm}

\theoremstyle{plain}
\newtheorem{theorem}{Theorem}[section]
\newtheorem{proposition}[theorem]{Proposition}
\newtheorem{lemma}[theorem]{Lemma}

\newtheorem{fact}[theorem]{Fact}

\theoremstyle{definition}
\newtheorem{definition}[theorem]{Definition}

\newtheoremstyle{boldremark}
  {\topsep}      
  {\topsep}      
  {\normalfont}  
  {0pt}          
  {\bfseries}    
  {.}            
  {.5em}         
  {}             

\theoremstyle{boldremark}
\newtheorem{innerremark}[theorem]{Remark}

\newenvironment{remark}
  {%
   \pushQED{\qed}%
   \innerremark}
  {\popQED\endinnerremark}

\newcommand{\CC}{\mathbb{C}}
\newcommand{\EE}{\mathsf{E}}
\newcommand{\II}{\mathbb{I}}
\newcommand{\NN}{\mathbb{N}}
\newcommand{\PP}{\mathsf{P}}
\newcommand{\RR}{\mathbb{R}}

\newcommand{\indicator}{\mathbbm{1}}

\newcommand{\muvec}{\boldsymbol{\mu}}

\newcommand{\fvec}{\boldsymbol{f}}

\newcommand{\vvec}{\boldsymbol{v}}

\newcommand{\xvec}{\boldsymbol{x}}
\newcommand{\yvec}{\boldsymbol{y}}
\newcommand{\zvec}{\boldsymbol{z}}

\newcommand{\nuvec}{\boldsymbol{\nu}}
\newcommand{\phivec}{\boldsymbol{\phi}}

\newcommand{\Bin}{\mathrm{Bin}}
\newcommand{\Geo}{\mathrm{Geo}}
\newcommand{\TruncGeo}{\mathrm{TruncGeo}}
\newcommand{\Markov}{\mathrm{Markov}}

\newcommand{\Jeroen}[1]{\textcolor{olive}{Jeroen: #1}}
\newcommand{\sk}[1]{\textcolor{blue}{Sounak: #1}}

\hypersetup{ colorlinks=true, citecolor=blue, linkcolor=blue, filecolor=blue, urlcolor=blue }

\begin{document}

\title{A Dynamic Multiplexing Policy for a Quantum Repeater
}

\date{\today}

\author{Jeroen Grimbergen}
\email{j.grimbergen@tudelft.nl}\affiliation{QuTech, Department of EEMCS, Kavli Institute of Nanoscience, TU Delft, Lorentzweg 1, 2628 CJ Delft, The Netherlands}

\author{Sounak Kar}
\affiliation{QuTech, Department of EEMCS, Kavli Institute of Nanoscience, TU Delft, Lorentzweg 1, 2628 CJ Delft, The Netherlands}

\author{Michal van Hooft}
\affiliation{QuTech, TU Delft, Lorentzweg 1, 2628 CJ Delft, The Netherlands}

\author{Conor Bradley}
\affiliation{Delft Networks, 
Lorentzweg 1, 2628 CJ Delft, 
The Netherlands}

\author{Stephanie Wehner}
\affiliation{QuTech, Department of EEMCS, Kavli Institute of Nanoscience, TU Delft, Lorentzweg 1, 2628 CJ Delft, The Netherlands}

\begin{abstract}
We consider a multiplexed quantum repeater that distributes entanglement between two end nodes. Multiplexing is achieved through optical integration of many quantum chips. Each chip hosts an optically addressable communication qubit and a separate memory qubit. The communication qubit serves as an entanglement generation interface between different quantum chips, and the memory qubit can be used to store entanglement. 
The quantum chips on the repeater are interconnected using a reconfigurable router, which makes it possible to dynamically assign quantum chips for entanglement generation with either of the two end nodes in every end-to-end communication cycle. 
We propose a dynamic multiplexing policy in which after an entangled link has been established with one of the end nodes, all remaining quantum chips are assigned to the opposite end node. We compare this dynamic policy to a policy in which the assignment of quantum chips to end nodes is fixed. We consider a parameter regime where on average less than one entangled link is generated per end-to-end communication cycle, which is the relevant regime for near-term quantum networks. We show that in this regime, the dynamic multiplexing policy can lead to a significant improvement in fidelity over a fixed policy, while marginally improving the rate. Moreover, even though the dynamic multiplexing policy requires a deeper, and hence, more lossy, router than the fixed policy, it can still achieve higher secret key rates in the parameter regime studied. This makes dynamic multiplexing with a many-quantum-chip repeater especially relevant for the development of near-term quantum networks.
\end{abstract}

\maketitle

\section{Introduction}






Quantum networks \cite{kimble2008quantum,wehner2018quantum} may realize a range of applications including communication with information-theoretic security \cite{ekert1991quantum,renner2008security}, non-local coordination in distributed systems \cite{dasilva2025entanglement, gardiner2026learning}, and new tests of fundamental physics at the intersection of quantum mechanics and general relativity \cite{covey2025probing}. The key resource for these applications is quantum entanglement. Recent experiments have already demonstrated entanglement between quantum memories separated by more than $\SI{10}{\kilo\meter}$ of fibre \cite{yu2020entanglement, van2022entangling, knaut2024entanglement, stolk2024metropolitan, liu2024creation, luo2025entangling, liu2026long}. This was achieved using heralded entanglement generation protocols in which an entangled state is created conditional on the measurement of a photon. However, the success probability of these protocols decreases exponentially with the distance between nodes due to transmission loss in optical fibre.

These transmission losses can be overcome with quantum repeater chains in which one or more intermediate quantum repeater nodes are used to divide the distance between two end nodes into shorter segments \cite{briegel1998quantum}. Entanglement between the end nodes is created by first generating entanglement between neighbouring nodes in the chain, and then performing entanglement swaps \cite{zukowski1993event} at the repeater nodes. Since entanglement on each segment is still generated only probabilistically, quantum memories at the repeater nodes are necessary to store entanglement until it can be swapped.

A bottleneck to the realization of large-scale quantum repeater chains is that the rate at which entanglement is generated must be higher than the decoherence rate of entanglement stored in memory 
\cite{humphreys2018deterministic,zhang2024fast,van2022entangling,liu2026long}. 
The rate of entanglement generation can be improved by multiplexing the entanglement generation attempts over temporal, spectral, or spatial modes \cite{tian2017spatial,chang2019long, lago2021telecom,liu2021heralded,krutyanskiy2024multimode,li2025parallelized,ruskuc2025multiplexed, teller2025solid,cui2025metropolitan,canteri2025photon,zhu2026metropolitan}. 

We 
consider a multiplexed repeater as sketched in \cref{fig:system overview}a. Two idealized end nodes, labeled left and right, are connected by a repeater node. Multiplexing is achieved, as first proposed in Ref. \cite{lee2022quantum}, through optical integration of many quantum chips that each contain an optically addressable communication qubit and a separate memory qubit.
The quantum chip can be realized for example in color centers \cite{bradley2022robust, parker2024diamond, stas2022robust, hermans2022qubit, bartling2022entanglement, reiserer2016robust} or trapped ions \cite{negnevitsky2018repeated,drmota2023robust, huang2025realization}, and their optical integration involves a reconfigurable router, which can be built from Mach-Zehnder interferometer switches \cite{lee2022quantum}, or other switching technologies, using MEMS \cite{errando2019mems}, electro-optics \cite{hu2025integrated}, or piezoelectrics \cite{tian2024piezoelectric}. 
In contrast to other forms of multiplexing where the assignment of multiplexed modes to nodes has, to the best of our knowledge, always been considered fixed \cite{briegel1998quantum, collins2007multiplexed, razavi2009physical, razavi2009quantum,dhara2021subexponential, mantri2025comparing, haldar2025reducing,santra2019quantum, dhara2022multiplexed, kunzelmann2025multiplexed}, such a reconfigurable router provides the freedom to dynamically assign which quantum chips, i.e. which modes should attempt remote entanglement generation with which end node. It was shown in Ref. \cite{lee2022quantum} that in the limit of many quantum chips, a dynamic multiplexing policy yields similar rates of end-to-end entanglement generation to a fixed policy, and in work done in parallel to this paper, Ref. \cite{elsayed2026balancing}, one sees that dynamic multiplexing can also improve the fidelity of end-to-end entanglement generation. However, the proofs in \cite{elsayed2026balancing} rely essentially on strict asymmetry, and their numerical evaluations are also restricted only to this case. Furthermore, there was no consideration of the effect of imperfections in the repeater node, or losses in the optical router. 

Here we show that for a symmetrically placed repeater node, 
a certain dynamic assignment of quantum chips can significantly improve the fidelity of end-to-end entanglement over a fixed assignment, while marginally improving the rate, in the considered regime reflecting near-term hardware.
The dynamic policy considered in this work aims to minimize the time between a successful link generation to the left and right end nodes so that the time that a link has to be stored in memory before it can be swapped is reduced. 
It does so by assigning all quantum chips to the right once a link has been generated to the left, or vice versa. In Remark \ref{remark: oss interpretation} we highlight that the dynamic policy obtains its advantage over the fixed policy not only because it can match links faster, but also because it avoids links to the same end node to be queued up while waiting for links with the other end node to be generated.

This work is an analytical comparison of dynamic and fixed entanglement generation policies for a multiplexed quantum repeater. 
We quantify the improvement of the dynamic policy over the fixed policy in terms of the steady-state rate, fidelity, and secret key rate. We obtain exactly computable expressions for the steady-state rate and fidelity in a depolarizing noise model via a Markov chain analysis that uses the Ergodic Theorem for Markov-modulated Processes described in Lemma \ref{lemma: main replacing random variable by its expectation in ergodic theorem}.
Additionally, closed-form expressions based on an approximate Markov chain allow for efficient heuristic optimization of the rate-fidelity tradeoff for the secret key rate. 
The Markov chain analyses are validated using a discrete-event simulation in NetSquid \cite{coopmans2021netsquid}. Our main findings are summarized below.

\begin{itemize}
    \item 
    \textbf{Rate and fidelity:} Dynamic multiplexing can significantly improve the fidelity of end-to-end links generated by a quantum repeater in steady-state, especially for parameter values relevant to near-term quantum networks such as small entanglement generation probability and short memory coherence time, while also yielding a marginally higher rate. 
    This means that by implementing a dynamic multiplexing policy it is possible to achieve much better network performance than with fixed multiplexing, for the same quantum hardware. 
    \item \textbf{Secret key rate:} For an end-to-end distance on the order of $\SI{100}{\kilo\meter}$, remote entanglement generation probability per single attempt of $10^{-5}$, memory coherence time of $\SI{100}{\second}$, lossless optical switches, and other parameters as specified in \cref{fig:NILN_eta_skr_lowerbound}, we evaluate a heuristic for the secret key rate. It is found that with $32$ chips on the repeater node, dynamic multiplexing yields a $23.9$ times higher value for this heuristic than fixed multiplexing, while with $512$ chips the value is $1.65$ times higher. This means that implementing a dynamic multiplexing policy is especially relevant for near-term hardware where the number of chips may be limited.
    \item \textbf{Switching losses:} If an optical router implementation of many-quantum-chip multiplexing as in \cite{lee2022quantum} is used, then dynamic multiplexing requires a deeper optical router than fixed multiplexing (see Assumption \ref{as:policies}). At an efficiency of $90\%$ for each optical switch, which is in reach for MEMS \cite{errando2019mems}, and otherwise the same parameters as in the previous point, the heuristic estimate of the secret key rate for fixed multiplexing becomes zero up to $512$ chips, while dynamic multiplexing achieves positive secret key rates in that regime, showing that dynamic multiplexing can remain advantageous even though it incurs more switching losses. 
\end{itemize}

The outline of the rest of this paper is as follows. In \cref{section: related work} we discuss related work. The model assumptions as well as the definitions of the fixed and dynamic multiplexing policies are presented in \cref{section: system model}. The Markov chain performance analysis is done in \cref{section: performance analysis}. Numerical results for the near-term regime are presented in \cref{section: numerical evaluations} and discussed in \cref{section: discussion}. We end with an outlook in \cref{section: outlook}.

\begin{figure*}
    \centering
    \includegraphics[trim=0cm 7.5cm 0cm 0cm, clip, width=\linewidth]{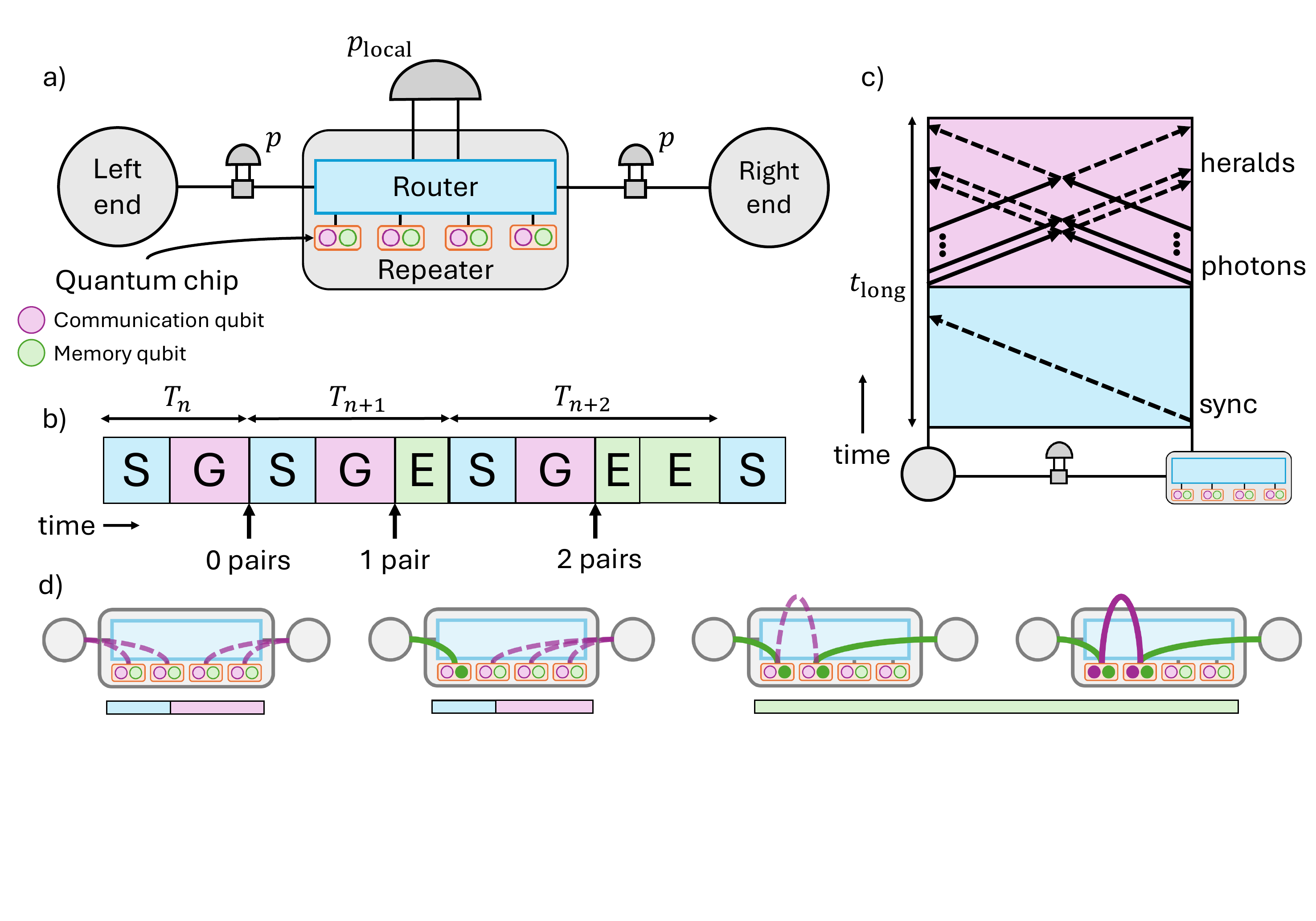}
    \caption{\textbf{Overview of many-quantum-chip repeater connecting two end nodes.} \textbf{a}) A many-quantum-chip repeater node connecting two end nodes. Quantum chips are depicted as rectangles with communication qubit (purple, left) and memory qubit (green, right). Quantum chips on the end nodes are not shown. A reconfigurable router can connect the quantum chips on the repeater node to remote entanglement generation interfaces between the repeater and either of the end nodes for which a single entanglement generation attempt has success probability $p$.
    It can also connect two chips on the repeater to a local entanglement generation interface where each attempt succeeds with probability $p_\mathrm{local}$. The local entangled links between chips on the repeater node mediate the entanglement swaps needed to create end-to-end links. \textbf{b}) Possible sequence of synchronization (S), entanglement generation (G) and entanglement swap (E) phases depending on how many pairs of links can be matched at the end of each generation phase. The total duration of time step $n$ is a random variable $T_n$ because the number of entanglement swaps and their durations are random variables. \textbf{c}) Classical communication (dashed arrows) and quantum communication (solid arrows) between repeater and left end node during synchronization and remote entanglement generation phases. The combined duration of these phases is $t_\mathrm{long}$. In the synchronization phase, the repeater node sends a synchronization message, including for example the entanglement-generation schedule for the upcoming generation phase. After receiving this message, the end node is allowed some time to process it. At the start of the entanglement generation phase, both nodes send their communication-qubit-entangled photons in a pipelined sequence to the heralding station, which measures the photons and returns the heralding messages to the nodes. \textbf{d}) Possible sequence of entanglement generation attempts under the dynamic multiplexing policy. When there are no links in memory, then during the Sync-Gen cycle half of the chips attempt to the left, and half attempt to the right, as indicated by dashed purple lines. If a link succeeds to the left, it gets stored in memory indicated by the solid green line, and in the next Sync-Gen cycle, all attempts go to the right. When a link to the right succeeds, it is matched to the link on the left. During the entanglement swap phase, local entanglement generation attempts are performed between the chips of the matched links until success or cutoff is reached. If the local link succeeds, then Bell state measurements on the respective chips project the link to the left, the local link, and the link to the right to a single end-to-end link. This last step is not shown in the figure.}
    
    \label{fig:system overview}
\end{figure*}

\section{Related Work}\label{section: related work}
As discussed in the introduction, the idea for dynamic entanglement generation between multiplexed nodes comes from Ref. \cite{lee2022quantum}. Recently, dynamic chip
assignment on such a multiplexed node was considered in Ref. \cite{elsayed2026balancing}. 
They consider a one-step mismatch minimization policy in the case that the repeater node is placed asymmetrically between the end nodes, and provide steady-state rate and fidelity \textit{bounds} under this. For the symmetric placement of the repeater node between the end nodes and in the near-term parameter regime we consider here, their one-step mismatch minimization policy reduces to our dynamic policy in Definition \ref{def: dynmux}. However, for symmetric placement of the repeater node, the lower bound on fidelity in \cite{elsayed2026balancing} applies to both the fixed and the dynamic policy, and therefore cannot be used to show the advantage of dynamic multiplexing in this case. Furthermore, the secret key rate and switching losses have not been considered in \cite{elsayed2026balancing}. Finally, the numerical evaluations in \cite{elsayed2026balancing} have been done under the assumption of entanglement generation probabilities on the order of $10^{-1}$, whereas we consider entanglement generation probabilities relevant to near-term quantum networks, which are on the order $10^{-6}$ to $10^{-5}$ \cite{knaut2024entanglement, stolk2024metropolitan, liu2024creation}.

Dynamic policies in multiplexed quantum networks have also been considered in the context of routing entanglement between users in a two-dimensional grid network \cite{van2025entanglement}. These policies dynamically selected which pairs of links to swap in order to create a path between users in a two-dimensional quantum network. We consider dynamic policies that assign resources of a single repeater node for entanglement generation in a one-dimensional multiplexed quantum repeater chain, which is a fundamentally different problem.

Except for Refs. \cite{lee2022quantum} and \cite{elsayed2026balancing}, previous analytical work on multiplexed quantum repeater chains has, to the best of our knowledge, exclusively analyzed fixed entanglement generation policies \cite{briegel1998quantum, collins2007multiplexed, sangouard2011quantum, razavi2009physical, razavi2009quantum,dhara2021subexponential, mantri2025comparing, haldar2025reducing,santra2019quantum, dhara2022multiplexed, kunzelmann2025multiplexed}. The focus of these investigations has been on policies for entanglement swapping, cutoffs, entanglement distillation, and classical communication. Other work that analyses fixed entanglement generation policies includes proposals for multiplexed midpoint sources \cite{dhara2022heralded, chen2023zero}, and multiplexing with asymmetric resource distributions between nodes \cite{munro2010quantum,propp2025quantum}.

There is a body of work that uses Markov chains for analyzing the rate and fidelity of quantum repeater chains \cite{shchukin2019waiting, inesta2023optimal, reiss2023deep, kunzelmann2025multiplexed, lee2022quantum, shchukin2022optimal, zang2024analytical}. 
There is even work that finds provably optimal policies for the delivery times of end-to-end links with some minimum fidelity using a formulation in terms of Markov Decision Processes \cite{inesta2023optimal}, or that heuristically optimizes policies for the end-to-end delivery time and fidelity \cite{haldar2024fast} or the secret key rate \cite{reiss2023deep} using reinforcement learning.
Other analytical tools used to model quantum repeater performance include elementary probability theory \cite{collins2007multiplexed, mantri2025comparing}, Fourier transforms \cite{brand2020efficient, li2021efficient}, and generating functions \cite{kar2023analysis, kamin2023exact, goodenough2025noise}. 

Our approximate Markov chain analysis is inspired by the analysis in \cite{collins2007multiplexed}, where the approximation was made that at most one entanglement generation attempt succeeds per time step. A similar approximate Markov chain has been used in \cite{kunzelmann2025multiplexed} to study the rate in a fixed multiplexing policy, but not the fidelity. 
To the best of our knowledge, an exact Markov chain analysis for the rate and fidelity of a multiplexed quantum repeater has not been considered in the literature.

\section{Assumptions on the Multiplexed Quantum Repeater}\label{section: system model}


In this section we provide the assumptions for the multiplexed quantum repeater depicted in \cref{fig:system overview}a and define the dynamic and fixed multiplexing policies that will be compared in this work. 
We perform a discrete-time analysis of the repeater; c.f. \cref{fig:system overview}b and c. The model parameters are collected in Tables \ref{tab:fundamental-parameters} and \ref{tab:derived-quantities} in Appendix \ref{app: parameters}. We give context to these parameters in the form of the numbers for color centers and trapped ions. 

\begin{enumerate}[label={\textbf{A\arabic*}},nolistsep,leftmargin=*]
\item \label{assumption:werner_state} 
\textbf{Many-quantum-chip multiplexing}

A quantum chip consists of a communication qubit and a memory qubit that support universal quantum logic and standard basis measurements. We consider a many-quantum-chip multiplexed repeater node that connects two end nodes labeled left and right, as depicted in \cref{fig:system overview}a.
There are $2m$ quantum chips on the repeater node. 
There are always sufficiently many quantum chips available on the end nodes to serve all entanglement generation requests from the repeater node.


\item \label{} \textbf{Spatial symmetry} 

We assume the system is left-right symmetric.

\item \label{}\textbf{Entanglement generation between quantum chips}

The communication qubits of different quantum chips can be entangled using a heralded entanglement generation protocol, for example with Fock-basis (``single-click'') \cite{cabrillo1999creation, hermans2023entangling} or time-bin (``double-click'') \cite{barrett2005efficient} encoding. A \emph{remote entanglement generation attempt} is a single heralded entanglement generation attempt between the communication qubits of a quantum chip on the repeater node and a quantum chip on one of the end nodes. It uses an entanglement generation interface located between the nodes, and succeeds with probability $p$. A \emph{local entanglement generation attempt} is a single heralded entanglement generation attempt between the communication qubits of two quantum chips on the repeater node. It uses an entanglement generation interface located at the repeater node, and succeeds with probability $p_\mathrm{local}$.

\item \label{ass:router} \textbf{Reconfigurable router} 

The quantum chips on the repeater node are connected to the inputs of a reconfigurable optical router, realized, for example, using Mach-Zehnder interferometer switches \cite{lee2022quantum}. We assume that the router is well-equipped to connect any quantum chip on the repeater node to either end node, and to connect any pair of quantum chips on the repeater node simultaneously to the local entanglement generation interface. Since there are $2m$ inputs (one for each quantum chip) and $4$ outputs (one for each end node and two for the local entanglement generation interface), not every connection can be made simultaneously. The router needs to be dynamically reconfigured to change the connection of the quantum chips to the output ports. The router has a depth $d(m)$, which means that there are at most $d(m)$ switches between a quantum chip and an output port. The photon loss in each switch is $\eta_\mathrm{switch}$. We assume that the end nodes require a router of the same depth as the repeater node to route photons from their quantum chips to the remote entanglement generation interfaces.

\item \label{as:timeStep} \textbf{Time steps} 

We assume that the repeater operates in discrete time steps, which consist of a synchronization phase, an entanglement generation phase, and an optional entanglement swap phase. A sequence of these three phases is depicted in \cref{fig:system overview}b, and each phase is described in detail below.

        \item[] \textit{Synchronization:} The repeater node synchronizes with the end nodes for the upcoming remote entanglement generation attempts. During the synchronization phase the repeater node shares an entanglement generation schedule with the end nodes detailing which quantum chips will attempt remote entanglement generation at what times in the upcoming entanglement generation phase, see \cref{fig:system overview}c. The end nodes receive this schedule and locally assign their own quantum chips for remote entanglement generation according to the schedule. Some time is also assumed to be allocated to the synchronization of the quantum control hardware. In a more advanced protocol one could imagine doing this synchronization only in some of the time steps.
        \item[] \textit{Entanglement generation:} The repeater node performs remote entanglement generation attempts with both of the end nodes simultaneously. In the entanglement generation phase the communication qubits of all quantum chips due for remote entanglement generation are initialized in the state required for the protocol. Subsequently, all entanglement generation attempts proceed in rapid succession, see \cref{fig:system overview}c. In a midpoint-heralded scheme, the communication qubits of the different quantum chips are sequentially optically excited, and the optical router is dynamically reconfigured to route the outgoing spin-entangled photons to the required heralding station. 
        When an entangled link is successfully heralded for some quantum chip on the repeater node it performs a \textsc{swap} gate to move its half of the entangled link from the communication qubit onto the memory qubit. The generation phase ends after the repeater node has received the heralding messages of all entanglement generation attempts, and all entangled links are stored in the memory qubits. It is possible that multiple entangled links are generated in this phase.

        The classical and quantum communication during the synchronization and entanglement generation phase is schematically depicted in \cref{fig:system overview}c. The combined duration of these two phases is $t_\mathrm{long}$. 
        
        \item[] \textit{Entanglement swap:} If at the end of the entanglement generation phase there are entangled links to both the left as well as the right end node, then the repeater node enters the entanglement swap phase. We assume that links to the left and links to the right form two separate first-in-first-out (FIFO) queues and that the repeater node forms a pair of links by matching the first link in the left queue to the first link in the right queue. An entanglement swap between a link from chip $i$ to the left end node and a link from chip $j$ to the right end node is mediated by a local entangled link between the communication qubits of chips $i$ and $j$, see \cref{fig:system overview}d. The repeater node attempts to establish this link through its local entanglement generation interface, and as soon as it succeeds, the entanglement swap can be completed through two deterministic Bell state measurements on the joint states of the communication and memory qubits of chips $i$ and $j$ respectively. 
        The outcomes of these measurements can be communicated to the end nodes during the next synchronization phase so that the end-to-end link is heralded. The repeater node performs swaps as soon as possible (asap), in the sense that it continues matching and swapping links from the two queues until one of the queues is empty. 

\item \label{} \textbf{Cutoff on local entanglement generation}

The use of a cutoff on the number of entanglement generation attempts can be used to ensure that decoherence of a memory qubit due to entanglement generation attempts on its communication qubit does not exceed a chosen threshold. We denote this cutoff by $c_\mathrm{local}$. With this cutoff, local entanglement generation durations follow independent and identically distributed (iid) truncated geometric distributions $\TruncGeo(p_\mathrm{local}, c_\mathrm{local})$, where $p_\mathrm{local}$ denotes the success probability of each trial and  $c_\mathrm{local}$ denotes the maximum number of trials. 
An entanglement swap succeeds if and only if local entanglement is generated before $c_\mathrm{local}$ trials. 
Clearly, the success probability is given by ${p_\mathrm{swap}:=1-(1-p_\mathrm{local})^{c_\mathrm{local}}}$.

\item \label{as:policies} \textbf{Entanglement generation policies}

We now formally introduce the fixed multiplexing (FxdMux) policy and the dynamic multiplexing (DynMux) policy considered in this work. These are policies for the remote entanglement generation phase. We say that a quantum chip is \emph{free} when it does not hold any entangled links. The FxdMux policy is then defined as follows.
\begin{definition}[FxdMux]\label{def: fxdmux}
    There is a fixed assignment of $m$ quantum chips to the left end node and $m$ quantum chips to the right end node across time steps. During the entanglement generation phase, all free quantum chips on the repeater node attempt entanglement generation with a quantum chip on their respective end node.
\end{definition}
Fixed multiplexing can be achieved with a router of depth $d^\mathrm{fxd}(m)=\log_2(m)$. The router is shown for $\log_2(m)=2$ in \cref{fig:routers}a. It consists of a binary tree of unit cells, where in each unit cell there is one layer of optical switches. The left switch can route the left two inputs to either of the left two outputs, and similarly for the right switch. With such a router, $m$ chips of the repeater can connect to the left end node, and the remaining $m$ chips can connect to the right end node. Moreover, any pair of a left and a right chip can be connected simultaneously to the local entanglement generation interface for an entanglement swap. 

\begin{figure}
    \centering
    \includegraphics[trim=2cm 14cm 17cm 7cm, clip, width=\linewidth]{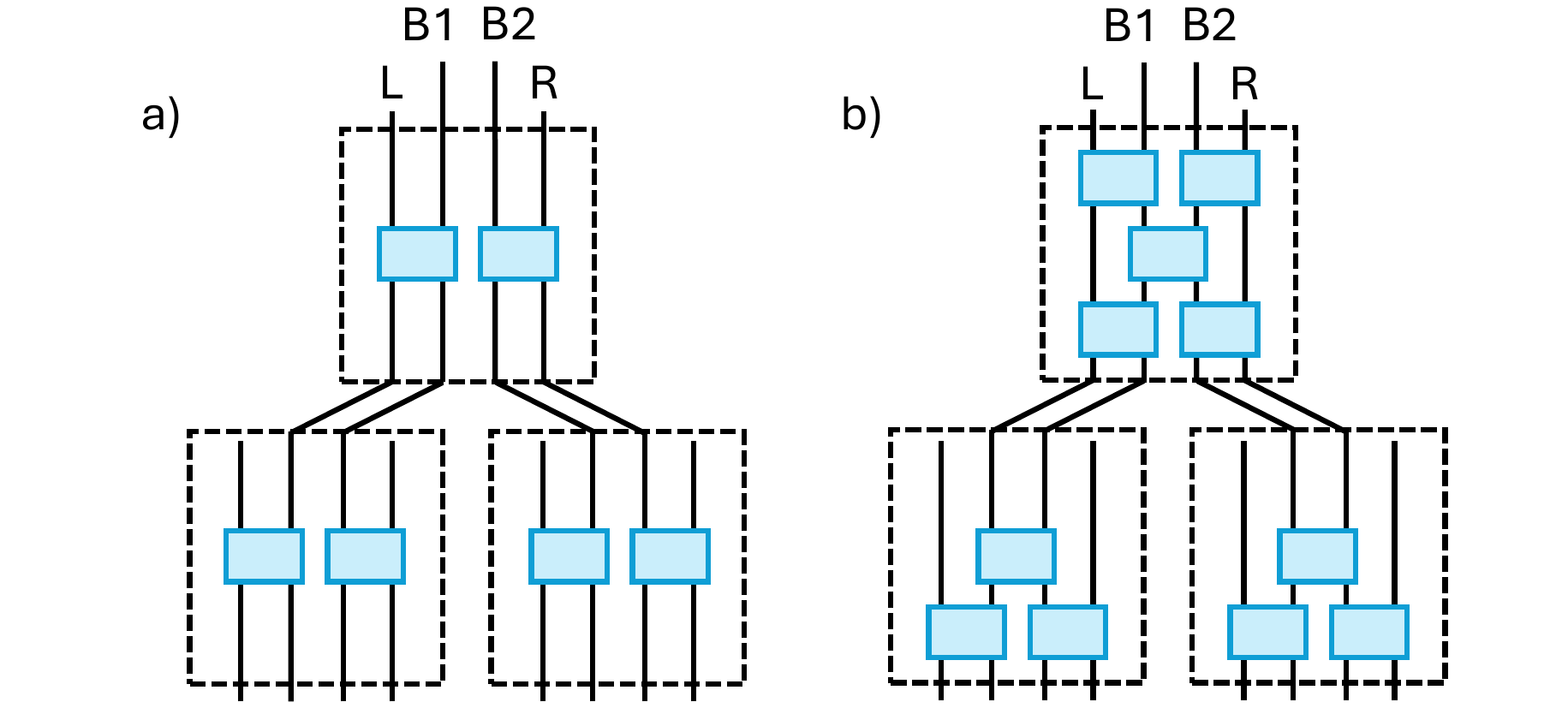}
    \caption{\textbf{Switching layouts for optical routers for a) FxdMux and b) DynMux.} Inputs of the router are at the bottom and outputs at the top. Outputs labeled L and R connect to left and right end nodes, respectively, and outputs B1 and B2 connect to the two arms of the local entanglement generation interface. Optical switches are represented by the blue rectangles. Unit cells indicated by the dashed rectangles fan out into a binary tree. At each stage, the center two outputs of a unit cell are connected to the two left most or two right most inputs of the unit cell in the stage above it. See also Assumption \ref{as:policies}.
    }
    \label{fig:routers}
\end{figure}

Recall that entanglement swaps are performed asap so that at the end of a time step, unmatched links remain between the repeater and only one of the two end nodes. The DynMux policy aims to reduce the time before a link in queue is matched. 
\begin{definition}[DynMux]\label{def: dynmux}
    If at the end of a time step an entangled link between the repeater node and one of the end nodes remains unmatched, then during the next entanglement generation phase all free quantum chips on the repeater node attempt entanglement generation with a quantum chip on the other end node. Otherwise, $m$ quantum chips attempt remote entanglement generation to the left and the rest to the right.
\end{definition}
Dynamic multiplexing can be achieved with a router of depth $d^\mathrm{dyn}(m)=2\log_2(m)+1$, as shown for $\log_2(m)=2$ in \cref{fig:routers}b. The unit cell at the root of the binary tree has depth three and can provide any permutation of its inputs onto its outputs. The unit cells in the fanout stages have depth two. They can connect any two inputs simultaneously to their central two outputs. Any input of the router can thus be mapped onto any output, so that any chip can entangle with either end node. Importantly, any pair of inputs can be simultaneously connected to the local entanglement generation interface, so that any pair of chips can establish a local link for an entanglement swap.
A possible sequence of entanglement generation attempts under DynMux is shown in \cref{fig:system overview}d.

\item \label{assumption: time scales} \textbf{Time scales}

Assumptions on the time scales associated with the different operations in the time step are as follows. 

    \item[] \textit{Classical communication:} Classical communication between the repeater node and the end nodes happens during the synchronization and the generation phase. We denote the combined duration of the synchronization and generation phase by $t_\mathrm{long}$. For an end-to-end separation on the order of a few hundred kilometers it takes on the order of milliseconds for a classical message to pass between a pair of nodes. 
    \item[] \textit{Qubit initialization:} The communication qubits of all quantum chips on the repeater node can be initialized in parallel for entanglement generation attempts. For color centers and trapped ions, initialization takes on the order of a few microseconds \cite{pompili2021realization, bradley2022robust, stephenson2020high}. For local entanglement generation attempts between quantum chips on the repeater node, the communication qubits have to be re-initialized after every attempt. The duration of a local attempt is denoted $t_\mathrm{local}$. 
    \item[] \textit{Fast switches:} We assume that the switches in the optical router can be rapidly reconfigured so that the time it takes to reconfigure the connections in the router is negligible compared to $t_\mathrm{long}$. This is possible with MEMS \cite{errando2019mems}, electro-optical \cite{hu2025integrated}, or piezoelectric \cite{tian2024piezoelectric} switches, which can achieve (sub-)microsecond reconfiguration times.
    \item[] \textit{Fast two-qubit operations:} We assume that the time of a two-qubit gate between the communication and memory qubit in a quantum chip is short compared to $t_\mathrm{long}$. This means there is no significant downtime of quantum chips when doing a \textsc{swap} gate or deterministic Bell state measurement. Two-qubit gate times in the order of a few tens of microseconds can be achieved in color centers when the nuclear spin of the defect is used as the memory qubit \cite{stas2022robust}. 
    \item[] \textit{Entanglement swap:} The expected duration of an entanglement swap is $t_\mathrm{swap}=\EE(N_\mathrm{local})t_\mathrm{local}$, where $N_\mathrm{local}\sim \TruncGeo(p_\mathrm{local},c_\mathrm{local})$ is the number of local entanglement generation attempts during an entanglement swap and we ignore the time to perform the deterministic Bell state measurement after generating the local link. The expected duration of an entanglement swap is at most on the order of a millisecond, since $t_\mathrm{swap} \leq c_\mathrm{local}t_\mathrm{local}$, we use cutoffs $c_\mathrm{local}\leq 10^3$ attempts (see Appendix \ref{app: parameters}), and $t_\mathrm{local}$ is on the order of a microsecond.

\item \label{} \textbf{Depolarizing noise}

The ideal target state for an end-to-end link produced by the repeater is a maximally entangled Bell state, which for concreteness we assume to be the state $\ket{\Psi^{-}}=(\ket{01}-\ket{10})/\sqrt{2}$. However, noise at several stages of the repeater chain protocol leads to an end-to-end state that is different from the ideal target. The main source of noise that we are concerned with here is time-dependent decoherence of entangled states stored in the memory qubits of the quantum chips. 
The dominant sources of noise are typically dephasing or amplitude damping noise, both of which increase exponentially over time \cite{azuma2023quantum}.
A tractable noise model that captures the exponential increase of the noise with time is the depolarizing noise model. 
A depolarizing channel $D_\lambda$ with depolarizing parameter $\lambda$ acts on a two-qubit state $\rho\in (\CC^{2})^{\otimes 2}$ as
\begin{equation}
    D_\lambda(\rho) = \lambda \rho + (1-\lambda) \II_4/4.
\end{equation}
We assume that the only noise acting on the qubits is depolarizing noise.
Any entangled link is then given by a Werner state 
\begin{equation}\label{eq: werner state}
    \rho_\omega=\omega\ketbra{\Psi^{-}} + (1-\omega)\II_4/4,
\end{equation}
where $\omega$ is the \textit{Werner parameter}. The Werner state form is preserved under depolarizing noise and entanglement swaps according to the following two multiplicative properties: 
\begin{itemize}
    \item Applying a depolarizing channel $D_\lambda$ with depolarization parameter $\lambda$ to a Werner state $\rho_\omega$ yields another Werner state, $D_\lambda(\rho_\omega)=\rho_{\lambda \omega}$.
    \item When an entanglement swap is performed between two Werner states $\rho_\omega$ and $\rho_{\omega'}$, the resulting state is the Werner state $\rho_{\omega\omega'}$.
\end{itemize} 
It was shown in Ref. \cite[Appendix C]{goodenough2025noise} that an analysis based on these two properties can in principle be extended to any Pauli noise. Furthermore, bounds on the accuracy of the Werner approximation for repeater chains have recently been found in \cite{davies2025accuracy}.

\item \label{} \textbf{Idle and active memory decoherence}

To model decoherence of a link stored in the memory qubit of a quantum chip we distinguish two cases. We say that a quantum chip is \emph{active} when the communication qubit is repeatedly being initialized and optically excited for local entanglement generation attempts. Otherwise the quantum chip is called \emph{idle}. The coherence time of the memory qubit is $t_\mathrm{coh-idle}$ when the quantum chip is idle. When the quantum chip is active the coherence time is expressed as $t_\mathrm{coh-active}=n_\mathrm{coh-active}t_\mathrm{local}$, where $n_\mathrm{coh-active}$ is the number of entanglement generation attempts that can be performed with the communication qubit before the state on the memory qubit has decayed by a factor $1/e$. 
The active coherence time is smaller than the idle coherence time because manipulating the communication qubit during entanglement generation attempts can introduce excess noise on the memory qubit \cite{bradley2022robust}. These coherence times lead to the following depolarization parameters for the different stages of the repeater chain protocol by connecting them to the relevant time scales:
\begin{itemize}
    \item $\lambda_\mathrm{long} := e^{-t_\mathrm{long}/t_\mathrm{coh\text{-}idle}}$ is the depolarization parameter for a link stored during a synchronization phase and entanglement generation phase when quantum chip is idle 
    \item $\lambda_\mathrm{local} := e^{-t_\mathrm{local}/t_\mathrm{coh\text{-}idle}}$ is the depolarization parameter for a link stored during one local generation cycle when quantum chip is idle.
    \item $\lambda_\mathrm{active} := e^{-1/n_\mathrm{coh\text{-}active}}$ is the depolarization parameter for a link stored during one local generation attempt when quantum chip is active.
\end{itemize}

We consider idle coherence times to be at least on the order of seconds (see Appendix \ref{app: parameters}). Since subsequent entanglement generation attempts can be performed at the optical switching time scale, which is submicrosecond (see \ref{assumption: time scales}), we ignore the effect of different storage times for links generated from different attempts in the pipelined sequence of all the attempts within an entanglement generation phase.

\item \label{} \textbf{Noise from imperfect operations:} Any end-to-end link that is produced suffers from imperfections in the heralded entanglement generation protocol and the subsequent gates and measurements, independent of how long the links before the entanglement swap had been stored in memory. 
All this static noise is combined into a single depolarization channel with parameter $\lambda_\mathrm{static}$. We set $\lambda_\mathrm{static}=1$ throughout since both policies suffer the same static noise.


\end{enumerate}

\section{Performance Analysis}\label{section: performance analysis}
Having introduced the model assumptions and policies, we now turn to the performance analysis of the multiplexed quantum repeater. We define the performance metrics in terms of long-term averages, 
and then show how these can be evaluated exactly by solving linear systems of size $O(m)$, where $m$ is half the number of chips on the repeater. We also provide an approximate Markov chain model for which the performance metrics can be given in closed form. These closed-form approximations to the performance metrics are close to the exact results in the near-term parameter regime considered for numerical evaluation in \cref{section: numerical evaluations}.

\subsection{Performance Metrics}\label{section: performance metrics}

The objective of the quantum repeater is to distribute highly entangled states at the best possible rate between the end nodes. The key metrics by which the performance of the repeater can be assessed are the rate at which it produces end-to-end states and the fidelity of the ensemble of end-to-end states to a maximally entangled Bell state. In the depolarizing noise model of \cref{section: system model}, the end-to-end states are fully characterized by their Werner parameter $\omega$ (see \cref{eq: werner state}). The fidelity $F$ of the end-to-end state to the desired Bell state is related to the Werner parameter as
\begin{equation}
    F = \frac{3\omega+1}{4}.
\end{equation}
We evaluate the long-term average rate and Werner parameter to capture the performance of the repeater when it delivers end-to-end states without interruption. 

To define the performance metrics, we first define the relevant quantities.
Here we denote and define quantities in a policy-agnostic way, which are further refined under FxdMux and DynMux policies in Appendix~\ref{app: full markov chain models}.

\textbf{The signed queue length process}:
Let $L_n, R_n\in \NN \cup \{0\}$ respectively denote the number of links between the repeater and the left end node and between the repeater and the right end node at the start of time step $n$. 
At the start of any time step there can only be entangled links connected to one of the end nodes because swaps are performed asap (see~\ref{as:timeStep}). This means that at the start of any time step there is a queue of entangled links between the repeater node and one of the end nodes. 
We define the \textit{signed queue length} $S_n:=R_n-L_n$. Under the FxdMux and DynMux policies defined in~\ref{as:policies}, the process $(S_n)_{n\geq 1}$ completely describes the evolution of the repeater since these policies only depend on which chips are free at the start of a time step. 
We further define the following quantities, which are crucial for defining the performance metrics.

\vspace{2pt}
\noindent
{\setlength{\tabcolsep}{6pt}
\begin{tabular}{ll}
$T_n$ & the duration of the $n$th time step \\
$U_n$ & the number of end-to-end links produced \\
      & in the $n$th time step \\
$V_n$ & the sum of the Werner parameters of the \\
      & end-to-end links produced in the $n$th time step \\
\end{tabular}}
We now formally define the performance metrics. 
%
%

\begin{definition}[Steady-state rate]\label{def: steady-state rate long time}
    The \emph{steady-state rate} $r$ is defined as
    \begin{equation}\label{eq: steady state rate definition}
        r := \lim_{N\to \infty} \frac{\sum_{n=1}^N U_n}{\sum_{n=1}^N T_n},
    \end{equation}
    when the limit on the RHS exists.
    We will show that the limit exists indeed and is almost surely a constant.
\end{definition}

In other words, the steady-state rate is the average number of end-to-end links produced per unit time if the repeater runs for a long time.

\begin{definition}[Steady-state Werner parameter]\label{def: steady-state werner long time}
    The \emph{steady-state Werner parameter} is defined as 
    \begin{equation}\label{eq:steadyStateWerner}
        w := \lim_{N\to \infty} \frac{\sum_{n=1}^N V_n}{\sum_{n=1}^N U_n},
    \end{equation}
    on $\{\sum_{n=1}^N U_n>0\}$.
    Like the steady-state rate, we will show that the limit on the RHS is well-defined and is almost surely a constant.
    Also, $\PP(\sum_{n=1}^N U_n>0) \to 1$ as $N \to \infty$.
\end{definition}

The physical interpretation of the steady-state Werner parameter is that if the repeater has been running for a long time, then any end-to-end link it generates appears to be a sample from an ensemble described by the Werner state $\rho_w$.



A metric that combines the rate and fidelity into a single number is the secret key rate, which is the rate at which two parties at the end nodes of the repeater chain can create bits of secret key through an entanglement-based quantum key distribution (QKD) protocol \cite{ekert1991quantum}. 
In Appendix \ref{app:skr}, we also show that the steady-state secret key rate $f_\mathrm{skr}$ can be expressed in terms of the steady-state rate $r$ and steady-state Werner parameter $w$ as 
\begin{equation}\label{eq: steady-state secret key rate main}
    f_\mathrm{skr} := \max\bigg(1-2h\bigg(\frac{1-w}{2}\bigg),0\bigg)r,
\end{equation}
where $h(x)=-x\log_2(x)-(1-x)\log_2(1-x)$ denotes the binary entropy function.
Note that the steady-state secret key rate only becomes positive for sufficiently large $w>w_\mathrm{QKD}\approx 0.780$. 

The secret key rate provides a meaningful tradeoff between steady-state rate and steady-state Werner parameter. In the multiplexed repeater
such a tradeoff is induced by the choice of the cutoff $c_\mathrm{local}$ on the maximum number of entanglement generation attempts that may be performed during an entanglement swap: a smaller cutoff improves the Werner parameter at the cost of a lower rate. The optimization of this tradeoff with respect to the secret key rate is considered in \cref{section: numerical evaluations}.



\subsection{Exact Analysis}\label{section: markov chains exact}
In this section we describe the evolution of the multiplexed quantum repeater.
Specifically, we show that the (signed) number of unmatched links on the repeater evolves as a Markov chain and completely describes the state of the repeater. 
This allows us to compute the steady-state rate and Werner parameter from Def.~\ref{def: steady-state rate long time} and Def.~\ref{def: steady-state werner long time}.
The details can be found in Appendices \ref{app: markov abstract} and \ref{app: full markov chain models}, where the former deals with the derivation in a policy-agnostic way and the latter provide specific details of the performance metrics under each policy.



We now provide an example to explain the evolution of the signed queue length process further.
We first consider the case $m=1$.
In this case, there is no difference between the FxdMux and DynMux policies. In both cases, the signed queue length process $(S_n)_{n\geq1}$ takes values in the state space
\begin{equation}
    \mathcal{S}=\{-1,0,1\}.
\end{equation}
Further, $(S_n)_{n\geq1} \sim \Markov(P)$, where the transition probability matrix $P=(P_{\ell,\ell'})_{\ell, \ell'\in \mathcal{S}}$ has elements
\begin{equation}
    P_{\ell,\ell'} = \PP\left(S_{n+1}=\ell'\mid S_n=\ell\right).
\end{equation}
The transition probabilities can be deduced from the different possible evolutions of the repeater over a single time step. For example, ${P_{0,1}=p(1-p)}$ is the probability of the event that starting from state $0$, where there are no links and hence there is one attempt to the left and one attempt to the right, only the attempt to the right succeeds. 
As another example, the probability $P_{0,0}=1-2p(1-p)$, which can be rewritten as $(1-p)^2+p^2$, describes the events that either no links are generated from the empty chain, or both a left and a right link are generated, which are then swapped, leaving no links on the repeater. 
The other probabilities follow by similar reasoning:
\begin{equation}
    P = 
    \begin{pmatrix}
        1-p & p & 0 \\
        p(1-p) & 1-2p(1-p) & p(1-p) \\
        0 & p & 1-p
    \end{pmatrix}.
\end{equation}

For $m>1$, the FxdMux and DynMux policies differ. The signed queue lengths $(S_n^\mathrm{fxd})_{n\geq1} \sim \Markov(P^\mathrm{fxd})$ and $(S_n^\mathrm{dyn})_{n\geq1} \sim \Markov(P^\mathrm{dyn})$ for the signed queue length under the FxdMux and DynMux policy, respectively, are given in Appendix \ref{app: full markov chain models}. They have finite state spaces of size $O(m)$, are irreducible, and have a stationary distribution. The stationary distribution can be computed exactly by solving a linear system with dimension equal to the size of the state space.

The steady-state rate and Werner parameter can be computed using the stationary distribution of the signed queue length. The essential lemma is an application of the ergodic theorem.


\begin{lemma}[Ergodic Theorem for Markov-modulated Processes]\label{lemma: main replacing random variable by its expectation in ergodic theorem}
    Let $(A_n)_{n\geq1}\sim \Markov(P)$ be an irreducible Markov chain on a finite state space $\mathcal{S'}$ that has a stationary distribution $\pi'$.
    Further, let $(B_n)_{n\geq1}$ be another process with $B_m|A_m\perp B_n|A_n$ for $m \ne n$ and $B_n|(A_n=a) \sim H_a$ for a known collection of discrete distributions $\{H_a\}_{a \in \mathcal{S}'}$.
    Then, $(B_n)_{n\geq1}$ has a limiting distribution and the corresponding expectation is given by
    \begin{equation}\label{eq:main YssMean}
    \mu_{B}= \sum_{a\in \mathcal{S}'}\pi_a' \EE(H_a),
    \end{equation}
     provided the expectation under $H_a$ satisfies $|\EE(H_a)|< \infty$ $\forall a \in \mathcal{S}'$.
    Further,
    \begin{equation}\label{eq: markov modulated mean}
    \lim_{N \to \infty}\frac{1}{N}\sum_{n=1}^N B_n \overset{\mathrm{a.s.}}{=}\mu_B.
    \end{equation}
\end{lemma}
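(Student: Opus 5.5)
Let $\ell$ be the period of the irreducible chain $(A_n)_{n\geq1}$. We split the statement into the almost-sure limit \eqref{eq: markov modulated mean} and the existence of a limiting distribution of $B_n$ with mean $\mu_B$. The main tool is the classical ergodic theorem for irreducible finite-state Markov chains: for any function $g$ on $\mathcal{S}'$, $\frac1N\sum_{n=1}^N g(A_n)\to\sum_a\pi'_a g(a)$ almost surely. On a finite state space the chain is positive recurrent and $\pi'$ is unique, so this applies.

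\textbf{Step 1: the a.s.\ limit.} We realize the process on a product space. Let $(\xi^{(a)}_k)_{k\geq1}$, $a\in\mathcal{S}'$, be independent iid sequences with $\xi^{(a)}_k\sim H_a$, independent of $(A_n)$. Set $B_n:=\xi^{(A_n)}_{K_n(A_n)}$, where $K_n(a)=\#\{j\le n: A_j=a\}$. This realization has the same joint law as $(A_n,B_n)$, since the $B_n$ are conditionally independent given the chain with marginals $H_{A_n}$. Almost-sure statements about the law therefore transfer. Regrouping the sum by state gives
\begin{equation*}
\frac1N\sum_{n=1}^N B_n=\sum_{a\in\mathcal{S}'}\frac{K_N(a)}{N}\cdot\frac{1}{K_N(a)}\sum_{k=1}^{K_N(a)}\xi^{(a)}_k .
\end{equation*}
The ergodic theorem with $g=\indicator_{\{a\}}$ gives $K_N(a)/N\to\pi'_a>0$ a.s., so $K_N(a)\to\infty$. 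Since $|\EE(H_a)|<\infty$, the strong law of large numbers gives that the inner average tends to $\EE(H_a)$ a.s. The set $\mathcal{S}'$ is finite, so the product of limits gives $\mu_B$ a.s.

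\textbf{Step 2: the limiting distribution.} The law of $B_n$ is the mixture $\sum_a\PP(A_n=a)H_a$. If $(A_n)$ is aperiodic, then $\PP(A_n=a)\to\pi'_a$, so $B_n$ converges in distribution to $\sum_a\pi'_aH_a$. This mixture has mean $\sum_a\pi'_a\EE(H_a)=\mu_B$.

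\textbf{Main obstacle: periodicity.} The delicate point is that the lemma assumes only irreducibility, so the chain may be periodic. For $m=1$ the chain is aperiodic because $P_{0,0}>0$, and I expect the same holds for the policy chains. For a periodic chain, $\PP(A_n=a)$ need not converge. I would handle this in one of two ways. The first is to interpret ``limiting distribution'' in the Cesàro sense, $\frac1N\sum_{n\le N}\PP(A_n=a)\to\pi'_a$, which holds for any irreducible finite chain and again yields the mixture with mean $\mu_B$. The second is to add an explicit aperiodicity hypothesis, which the chains of Appendix~\ref{app: full markov chain models} satisfy since they have self-loops at state $0$.

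Either way, the quantity actually used later is the almost-sure limit \eqref{eq: markov modulated mean}, and Step 1 proves it without any aperiodicity assumption.
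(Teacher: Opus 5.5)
Your argument is correct, but for the almost-sure limit it takes a different route from the paper.

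\textbf{The two routes.} The paper lifts to the pair process $C_n=(A_n,B_n)$, viewed as a Markov chain on the countable space $\mathcal{C}$ with kernel $\PP(A_{n+1}=a'\mid A_n=a)\,\PP_{H_{a'}}(b')$. It checks irreducibility and stationarity of $\rho_{(a,b)}=\pi'_a\PP_{H_a}(b)$, then applies the ergodic theorem for countable-state chains to $\phi(a,b)=b$. You instead realize the process with one iid stack per state. This reduces everything to the occupation-time limit $K_N(a)/N\to\pi'_a$ of the finite chain plus the strong law for each stack.

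\textbf{What each buys.} Your route needs only finite-state ergodic theory for indicator functions. It never uses that the $H_a$ are discrete, so it would also cover non-lattice durations or Werner parameters. The paper's route needs discreteness to place $C_n$ on a countable state space. In return, it packages the steady-state joint law of $(A,B)$ as the stationary distribution of a single chain. Both proofs read the hypothesis the same way: the $B_n$ are conditionally independent given the whole path, with $B_n\sim H_{A_n}$, which the paper encodes in its product kernel. Both also need integrability of each $H_a$.

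\textbf{Periodicity.} Your caveat is right and exposes a gap in the paper's own argument. The paper asserts that the marginal of $\rho$ is the limiting distribution of $B_n$. That requires aperiodicity, since the period of $(C_n)$ equals that of $(A_n)$. For the deterministic alternation $0,1,0,1,\ldots$ with $H_0=\delta_0$ and $H_1=\delta_1$, $B_n$ has no limiting distribution. The FxdMux and DynMux chains of Appendix~\ref{app: full markov chain models} are aperiodic, so your second fix (adding aperiodicity) is what the paper implicitly relies on. The almost-sure statement needs irreducibility alone in both proofs, and it is what Theorems~\ref{theorem: steady-state rate ergodic} and~\ref{theorem: main steady-state werner fut ergodic} actually use.
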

\begin{proof}[Sketch of proof]
We first recall the well-known fact that, for an irreducible Markov chain on a countable state space, a stationary distribution, if it exists, is unique. We then define $C_n := (A_n, B_n)$ for $n \in \NN$, and show that $(C_n)_{n\geq1}$ is an irreducible Markov chain on the countable state space ${\mathcal{C} :=\{(a,b): a \in \mathcal{S}',b\in \mathcal{H}_a\}}$, where $\mathcal{H}_a$ denotes the support of $H_a$, $a \in \mathcal{S}'$, and its transition probabilities are given by
    \begin{multline}
        \PP \big(C_{n+1} = (a',b') \mid C_{n} = (a,b)\big) = \\
        \PP \big(A_{n+1} = a' \mid A_{n} = a\big) \PP_{H_{a'}}(b'),
    \end{multline}
    where $\PP_{H_{a'}}(b')$ denotes the probability of the singleton $\{b'\}$ under distribution $H_{a'}$. 
    The stationary distribution of $(C_n)_{n\geq 1}$ is $\rho_{(a,b)} = \pi_a' \PP_{H_a}(b)$.
    The limiting distribution of $(B_n)_{n \geq 1}$ is given by the marginal of $\rho$, i.e., $\sum_{a \in \mathcal{S}'}\pi_a' \PP_{H_a}(b)$, which implies that the corresponding expectation $\mu_B$ exists and is given by $\mu_B=\sum_{a\in \mathcal{S}'}\pi'_a \EE(H_a)$ whenever $|\EE(H_a)|< \infty$ $\forall a \in \mathcal{S}'$.
    For the long-term average of $(B_n)_{n \geq 1}$, we define $\phi(a,b)=b$ on $\mathcal{C}$ and apply the ergodic theorem ~\cite[Example 6.2.4]{durrett2019probability} to conclude that 
    \begin{align*}
        \lim_{N\to \infty} \frac{1}{N}\sum_{n=1}^N B_n = \lim_{N\to \infty}\frac{1}{N} \sum_{n=1}^N \phi(A_n,B_n)
        \overset{\mathrm{a.s.}}{=}\mu_B.
    \end{align*}
    See Appendix \ref{app: markov abstract} for the details.
\end{proof}

We apply Lemma \ref{lemma: main replacing random variable by its expectation in ergodic theorem} to $(U_n)_{n\geq 1}$ and $(T_n)_{n\geq 1}$ to get an expression for the steady-state rate in terms of the stationary distribution of the signed queue length process $(S_n)_{n\geq 1}$.
\begin{theorem}\label{theorem: steady-state rate ergodic}
    Let $(S_n)_{n\geq1}\sim \Markov(P)$ be the Markov chain denoting the signed queue length of the multiplexed repeater with $2m$ quantum chips on state space $\mathcal{S}$.
    The steady-state rate is then given by
    \begin{equation}\label{eq: steady-state rate steady-state average}
        r \overset{\mathrm{a.s.}}{=} \frac{\mu_U}{\mu_T},
    \end{equation}
    where $\mu_U$ and $\mu_T$ are given by \cref{eq: definition mu U} and \cref{eq: definition mu T}, respectively.
\end{theorem}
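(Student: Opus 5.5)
The plan is to write the ratio in \cref{eq: steady state rate definition} as a ratio of two time averages,
\begin{equation*}
    \frac{\sum_{n=1}^N U_n}{\sum_{n=1}^N T_n} = \frac{\frac{1}{N}\sum_{n=1}^N U_n}{\frac{1}{N}\sum_{n=1}^N T_n},
\end{equation*}
and then apply Lemma \ref{lemma: main replacing random variable by its expectation in ergodic theorem} separately to the numerator and the denominator, with the modulating chain $(A_n)_{n\geq1}=(S_n)_{n\geq1}$. For each $N$ this identity holds pathwise.

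The first task is to check the hypotheses of the lemma for the pairs $(S_n,U_n)$ and $(S_n,T_n)$. Irreducibility, finiteness of $\mathcal{S}$ and existence of a stationary distribution $\pi$ are taken from the discussion preceding the lemma and Appendix \ref{app: full markov chain models}. I must then argue the following from the protocol in \ref{as:timeStep}.
\begin{itemize}
    \item Given $S_n=\ell$, the number of free chips and their assignment are determined by the policy, so $U_n$ and $T_n$ depend only on two things: the fresh Bernoulli outcomes of the generation phase, and the fresh iid $\TruncGeo(p_\mathrm{local},c_\mathrm{local})$ swap durations and outcomes.
    \item Their conditional laws $H_\ell$ are therefore fixed distributions indexed by $\ell$.
    \item The conditional independence $B_m|A_m\perp B_n|A_n$ holds because the randomness of distinct time steps is independent.
\end{itemize}

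A subtle point is that $U_n$ and $T_n$ are coupled to the transition $S_n\to S_{n+1}$, not just to $S_n$. If the lemma's hypothesis must be read strictly, I would instead modulate by the pair chain $(S_n,S_{n+1})$. That chain is still a finite irreducible Markov chain with stationary distribution $\pi_\ell P_{\ell,\ell'}$. I will follow whatever form \cref{eq: definition mu U} and \cref{eq: definition mu T} take, presumably $\mu_U=\sum_\ell \pi_\ell \EE(U_n\mid S_n=\ell)$ and similarly for $\mu_T$. Either way the lemma gives the required almost-sure limits. For integrability, $U_n\le m$, and
\begin{equation*}
    T_n \le t_\mathrm{long} + m\,c_\mathrm{local}t_\mathrm{local},
\end{equation*}
so all conditional expectations are finite. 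Strictly, $T_n$ is discrete here, as the lemma requires, because it is a lattice-valued variable built from truncated geometrics.

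With the hypotheses in place, the lemma gives $\frac1N\sum U_n\to\mu_U$ and $\frac1N\sum T_n\to\mu_T$ almost surely. Since $T_n\ge t_\mathrm{long}>0$, we get $\mu_T\ge t_\mathrm{long}>0$, so the ratio of limits is well defined and continuity of division yields $r=\mu_U/\mu_T$ almost surely. This also proves the claim in Definition \ref{def: steady-state rate long time} that the limit exists and is a.s. constant.

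I expect the main obstacle to be the careful justification of the modulation structure. The key step is showing that conditionally on the signed queue length (or on the transition), the pairs $(U_n,T_n)$ are independent across steps with a fixed law, so that Lemma \ref{lemma: main replacing random variable by its expectation in ergodic theorem} applies verbatim.
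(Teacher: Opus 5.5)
Your proposal is correct and follows the same route as the paper. You rewrite the ratio as a quotient of $1/N$-averages, apply the Markov-modulated ergodic lemma with $S_n$ as the modulating chain to $U_n$ and $T_n$ (boundedness gives finite $\mu_U,\mu_T$), and take the quotient of the almost-sure limits. Your two extra remarks are sound refinements that the paper's proof leaves implicit: that $\mu_T\ge t_\mathrm{long}>0$, and that $U_n,T_n$ are correlated with $S_{n+1}$, which you handle by modulating with the pair chain $(S_n,S_{n+1})$, whose stationary law $\pi_\ell P_{\ell,\ell'}$ gives back the same $\mu_U,\mu_T$.
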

\begin{proof}[Sketch of proof]
    Starting from the definition of the steady-state rate in \cref{eq: steady state rate definition}, we multiply both numerator and denominator by $1/N$ and then apply \cref{eq: markov modulated mean} of Lemma \ref{lemma: main replacing random variable by its expectation in ergodic theorem} to both.
    See Appendix \ref{app: markov abstract} for the details.
\end{proof}

The steady-state expectations $\mu_U$ and $\mu_T$ can be computed using \cref{eq:main YssMean} since the distributions of $U_n|(S_n=\ell)$ and $T_n|(S_n=\ell)$ are known. They are given, respectively, as (see also \cref{eq: number of links RV ell} and \cref{eq: duration time step RV ell})
\begin{equation}
    U_n|(S_n=\ell) \overset{d}{=} \sum\nolimits_{i=1}^{E'_{\ell}} \indicator(N_{\mathrm{local},i}'\leq c_\mathrm{local}),
\end{equation}
and
\begin{equation}
    T_n|(S_n=\ell) \overset{d}{=} t_\mathrm{long} + \sum\nolimits_{i=1}^{E'_{\ell}} N_{\mathrm{local},i} t_\mathrm{local},
\end{equation}
where the random variable $E_\ell'$ is the number of entanglement swaps in a time step starting from signed queue length $\ell$, $N_{\mathrm{local},i}'\overset{\mathrm{iid}}{\sim} \Geo(p_\mathrm{local})$ are the (hypothetical) numbers of local entanglement generation attempts until success during a single entanglement swap, and $N_{\mathrm{local},i}\overset{\mathrm{iid}}{\sim} \TruncGeo(p_\mathrm{local}, c_\mathrm{local})$ are the actual numbers of local entanglement generation attempts during a single entanglement swap. The distribution of $E_\ell'$ can be deduced by considering the number of links generated with the left and right end nodes in state $\ell$, as is done for FxdMux in \cref{eq: E fxdmux} and in \cref{eq: E dynmux} for DynMux.

For the steady-state Werner parameter we apply the ergodic theorem for Markov-modulated processes to the process $(V_n)_{n\geq 1}$ as well. However, the distribution for the sum of Werner parameters produced in time step $n$ cannot be derived straightforwardly by conditioning on $S_n$. Therefore, we first show in Lemma \ref{lemma: equivalence of W and W fut} that the long-term average of $(V_n)_{n\geq 1}$ is equal to the long-term average of $(V_n^\mathrm{fut})_{n\geq 1}$, where $V_n^\mathrm{fut}$ is the sum of the Werner parameters produced in future time steps from the new links generated in time step $n$. For a detailed definition of $V_n^\mathrm{fut}$ we refer to Appendix \ref{section: steady state werner parameter}. The steady-state Werner parameter can then be given in terms of $\mu_{V^\mathrm{fut}}$ and $\mu_U$, where $\mu_{V^\mathrm{fut}}$ denotes the steady-state expectation of $(V_n^\mathrm{fut})_{n\geq 1}$.

\begin{theorem}\label{theorem: main steady-state werner fut ergodic}
    Let $(S_n)_{n\geq1}\sim \Markov(P)$ be the Markov chain denoting the signed queue length of the multiplexed repeater with $2m$ quantum chips on state space $\mathcal{S}$, and let $\pi$ be its stationary distribution. Then,
    \begin{equation}\label{eq: steady-state Werner steady-state average}
        w \overset{\mathrm{a.s.}}{=} \frac{\mu_{V^{\mathrm{fut}}}}{\mu_U},
    \end{equation}  where $\mu_U$ and $\mu_{V^\mathrm{fut}}$ are given by \cref{eq: definition mu U} and \cref{eq: definintion mu V fut} respectively.
\end{theorem}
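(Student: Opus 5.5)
The plan is to follow the route announced before the statement: replace $V_n$ by $V_n^{\mathrm{fut}}$, and then apply Lemma \ref{lemma: main replacing random variable by its expectation in ergodic theorem} to numerator and denominator separately. First, write
\[
\frac{\sum_{n=1}^N V_n}{\sum_{n=1}^N U_n}=\frac{\frac1N\sum_{n=1}^N V_n}{\frac1N\sum_{n=1}^N U_n}.
\]
The denominator converges almost surely to $\mu_U$ by \cref{eq: markov modulated mean}. This is exactly as in the proof of Theorem \ref{theorem: steady-state rate ergodic}, using that $U_n|(S_n=\ell)$ is a bounded random variable (at most $m$ swaps per step) whose conditional law depends only on $\ell$. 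Irreducibility of $(S_n)$ and the fact that states with nonzero probability of a swap have positive stationary mass give $\mu_U>0$. Hence the ratio is a.s.\ well defined for large $N$, and $\PP(\sum_{n\le N}U_n>0)\to1$ follows from $\frac1N\sum U_n\to\mu_U>0$ a.s.

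For the numerator, I would invoke Lemma \ref{lemma: equivalence of W and W fut} to get $\frac1N\sum_{n\le N}V_n-\frac1N\sum_{n\le N}V_n^{\mathrm{fut}}\to0$ a.s. If that lemma needs justification here, the argument is a bookkeeping one. Every end-to-end link delivered in steps $1,\dots,N$ is counted in $V^{\mathrm{fut}}$ of the step in which its later-generated constituent link was created. The discrepancy therefore consists only of links pending at time $N$ (or generated before step $1$). Each such link contributes at most Werner parameter $1$, and there are at most $2m$ of them, so the difference is $O(m/N)\to0$ deterministically.

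It remains to show $\frac1N\sum_{n}V_n^{\mathrm{fut}}\to\mu_{V^{\mathrm{fut}}}$ a.s. This is the main obstacle. $V_n^{\mathrm{fut}}$ depends on future steps, namely how long the new links wait before being matched and the durations of the intervening swaps. It is therefore not conditionally independent across $n$ given $S_n$ alone, so Lemma \ref{lemma: main replacing random variable by its expectation in ergodic theorem} does not apply verbatim. I would first try to use the FIFO structure together with the fact that links generated in step $n$ join the queue behind existing ones. Their fate is then a measurable function of $S_n$, the generation outcome of step $n$, and the future trajectory. The conditional expectation $\EE(V_n^{\mathrm{fut}}\mid S_n=\ell)$ is thus a fixed function $g(\ell)$, which defines $\mu_{V^{\mathrm{fut}}}=\sum_\ell\pi_\ell g(\ell)$. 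To obtain the a.s.\ limit, I would enlarge the chain to $C_n=(S_n,\xi_n)$, where $\xi_n$ collects the iid randomness of step $n$ (remote generation outcomes and the local attempt counts). This is a positive recurrent Markov chain on a countable space, and $V_n^{\mathrm{fut}}=\Phi(C_n,C_{n+1},\dots)$ with $0\le V_n^{\mathrm{fut}}\le m$. The Birkhoff ergodic theorem for the stationary shift of an irreducible positive recurrent chain (\cite[Ch.~6]{durrett2019probability}), applied to the bounded function $\Phi$ on path space, gives $\frac1N\sum V_n^{\mathrm{fut}}\to\EE_\pi\Phi=\mu_{V^{\mathrm{fut}}}$ a.s. under stationarity, and hence from any initial state by irreducibility (coupling or regeneration at state $0$).

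Combining the three limits and taking the ratio yields $w\overset{\mathrm{a.s.}}{=}\mu_{V^{\mathrm{fut}}}/\mu_U$.
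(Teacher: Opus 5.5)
Your proof is correct, and at the decisive step it takes a different, more careful route than the paper. The paper's proof rewrites $w$ via Lemma~\ref{lemma: equivalence of W and W fut} and then applies Lemma~\ref{lemma: main replacing random variable by its expectation in ergodic theorem} directly to $(V_n^{\mathrm{fut}})$ and $(U_n)$, citing only boundedness of $V^{\mathrm{fut}'}_\ell$ and $U'_\ell$.

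As you observe, that lemma's proof needs $(S_n,B_n)$ to be a Markov chain in which $B_{n+1}$ is drawn afresh from $H_{S_{n+1}}$. $V_n^{\mathrm{fut}}$ violates this. It depends on the generation outcomes and swap durations of every later step until its $Z$-links are matched. So given $S_n$ it is still correlated with $S_{n+1},S_{n+2},\dots$ and with $V^{\mathrm{fut}}_{n+1},\dots$. (In milder form the same holds for $U_n$, which is correlated with $S_{n+1}$ given $S_n$.)

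Your fix has three parts: the enlargement $C_n=(S_n,\xi_n)$, whose stationary law is $\pi_\ell\,\PP(\xi_n=\xi)$; Birkhoff's theorem for the bounded, shift-homogeneous functional $V_n^{\mathrm{fut}}=\Phi(C_n,C_{n+1},\dots)$; and the identity $\EE_\pi\Phi=\sum_\ell\pi_\ell\EE(V^{\mathrm{fut}'}_\ell)=\mu_{V^{\mathrm{fut}}}$. Together these supply exactly the justification the paper omits. The same device, with $\Phi$ depending on $C_n$ alone, handles the denominator.

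The paper's route buys brevity and a single reusable lemma for $U$, $T$ and $V^{\mathrm{fut}}$. Yours buys an argument whose hypotheses are actually satisfied, at the cost of invoking ergodicity of the path-space shift. Passing from the stationary start to an arbitrary one is easiest via shift-invariance of the convergence event together with $\pi_\ell>0$ for all $\ell$.

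There are two small slips.
\begin{itemize}
\item In your bookkeeping sketch for Lemma~\ref{lemma: equivalence of W and W fut}, an end-to-end link from an $X$-pair is credited to $V^{\mathrm{fut}}$ of the step in which its \emph{earlier} constituent (the queued $Z$-link) was generated. You attribute it to the later constituent, but that link is created in the very step of the swap, so with your attribution $V_n^{\mathrm{fut}}$ would simply equal $V_n$.
\item Under DynMux, a step starting from $|S_n|=1$ can create up to $2m-2$ $Z$-links. The correct bound is therefore $0\le V_n^{\mathrm{fut}}\le 2m$, not $m$. Only boundedness is used, so nothing changes.
\end{itemize}
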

\begin{proof}[Sketch of proof]
    The proof follows by application of Lemma \ref{lemma: main replacing random variable by its expectation in ergodic theorem} similar to the proof of \cref{theorem: steady-state rate ergodic}.
    See Appendix \ref{app: markov abstract}.
\end{proof}
The distribution of $V_n^\mathrm{fut}|(S_n=\ell)$ can be expressed by conditioning on the new links generated in time step $n$. The conditional expectations are given in Lemma \ref{lemma: condition expectation V fut}. Evaluating these expectations for FxdMux requires solving the linear system of dimension $m$ shown in Lemma \ref{lemma: Lambda a b fxdmux}. For DynMux it requires solving the $2m-2$ linear systems of dimensions $1, \dots, 2m-2$ given in Lemma \ref{lemma: Lambda a b dynmux}. The steady-state rate and Werner parameter can thus be computed exactly by solving linear systems of dimension $O(m)$.

We also prove a similar convergence result for the estimated quantum bit error rate (QBER) in Thm.~\ref{lemma:qberN} when an entanglement-based version of the BB84 protocol is executed using the multiplexed repeater.
This forms the basis of the secret key rate heuristic mentioned in \cref{eq: steady-state secret key rate main}.
We refer the reader to Appendix~\ref{sec:steadyStateQBER} for further details.


\subsection{One-Simultaneous-Success Approximation}\label{section: markov chain oss}
The above method takes into account any possible number of links generated in a time step. However, in a near-term network we expect that the remote entanglement generation probability $p$ is on the order of $10^{-5}$ \cite{stolk2024metropolitan, van2022entangling}, so that, even with a few hundred chips, less than $1$ link is generated per time step on average.
This motivates the definition of the One-Simultaneous-Success (OSS) approximation 
where at most one entangled link is generated per time step.
More precisely, the OSS approximation is obtained by approximating the transition probability matrix in the exact analysis to first order in $p$. 

Within the OSS approximation, the approximate steady-state rate $\widetilde r$ and Werner parameter $\widetilde w$ can be obtained in closed form, listed in \cref{tab:OSS approximations rate and werner}. These closed-form approximations to the steady-state rate and Werner parameter provide an efficient heuristic method for optimizing the secret key rate over the cutoff on the number of local entanglement generation attempts during a swap (see \cref{section: numerical evaluations} and Appendix \ref{section: secret key rate optimization}), and they are used in the interpretation of the numerical evaluations of the exact analysis in the near-term regime in \cref{section: discussion}. We outline the application of the OSS approximation to our exact Markov chain analysis below, and leave the details to Appendix \ref{app: oss markov chain models}. 

Starting from the exact transition matrix $P$ for the signed queue length, we obtain an approximate transition matrix $\widetilde P$ by retaining only the terms up to first order in $p$. This implements the OSS approximation for the evolution of the signed queue length. For FxdMux it leads to an irreducible Markov chain with stationary distribution on the same state space as for the exact Markov chain,
but the transition matrix is tridiagonal (see \cref{eq: oss fxd tpm}). A closed-form solution for the stationary distribution is given in \cref{eq: steady-state oss fxdmux}.

On the other hand, for DynMux we restrict the state space in the OSS approximation to $\mathcal{\widetilde S}^{\mathrm{dyn}}=\{-1,0,1\}$ since these are the only recurrent states when approximating the exact transition probability matrix to first order in $p$.
In particular, queues with $2$ or more links cannot be reached from $\{-1,0,1\}$ under the OSS approximation for DynMux since this requires multiple links to be generated in a single time step.

Approximations to $\mu_U$ and $\mu_T$ can be found by approximating the distribution of the number of entanglement swaps in state $\ell$. In state $0$ the probability of an entanglement swap in the OSS approximation is zero, since it requires at least two links to be generated. In a state $\ell\neq 0$, an entanglement swap occurs in the OSS approximation if and only if the absolute queue length decreases by $1$. Therefore, for a state $\ell\neq 0$ the approximate number of entanglement swaps is given by a Bernoulli random variable with success probability $\widetilde P_{|\ell|,|\ell|-1}$. 

Similarly, $\mu_{V^\mathrm{fut}}$ can be approximated by approximating the expectation of the sum of future Werner parameters $V_n^\mathrm{fut}|(S_n=\ell)$ conditional on the signed queue length. 
In the exact analysis, this expectation is computed by conditioning on the new links generated from state $\ell$. In the OSS approximation, the distribution of new links generated again reduces to a Bernoulli random variable that can be deduced from the transition matrix since a new link is generated if and only if the queue length increases by $1$. In Appendix \ref{app: oss markov chain models} it is shown how the OSS approximations are applied to the FxdMux and DynMux policy to arrive at the closed-form approximations to the steady-state rate and Werner parameter in \cref{tab:OSS approximations rate and werner}. In the near-term parameter regime considered in \cref{section: numerical evaluations} these closed-form results provide close approximations to the exact results.

\begin{table*}
\begin{tabular}{lll}
\hline\hline
\parbox[c]{3cm}{\vspace{4pt}\textbf{Policy}\vspace{4pt}} & \parbox[c]{6cm}{\vspace{4pt}\textbf{Steady-state rate OSS $(\widetilde r)$}\vspace{4pt}} & \parbox[c]{8cm}{\vspace{4pt}\textbf{Steady-state Werner parameter OSS $(\widetilde w)$}\vspace{4pt}} \\
\hline
\parbox[c]{3cm}{\vspace{6pt}FxdMux $(\mathrm{fxd})$\vspace{6pt}} & \parbox[c]{6cm}{\vspace{6pt}$\left[\frac{1+2\sigma(m)}{2\sigma(m)}\frac{t_\mathrm{long}}{mp} + t_\mathrm{swap}\right]^{-1}p_\mathrm{swap}$\vspace{6pt}} & \parbox[c]{8cm}{\vspace{6pt}$\displaystyle\sum_{\ell=0}^{m-1}\frac{1}{\sigma(m)}\frac{(m-1)!}{(m-\ell-1)!m^\ell}\left(\lambda_\mathrm{local}^{(1)}\right)^\ell\EE\left(\lambda_\mathrm{long}^{\widetilde D^\mathrm{fxd}}\right)^{\ell+1}\frac{\lambda_0}{p_\mathrm{swap}}
    $, where $\widetilde D^\mathrm{fxd}\sim \Geo(mp)$\vspace{6pt}} \\[6pt]
\hline
\parbox[c]{3cm}{\vspace{6pt}DynMux $(\mathrm{dyn})$\vspace{6pt}} & \parbox[c]{6cm}{\vspace{6pt}$\left[\dfrac{4m-1}{2m}\dfrac{t_\mathrm{long}}{(2m-1)p}+t_\mathrm{swap}\right]^{-1}p_\mathrm{swap}$\vspace{6pt}} & \parbox[c]{8cm}{\vspace{6pt}$ \EE\left(\lambda_\mathrm{long}^{\widetilde D^\mathrm{dyn}}\right)\frac{\lambda_0}{p_\mathrm{swap}}$, where $\widetilde D^\mathrm{dyn}\sim \Geo((2m-1)p)$\vspace{6pt}} \\[6pt]
\hline\hline
\end{tabular}
\caption{\textbf{Steady-state rate and Werner parameter for the OSS approximations of DynMux and FxdMux.} Here, $\sigma(m)\!=\!\sum_{\ell=1}^{m} m!/((m\!-\!\ell)!m^\ell)$ is a combinatorial factor, $\lambda_\mathrm{local}^{(1)}\!=\!\EE\left(\lambda_\mathrm{local}^{N_\mathrm{local}}\right)$ with ${N_\mathrm{local}\sim \TruncGeo(p_\mathrm{local},c_\mathrm{local})}$ is the expected depolarization of a link stored in queue while another pair of links is performing an entanglement swap, and $\lambda_0=\lambda_\mathrm{static}\,\EE\left(\lambda_\mathrm{active}^{2N'_\mathrm{local}}\mathbbm{1}(N'_\mathrm{local} \!\le\! c_\text{local})\right)$ with $N'_\mathrm{local}\sim \Geo(p_\mathrm{local})$ is the maximum expected contribution of any pair of links to sum of Werner parameters, i.e. the contribution without depolarizing factors due to idling in memory. The other quantities have been introduced earlier and we refer to Appendix \ref{app: parameters} for their definitions. }
\label{tab:OSS approximations rate and werner}
\end{table*}

\section{Numerical Evaluations}\label{section: numerical evaluations}
We numerically evaluate the steady-state rate and steady-state Werner parameter in the near-term regime. This means that we consider parameter values that assume improvements over the values from experiments with color centers and trapped ions listed in \cref{tab:fundamental-parameters}. Since the static noise from imperfect operations is independent of the entanglement generation policy, we do not take it into account in our analysis by setting $\lambda_\mathrm{static}=1$ throughout.

Recall that the dynamic multiplexing policy directly changes the remote entanglement generation process compared to fixed multiplexing, while entanglement swapping through local entanglement generation is the same in both policies. To isolate the effect of dynamic multiplexing we therefore first consider ideal local entanglement generation, in the sense that $p_\mathrm{local}=1$ and $n_\mathrm{coh-active}=\infty$ so that the local link is generated deterministically and no additional time-dependent decoherence is incurred during local entanglement generation. The results can be interpreted using the closed-form approximations in \cref{tab:OSS approximations rate and werner}, because the OSS approximation yields very similar results to the exact analysis for the parameter regime under consideration. 

We then discuss the effect of imperfections in the local entanglement generation process. We obtain approximations to the maximum achievable secret key rate after optimization over the cutoff $c_\mathrm{local}$ on the number of local entanglement generation attempts during an entanglement swap. This analysis includes switching losses in the optical router.

\subsection{Ideal Local Entanglement Generation}
\begin{figure*}[htbp]
    \centering
    \begin{subfigure}[t]{0.48\textwidth}
        \centering
        \includegraphics[width=\textwidth]{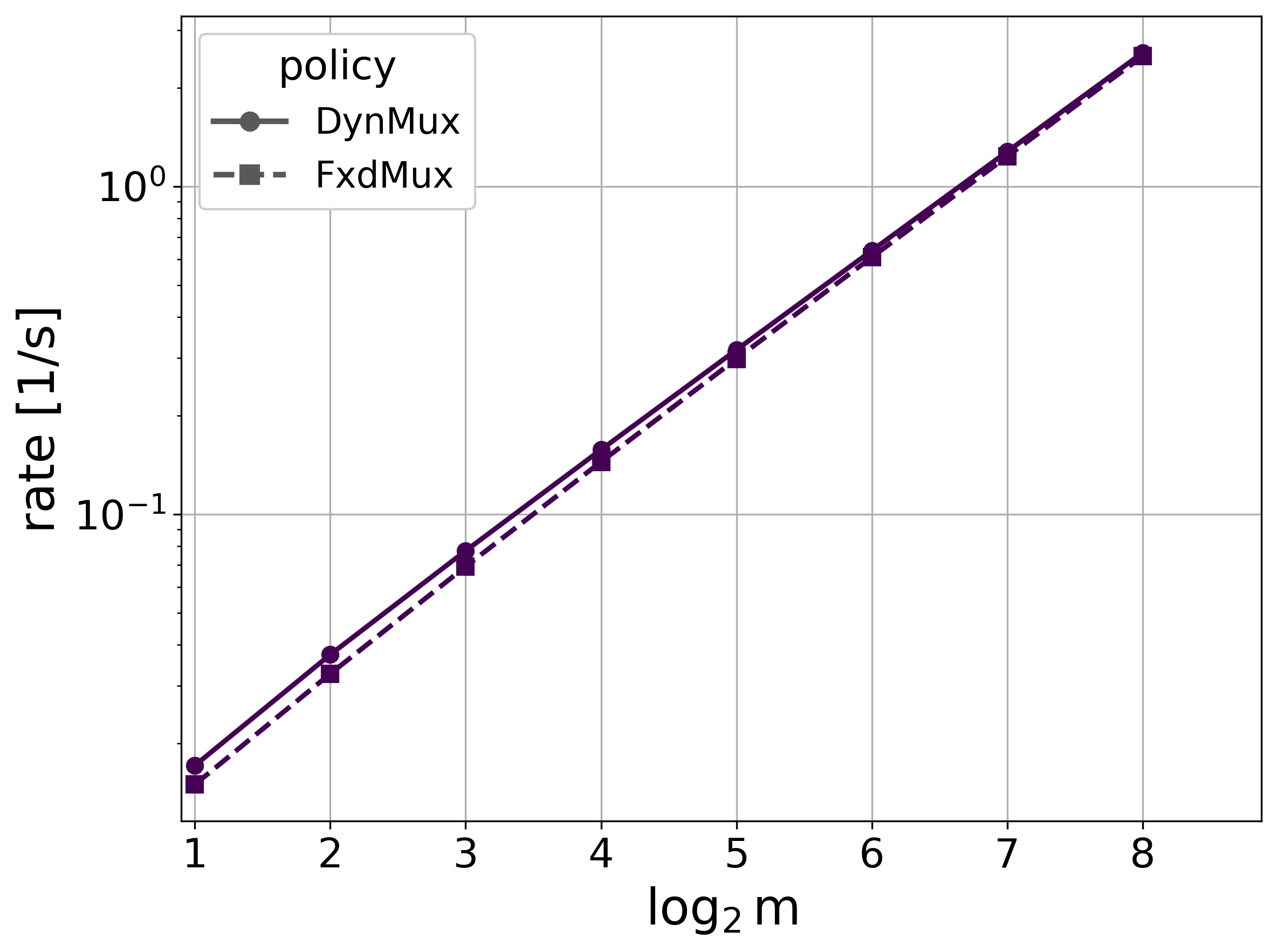}
        \caption{}
        \label{fig:IL2_rate}
    \end{subfigure}
    \hfill
    \begin{subfigure}[t]{0.48\textwidth}
        \centering
        \includegraphics[width=\textwidth]{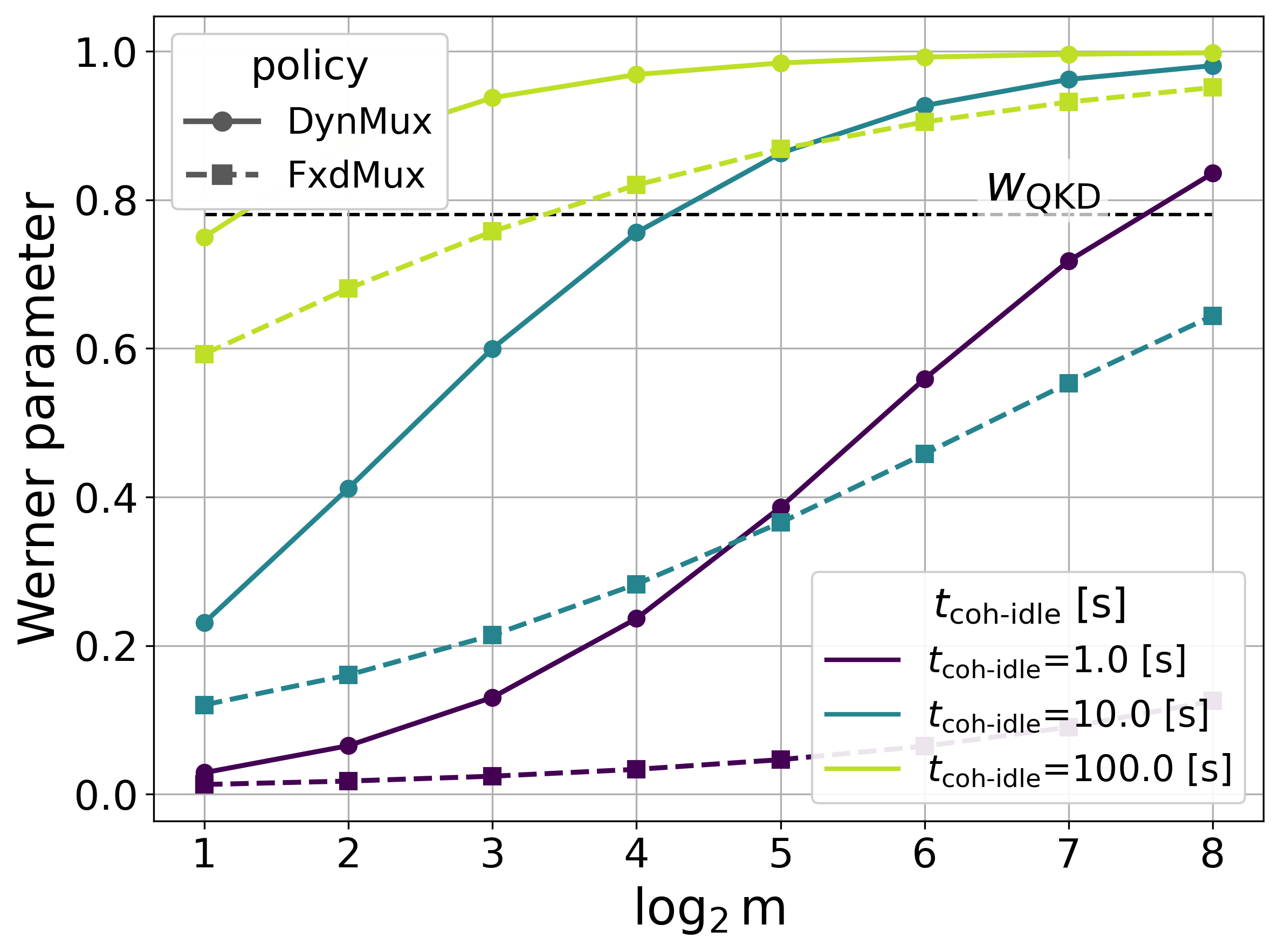}
        \caption{}
        \label{fig:IL2_werner}
    \end{subfigure}
    \caption{\textbf{Dynamic multiplexing can significantly improve the fidelity of end-to-end links generated in steady-state, while also yielding a marginally higher rate.} \textbf{(a)} The steady-state rate and \textbf{(b)} steady-state Werner parameter (which is a measure for the fidelity) for dynamic multiplexing (circles connected by solid line) are compared to those for fixed multiplexing (squares connected by dashed line) for increasing numbers of chips on the repeater node in the range $2m=4, \dots, 512$ ($\log_2(m)=1, \dots, 8$) and coherence times $t_\mathrm{coh-idle}=\SI{1}{\second},\SI{10}{\second},\SI{100}{\second}$. Other parameter values have been set to $p=10^{-5}$, $t_\mathrm{long}=\SI{1}{\milli\second}$, and $t_\mathrm{local}=\SI{1}{\micro\second}$. To isolate the effect of dynamic multiplexing on reducing decoherence due to links waiting in memory until they are swapped we have assumed ideal local entanglement generation by setting $p_\mathrm{local}=1$ and $n_\mathrm{coh-active}=\infty$, and no static noise by setting $\lambda_\mathrm{static}=1$. The threshold $w_\mathrm{QKD}\approx 0.780$ for positive steady-state secret key rate is indicated by the dashed horizontal line. Results shown are computed from the exact analysis in \cref{section: markov chains exact}, but the results from the OSS approximation completely overlap (see \cref{fig:IL2_validation_full_sim} in Appendix \ref{app: validation simulation}). }
    \label{fig:IL2_rate_werner}
\end{figure*}

In \cref{fig:IL2_rate_werner} the steady-state rate and Werner parameter under the DynMux and FxdMux policies are shown for remote entanglement generation probability $p=10^{-5}$ and $2m=4, \dots, 512$ ($\log_2(m)=1, \dots, 8$) 
chips on the repeater node. Idle coherence times ${t_\mathrm{coh-idle}\!=\!\SI{1}{\second}, \SI{10}{\second}, \SI{100}{\second}}$ are considered at a Sync-Gen duration of ${t_\mathrm{long}\!=\!\SI{1}{\milli\second}}$ and local entanglement generation attempt duration $t_\mathrm{local}=\SI{1}{\micro\second}$.
The time $t_\mathrm{long}$ corresponds to an end-to-end separation on the order of $\SI{100}{\kilo\meter}$, the values for $p$ are at the experimental state-of-the-art on the order of $ \SI{10}{\kilo\meter}$ separation \cite{knaut2024entanglement, stolk2024metropolitan, liu2024creation, liu2026long}, and idle coherence times on the order of a few seconds have been demonstrated for example in \cite{bradley2022robust, stas2022robust}. 
Local entanglement generation has been assumed to be ideal, i.e. we have set $p_\mathrm{local}=1$ and $n_\mathrm{coh-active}=\infty$, but as we discuss in \cref{section: non ideal local} the results are qualitatively similar when there are non-idealities in local entanglement generation. \cref{fig:IL2_rate_werner} only shows the exact analysis, but the OSS approximation completely overlaps with the exact analysis in this case (see \cref{fig:IL2_validation_full_sim} in Appendix \ref{app: validation simulation}).

\cref{fig:IL2_rate} shows that the rate for DynMux is marginally higher than the rate for FxdMux. The rates seemingly converge with increasing $m$, consistent with the findings of the prior study of dynamic multiplexing in~\cite{lee2022quantum}. 

On the other hand, \cref{fig:IL2_werner} shows that dynamic multiplexing significantly improves the steady-state Werner parameter compared to fixed multiplexing. Moreover, the improvement of dynamic multiplexing over fixed multiplexing is greatest for the lower end of the range of idle memory coherence times considered. For example, at $t_\mathrm{coh-idle}=\SI{10}{\second}$ and $\log_2m=8$, which corresponds to a total of $2m=512$ quantum chips on the repeater node, the Werner parameter for DynMux is close to $1$, while for FxdMux it is still around $0.6$. This means that, for these parameters, with $2m=512$ chips, DynMux has gotten rid of almost all the decoherence due to links waiting to be matched, while for FxdMux this decoherence still puts its Werner parameter below the QKD threshold $w_\mathrm{QKD}$. 

\subsection{Non-Ideal Local Entanglement Generation}\label{section: non ideal local}
The results of \cref{fig:IL2_rate_werner} assume ideal local entanglement generation, meaning $p_\mathrm{local}=1$ and $n_\mathrm{coh-active}=\infty$. The effects of local imperfections $p_\mathrm{local}<1$ and $n_\mathrm{coh-active}<\infty$ predominantly serve to rescale the rate and the Werner parameter. This can be seen in the OSS approximation from \cref{tab:OSS approximations rate and werner} where $p_\mathrm{local}$ and $n_\mathrm{coh-active}$ only impact the parameters $p_\mathrm{swap}$, $\lambda_0$, and $\lambda_\mathrm{local}^{(1)}$. The first two of these are overall scalings on the steady-state rate and Werner parameter, respectively, while contributions from the last are small when the idle coherence time is much larger than the time to perform a swap. Even when it is not, $\lambda_\mathrm{local}^{(1)}<1$ only further degrades the steady-state Werner parameter of FxdMux while not affecting DynMux. The comparison of dynamic multiplexing to fixed multiplexing shown in \cref{fig:IL2_rate_werner} for ideal local operations is therefore qualitatively similar with local imperfections. 

\begin{figure}
    \centering
    \includegraphics[width=\linewidth]{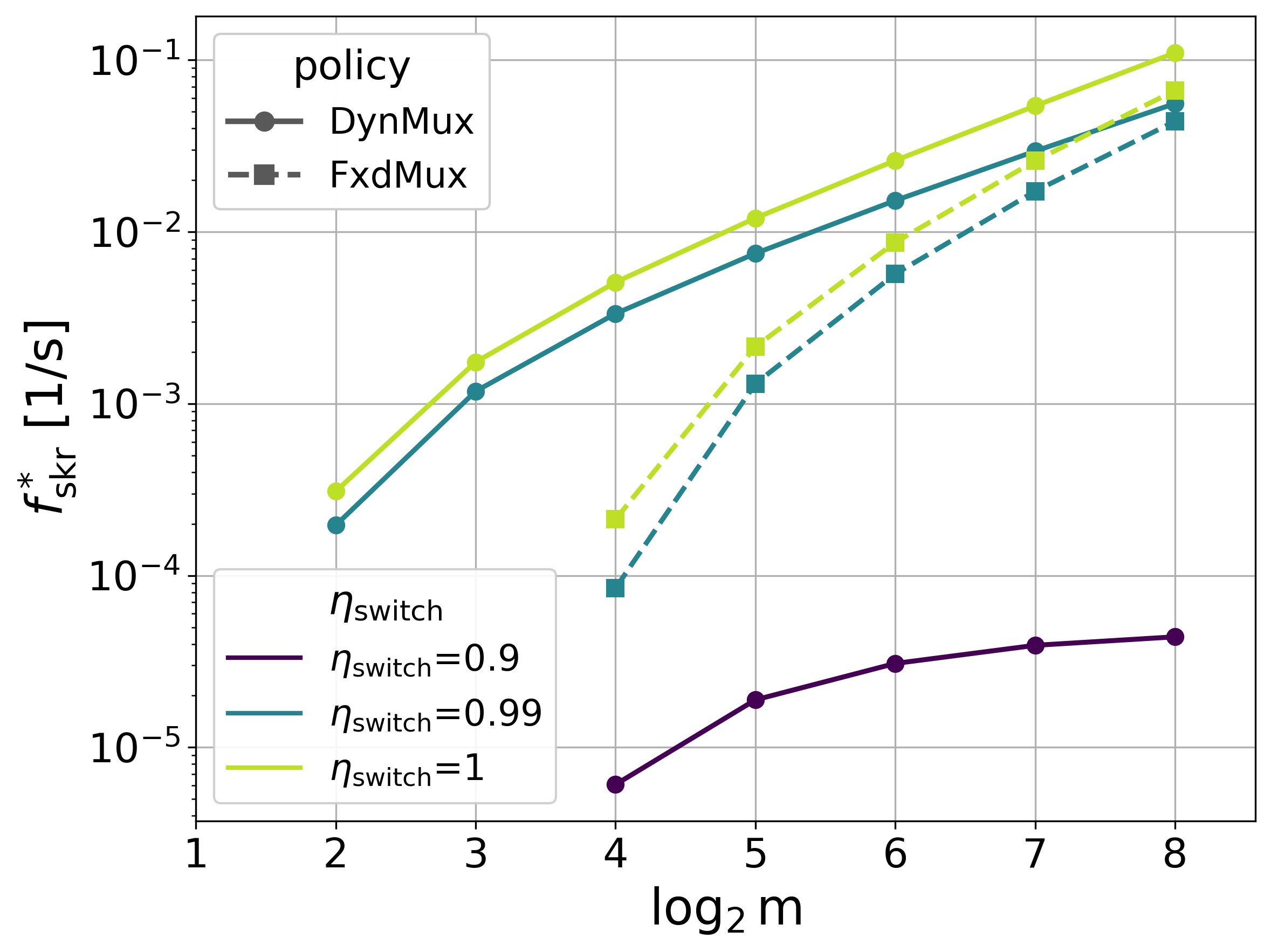}
    \caption{\textbf{Heuristic estimates to the maximum achievable secret key rate with non-ideal local entanglement generation and lossy switches.} The heuristic estimate $f^*_\mathrm{skr}$ of the maximum achievable secret key rate for DynMux and FxdMux is compared for increasing number of chips on the repeater node in the range $2m=4,\dots,512$ ($\log_2(m)=1,\dots,8$) and switching efficiencies $\eta_\mathrm{switch}\in \{0.9,0.99, 1\}$, at a remote entanglement generation probability $p=10^{-5}$, idle coherence of $t_\mathrm{coh-idle}=\SI{100}{\second}$, and Sync-Gen cycle duration $t_\mathrm{long}=\SI{1}{\milli\second}$. Local entanglement generation is non-ideal with parameters $p_\mathrm{local}=10^{-3}$, $n_\mathrm{coh-active}=1000$, and $t_\mathrm{local}=\SI{1}{\micro\second}$. Points where $f^*_\mathrm{skr}=0$ are not shown. In particular, there are no positive values for $f^*_\mathrm{skr}$ for FxdMux at $\eta_\mathrm{switch}=0.9$.}
    \label{fig:NILN_eta_skr_lowerbound}
\end{figure}

\subsection{Switching Losses}
The advantage of dynamic multiplexing shown in \cref{fig:IL2_rate_werner} does not take into account switching losses in the optical router. The effect of switching losses can be incorporated by rescaling the remote and local entanglement generation probabilities to
\begin{align}
     &p \to p(m):=p\,\eta_\mathrm{switch}^{2d(m)}, \label{eq:decrease remote enta prob}\\
     &p_\mathrm{local}\to p_\mathrm{local} (m):=p_\mathrm{local}\,\eta_\mathrm{switch}^{2d(m)}, \label{eq: decrease local enta prob}
\end{align}
where $d(m)$ is the depth of the router and $\eta_\mathrm{switch}\in [0,1]$ is the efficiency of an individual optical switch (\ref{ass:router}). The factors of two on the depth in the exponents of $\eta_\mathrm{switch}$ are due to the fact that in an entanglement generation attempt the photons from both chips have to pass through a router of depth $d(m)$. Recall from \ref{as:policies} that dynamic multiplexing requires a router of depth $d^\mathrm{dyn}(m)=2\log_2(m)+1$, while fixed multiplexing can be done with a shallower router of depth $d^\mathrm{fxd}(m) = \log_2(m)$. This means that the entanglement generation probabilities for dynamic multiplexing decrease faster with the number of chips than for fixed multiplexing.

To study the effect of this quantitatively, we consider the secret key rate heuristic introduced in \cref{eq: steady-state secret key rate main}. We consider the same near-term parameter regime as in \cref{fig:IL2_rate_werner}, but also take into account local entanglement generation imperfections $p_\mathrm{local}<1$ and $n_\mathrm{coh-active}<\infty$. However, with local imperfections there is a rate-fidelity tradeoff in the choice of local cutoff $c_\mathrm{local}$. Optimizing this tradeoff is a discrete optimization problem. Direct optimization from the exact analysis is computationally expensive. Instead, since the OSS approximation agrees closely with the exact analysis in the parameter regime considered, we approximate the optimal value of the cutoff as $\widetilde c_\mathrm{local}^*$, which is the optimal cutoff in the OSS approximation. Subsequently, the maximum achievable secret key rate is approximated as
\begin{equation}
    f_\mathrm{skr}^* := f_\mathrm{skr}(r(\widetilde c_\mathrm{local}^*), w(\widetilde c_\mathrm{local}^*)),
\end{equation}
which is a heuristic estimate of the secret key rate given the exact steady-state rate $r$ and Werner parameter $w$ for cutoff $\widetilde c_\mathrm{local}^*$. This optimization procedure is described in detail in Appendix \ref{section: secret key rate optimization}. Optimization by enumeration is possible for active coherence times in the regime $5\leq n_\mathrm{coh-active}\leq 1/p_\mathrm{local}$ because then Proposition \ref{proposition: range for n_coh_active to guarantee no key at n is c} implies that the steady-state Werner parameter is below the QKD threshold $w_\mathrm{QKD}$ for $c_\mathrm{local}>n_\mathrm{coh-active}$.

\cref{fig:NILN_eta_skr_lowerbound} shows the heuristic estimates for the maximum achievable secret key rates under dynamic and fixed multiplexing after optimizing over the cutoff $c_\mathrm{local}$ for the same parameter values $p=10^{-5}$, $t_\mathrm{long}=\SI{1}{\milli\second}$ as in \cref{fig:IL2_rate_werner}. We restrict to $t_\mathrm{coh-idle}=\SI{100}{\second}$ since for this coherence time both DynMux and FxdMux have Werner parameter above the SKR threshold for values of $\log_2(m)$ in the considered range. Additionally, there are local imperfections $p_\mathrm{local}=10^{-3}$ and $n_\mathrm{coh-active}=10^{3}$. These numbers are relevant to near-term color-center and trapped-ion hardware, as discussed in Appendix \ref{app: parameters}. We consider switching efficiencies in the range $\eta_\mathrm{switch}\in \{0.9,0.99,1\}$, where $\eta_\mathrm{switch}=1$ means lossless switches. A switching efficiency $\eta_\mathrm{switch}=0.9$ is within reach for MEMS switches \cite{errando2019mems}. For these parameters \cref{fig:NILN_eta_skr_lowerbound} shows that dynamic multiplexing yields higher estimates for the maximum achievable secret key rate than fixed multiplexing. 

With lossless switches, we see that for dynamic multiplexing the estimated maximum achievable secret key rate $f^{*,\mathrm{dyn}}_{\mathrm{skr}}$ is nonzero for $\log_2(m)\geq 2$, while with fixed multiplexing  $f^{*,\mathrm{fxd}}_\mathrm{skr}$ only becomes nonzero for $\log_2(m)\geq 4$. Moreover, the ratio $f_\mathrm{skr}^{*,\mathrm{dyn}}/f_\mathrm{skr}^{*,\mathrm{fxd}}$ of the estimates for DynMux and FxdMux is largest for smaller numbers of chips. It ranges from a factor $23.9$ at $\log_2(m)=4$ to a factor of $1.65$ at $\log_2(m)=8$. 
Similar results are seen for a lower switching efficiency of $\eta_\mathrm{switch}=0.99$, but the ratio $f_\mathrm{skr}^{*,\mathrm{dyn}}/f_\mathrm{skr}^{*,\mathrm{fxd}}$ decreases faster with $\log_2(m)$ than with perfect switches. At $\log_2(m)=4$ the ratio is $39.6$, while at $\log_2(m)=8$ it is $1.27$. 

Finally, for a switching efficiency of $\eta_\mathrm{switch}=0.9$, it is found that $f^{*,\mathrm{fxd}}_{\mathrm{skr}}$ is zero over the range of number of chips considered, while for dynamic multiplexing the estimated maximum achievable secret key rate is still positive for $\log_2(m)\geq 4$. This means that even though dynamic multiplexing requires an optical router that is twice as deep as for fixed multiplexing, its estimated maximum achievable secret key rate is still higher than for fixed multiplexing in the parameter regime considered.

\section{Discussion}\label{section: discussion}

In the near-term regime, the improvement of dynamic multiplexing over fixed multiplexing observed in \cref{section: numerical evaluations} can be explained informally using the OSS approximations $\widetilde w^\mathrm{dyn}$ and $\widetilde w^\mathrm{fxd}$ for the steady-state Werner parameter. In the following remark we do this for the case of perfect switches with $\eta_\mathrm{switch}=1$.
\begin{remark}\label{remark: oss interpretation}
    Recall from \cref{tab:OSS approximations rate and werner} that the steady-state Werner parameter under dynamic multiplexing is given by $
        \widetilde w^\mathrm{dyn} = \EE\left(\lambda_\mathrm{long}^{\widetilde D^\mathrm{dyn}}\right)\lambda_0/p_\mathrm{swap}$,
    where 
    \begin{equation}
        \widetilde D^\mathrm{dyn} \sim \Geo((2m-1)p)
    \end{equation}
    is the number of time steps until the link in queue is matched. After generating a link with one end node, all remote entanglement generation attempts are performed with the other end node. Therefore, in the OSS approximation for dynamic multiplexing, there is at any time at most one link in queue. On the other hand, the steady-state Werner parameter for fixed multiplexing is of the form
    \begin{equation}
        \widetilde w^\mathrm{fxd}=\sum_{\ell=1}^{m}\alpha_\ell\left(\EE\left(\lambda_\mathrm{long}^{\widetilde D^\mathrm{fxd}}\right)\right)^{\ell}\lambda_0/p_\mathrm{swap},
    \end{equation} 
    where the coefficients $\alpha_\ell$ can be found in \cref{tab:OSS approximations rate and werner}, and
    \begin{equation}
        \widetilde D^\mathrm{fxd} \sim \Geo(mp)
    \end{equation}
    is the number of time steps until the first link in queue is matched. 
    We thus identify two mechanisms by which dynamic multiplexing improves the steady-state Werner parameter compared to fixed multiplexing:
    \begin{itemize}
        \item Dynamic multiplexing reduces the time until matching the link that is currently first in queue, as $\widetilde D^\mathrm{dyn}$ stochastically dominates $\widetilde D^\mathrm{fxd}$.
        \item Dynamic multiplexing avoids queuing, as in the OSS approximation for dynamic multiplexing there are no contributions from queue sizes beyond $\ell=1$, while in fixed multiplexing there are contributions from queue sizes $\ell>1$. 
    \end{itemize}
\end{remark}

A caveat to the above remark is that in the presence of switching inefficiencies, it may happen that fixed multiplexing ends up matching links from queue faster than dynamic multiplexing. This is actually the case for $\eta_\mathrm{switch}=0.9$ and all values of $\log_2(m)$ in \cref{fig:NILN_eta_skr_lowerbound}. In that case, $(2m-1)p^\mathrm{dyn}(m)<mp^\mathrm{fxd}(m)$, where $p^\mathrm{dyn}(m)=p\eta_\mathrm{switch}^{2d^\mathrm{dyn}(m)}$ and $p^\mathrm{fxd}(m) = p\eta_\mathrm{switch}^{2d^\mathrm{fxd}(m)}$ are the remote entanglement generation probabilities rescaled by switching inefficiencies as in \cref{eq:decrease remote enta prob}.
Still, we see in \cref{fig:NILN_eta_skr_lowerbound} that only dynamic multiplexing yields a positive estimate for the maximum achievable secret key rate. This is because fixed multiplexing also has significant contributions from links further down the queue. Hence, even though it can match links from queue faster, the fact that some links have to wait for multiple links to be matched still causes the steady-state Werner parameter of fixed multiplexing to be below the QKD threshold $w_\mathrm{QKD}$.  

In the near-term regime, dynamic multiplexing thus improves over fixed multiplexing by matching links faster, as long as optical switching is sufficiently efficient, and, by avoiding the queueing process, which it does for any switching efficiency. 

\section{Outlook}\label{section: outlook}
As the number of chips on the repeater node increases, we have observed in \cref{fig:IL2_werner} that it is possible with many-quantum-chip multiplexing to counter the effect of decoherence due to links waiting idly in memory. 
However, even without memory decoherence, there are still the fidelity penalties due to imperfections in the heralded entanglement generation protocol and imperfections in gates, which have not been considered here. Future work may investigate how entanglement distillation can be used between many-quantum-chip multiplexed nodes to also get rid of these losses in fidelity.

Additionally, these losses in fidelity are highly dependent on the choice of quantum chip. In fact, in color centers in diamond, there is often a tradeoff between two-qubit gate speed and the active coherence properties of the memory qubit, depending on the type of memory qubit used \cite{stas2022robust, pompili2021realization,hermans2022qubit,bradley2022robust}. 
In this work we have made the assumption that gate performance, particularly speed, is preferred over coherence properties. This assumption was important to motivate our time slotted model where all operations happen sequentially so that entanglement generated during the entanglement generation phase could be used in the next entanglement swap phase. But if two-qubit gates times become longer than the communication times, then a strictly time slotted protocol in which the node has to wait for gates to finish may become inefficient. In future work, it may be investigated what the effect is of waiting for gates to finish, and how the potential negative effects of this can be mitigated by letting go of the strictly sequential operations assumed in this work.

Finally, we have presented here just a single dynamic policy for a single repeater node, with no guarantees on its optimality. It remains an open question what the optimal policy for quantum chip assignment is, especially for chains with more than a single repeater node. We hope that our present work can serve as a starting point for further analytical studies on quantum chip multiplexing for quantum networks and help guide its future development. 

\section*{Code availability}
The code for the numerical evaluations of the Markov chain models and the NetSquid simulation can be found in our GitLab repository \cite{grimbergen2026multiplexing}.

\section*{Data availability}
Data will be made available in the 4TU data repository \cite{grimbergen2026multiplexing_data}.

\bibliographystyle{apsrev4-2}
\bibliography{bib}

\section*{Acknowledgements}
J.G., S. K., and S.W. acknowledge support from NWO Vici Program under Grant VI.C.222.029.

\section*{Author Contributions}
J.G., C.B., and S.W. defined the multiplexed repeater model. J.G. and S.K. performed the analysis. M.v.H. wrote the NetSquid simulation. J.G. was the main writer of the manuscript. S.K. reviewed the manuscript extensively. S.W. provided active feedback at every stage of the project.

\newpage
\appendix
\newpage
\onecolumngrid
\section{Model Parameters}\label{app: parameters}
A summary of the model parameters introduced in \cref{section: system model} is provided in \cref{tab:fundamental-parameters} and \cref{tab:derived-quantities}. 
\cref{tab:fundamental-parameters} lists the elementary model parameters and provides an indication of their order of magnitude based on experiments with color centers and trapped ions. For the numerical evaluations in \cref{section: numerical evaluations} we consider parameter values corresponding to conceivable improvements on these experimental values. \cref{tab:derived-quantities} lists useful model parameters that can be derived from the elementary parameters. 

Here are a few observations on the elementary model parameters:
\begin{itemize}
    \item The value for the remote entanglement generation probability $p ~(\sim 10^{-6})$ comes from experiments with independently operated NV centers separated by $\sim\SI{10}{\kilo\meter}$ of deployed fibre \cite{stolk2024metropolitan}. The fidelity of the entanglement generated in this experiment was $F\sim 0.55$. However, Ref. \cite{stolk2024metropolitan} projects that with SnV centers fidelities $>0.9$ could be achieved. Such fidelities have already been achieved with trapped ions over $\sim \SI{10}{\kilo\meter}$ of spooled fibre \cite{liu2026long}. 
    \item The value of $t_\mathrm{long}~\sim \SI{1}{\milli\second}$ assumes an end-to-end separation on the order of $\sim \SI{100}{\kilo\meter}$. 
    \item The local entanglement generation probability $p_\text{local}~(\sim 10^{-4})$ is generally expected to be a few orders of magnitude higher than the remote entanglement generation probability because there is no need for quantum frequency conversion, and coupling into optical fibre can potentially be avoided all together.
    \item The active memory coherence depends strongly on the type of quantum chip. For color-center qubits, for example, long active coherence time typically also means slow two-qubit gates because both depend on the strength of the hyperfine coupling between the communication and memory qubit \cite{bradley2022robust}. The model of \cref{section: system model} assumes fast gates. In numerical evaluations, we therefore consider the lower end of the active coherence times listed in \cref{tab:fundamental-parameters}. In particular, we consider active coherence times up to $n_\mathrm{coh-active}=10^3$.
\end{itemize}

\begin{table}[ht]
\caption{\textbf{Elementary model parameters.} }
\begin{tabular}{lll}
\hline\hline
\parbox[c]{3cm}{\vspace{4pt}\textbf{Symbol}\vspace{4pt}} & \parbox[c]{8cm}{\vspace{4pt}\textbf{Description}\vspace{4pt}} & \parbox[c]{3cm}{\vspace{4pt}\textbf{Order of magnitude}\vspace{4pt}} \\
\hline
\parbox[c]{3cm}{\vspace{6pt}$2m$\vspace{6pt}} & \parbox[c]{8cm}{\vspace{6pt}Number of quantum chips on repeater node\vspace{6pt}} & \parbox[c]{3cm}{\vspace{6pt}-\vspace{6pt}} \\[6pt]
\hline
\parbox[c]{3cm}{\vspace{6pt}$p$\vspace{6pt}} & \parbox[c]{8cm}{\vspace{6pt}Success probability of each remote entanglement generation attempt\vspace{6pt}} & \parbox[c]{3cm}{\vspace{6pt}$\sim 10^{-6}$ \cite{stolk2024metropolitan}\vspace{6pt}} \\[6pt]
\hline
\parbox[c]{3cm}{\vspace{6pt}$t_\mathrm{long}$\vspace{6pt}} & \parbox[c]{8cm}{\vspace{6pt}Duration of each remote entanglement generation attempt, i.e. one Sync-Gen cycle, at end-to-end distance of $\sim \SI{100}{\kilo\meter}$ \vspace{6pt}} & \parbox[c]{3cm}{\vspace{6pt}$\sim \SI{1}{\milli\second}$\vspace{6pt}} \\[6pt]
\hline
\parbox[c]{3cm}{\vspace{6pt}$t_\mathrm{coh\text{-}idle}$\vspace{6pt}} & \parbox[c]{8cm}{\vspace{6pt}Memory coherence time when quantum chip is idle\vspace{6pt}} & \parbox[c]{3cm}{\vspace{6pt}$\sim \SI{1}{\second}$ \cite{stas2022robust, bradley2022robust}\vspace{6pt}} \\[6pt]
\hline
\parbox[c]{3cm}{\vspace{6pt}$p_\mathrm{local}$\vspace{6pt}} & \parbox[c]{8cm}{\vspace{6pt}Success probability of each local entanglement generation attempt\vspace{6pt}} & \parbox[c]{3cm}{\vspace{6pt}$\sim 10^{-4}$ \cite{stephenson2020high} \vspace{6pt}} \\[6pt]
\hline
\parbox[c]{3cm}{\vspace{6pt}$t_\mathrm{local}$\vspace{6pt}} & \parbox[c]{8cm}{\vspace{6pt}Duration of each local entanglement generation attempt\vspace{6pt}} & \parbox[c]{3cm}{\vspace{6pt}$\sim \SI{1}{\micro\second}$ \cite{stephenson2020high}\vspace{6pt}} \\[6pt]
\hline
\parbox[c]{3cm}{\vspace{6pt}$\eta_\mathrm{switch}$\vspace{6pt}} & \parbox[c]{8cm}{\vspace{6pt}Efficiency of a single optical switch \vspace{6pt}} & \parbox[c]{3cm}{\vspace{6pt}$\sim 0.9$ \cite{errando2019mems}\vspace{6pt}} \\[6pt]
\hline
\parbox[c]{3cm}{\vspace{6pt}$n_\mathrm{coh\text{-}active}$\vspace{6pt}} & \parbox[c]{8cm}{\vspace{6pt}Number of entanglement generation attempts on communication qubit before memory qubit decoheres by factor $1/e$\vspace{6pt}} & \parbox[c]{3cm}{\vspace{6pt}$\sim 10$ \cite{stas2022robust} \\
$\sim 10^3$ \cite{hermans2022qubit}
\\ $\sim 10^{5}$ \cite{bradley2022robust} \vspace{6pt}} \\[6pt]
\hline
\parbox[c]{3cm}{\vspace{6pt}$\lambda_\mathrm{static}$\vspace{6pt}} & \parbox[c]{8cm}{\vspace{6pt}Time-independent decoherence parameter for each \textit{end-to-end} link; includes initial link decoherence upon generation and gate errors\vspace{6pt}} & \parbox[c]{3cm}{\vspace{6pt}$\sim 0.5$ \cite{stolk2024metropolitan} \vspace{6pt}} \\[6pt]
\hline
\parbox[c]{3cm}{\vspace{6pt}$c_\mathrm{local}$\vspace{6pt}} & \parbox[c]{8cm}{\vspace{6pt}Cutoff to set maximum number of local entanglement generation attempts\vspace{6pt}} & \parbox[c]{3cm}{\vspace{6pt} $\leq n_\mathrm{coh-active}$ (see Appendix \ref{app:skr})\vspace{6pt}} \\[6pt]
\hline\hline
\end{tabular}
\label{tab:fundamental-parameters}
\end{table}

\begin{table}[ht]
\caption{\textbf{Derived quantities.}}
\begin{tabular}{lll}
\hline\hline
\parbox[c]{3cm}{\vspace{4pt}\textbf{Symbol}\vspace{4pt}} & \parbox[c]{5cm}{\vspace{4pt}\textbf{Definition}\vspace{4pt}} & \parbox[c]{8cm}{\vspace{4pt}\textbf{Description}\vspace{4pt}} \\
\hline
\parbox[c]{3cm}{\vspace{6pt}$p_\mathrm{swap}$\vspace{6pt}} & \parbox[c]{5cm}{\vspace{6pt}$\PP(N'_\mathrm{local}\leq c_\mathrm{local})$ with $N_\mathrm{local}'\sim \Geo(p_\mathrm{local})$\vspace{6pt}} & \parbox[c]{8cm}{\vspace{6pt}Success probability of entanglement swap for a given pair of remote links\vspace{6pt}} \\[6pt]
\hline
\parbox[c]{3cm}{\vspace{6pt}$t_\mathrm{swap}$\vspace{6pt}} & \parbox[c]{5cm}{\vspace{6pt}$\EE(N_\mathrm{local})t_\mathrm{local}$ with \\ $N_\mathrm{local}\sim \TruncGeo(p_\mathrm{local}, c_\mathrm{local})$\vspace{6pt}} & \parbox[c]{8cm}{\vspace{6pt}Expected duration of entanglement swap\vspace{6pt}} \\[6pt]
\hline
\parbox[c]{3cm}{\vspace{6pt}$\lambda_\mathrm{long}$\vspace{6pt}} & \parbox[c]{5cm}{\vspace{6pt}$e^{-t_\mathrm{long}/t_\mathrm{coh\text{-}idle}}$\vspace{6pt}} & \parbox[c]{8cm}{\vspace{6pt}Depolarization parameter for a link stored during one Sync-Gen cycle when quantum chip is idle\vspace{6pt}} \\[6pt]
\hline
\parbox[c]{3cm}{\vspace{6pt}$\lambda_\mathrm{local}$\vspace{6pt}} & \parbox[c]{5cm}{\vspace{6pt}$e^{-t_\mathrm{local}/t_\mathrm{coh\text{-}idle}}$\vspace{6pt}} & \parbox[c]{8cm}{\vspace{6pt}Depolarization parameter for a link stored during one local entanglement generation attempt when quantum chip is idle\vspace{6pt}} \\[6pt]
\hline
\parbox[c]{3cm}{\vspace{6pt}$\lambda_\mathrm{local}^{(k)}$\vspace{6pt}} & \parbox[c]{5cm}{\vspace{6pt}$\EE\left(\lambda_\mathrm{local}^{kN_\mathrm{local}}\right)$ with $N_\mathrm{local}\sim \TruncGeo(p_\mathrm{local},c_\mathrm{local})$\vspace{6pt}} & \parbox[c]{8cm}{\vspace{6pt} 
The expected depolarization in a single link (for $k=1$) or a pair of links (for $k=2$) while establishing a local link between a different pair of chips
\vspace{6pt}} \\[6pt]
\hline
\parbox[c]{3cm}{\vspace{6pt}$t_\mathrm{coh-active}$\vspace{6pt}} & \parbox[c]{5cm}{\vspace{6pt}$n_\mathrm{coh-active}t_\mathrm{local}$\vspace{6pt}} & \parbox[c]{8cm}{\vspace{6pt} 
Memory coherence time when chip is active
\vspace{6pt}} \\[6pt]
\hline
\parbox[c]{3cm}{\vspace{6pt}$\lambda_\mathrm{active}$\vspace{6pt}} & \parbox[c]{5cm}{\vspace{6pt}$e^{-1/n_\mathrm{coh\text{-}active}}$\vspace{6pt}} & \parbox[c]{8cm}{\vspace{6pt}Depolarization parameter for a link stored during one local generation entanglement attempt when quantum chip is active\vspace{6pt}} \\[6pt]
\hline
\parbox[c]{3cm}{\vspace{6pt}$\lambda_0$\vspace{6pt}} & \parbox[c]{5cm}{\vspace{6pt}$\lambda_\mathrm{static}\,\EE\left(\lambda_\mathrm{active}^{2N'_\mathrm{local}}\mathbbm{1}(N'_\mathrm{local} \!\le\! c_\text{local})\right)$ \\ with $N'_\mathrm{local}\sim \Geo(p_\mathrm{local})$\vspace{6pt}} & \parbox[c]{8cm}{\vspace{6pt}Maximum expected contribution of any pair of matched links to sum of Werner parameters, i.e. the contribution without depolarizing factors due to idling in memory\vspace{6pt}} \\[6pt]
\hline\hline
\end{tabular}
\label{tab:derived-quantities}
\end{table}
 
\begin{table}[h]
\caption{\textbf{Random variables.} The superscript $\mathrm{pol}\in \{\mathrm{fxd},\mathrm{dyn}\}$ on a symbol indicates the FxdMux or DynMux policy, respectively. It is omitted if the symbol is not used for a specific policy. Symbols for the OSS approximations are decorated with a tilde, e.g. $\widetilde S_n^\mathrm{dyn}$ is the signed queue length at the start of time step $n$ in the OSS approximation for the DynMux policy.}
\begin{tabular}{ll}
\hline\hline
\parbox[c]{3cm}{\vspace{4pt}\textbf{Symbol}\vspace{4pt}} & \parbox[c]{8cm}{\vspace{4pt}\textbf{Description}\vspace{4pt}} \\
\hline
\parbox[c]{3cm}{\vspace{6pt}$S_n^\mathrm{pol}$\vspace{6pt}} & \parbox[c]{8cm}{\vspace{6pt}Signed queue length at the start of time step $n$ \vspace{6pt}} \\[6pt]
\hline
\parbox[c]{3cm}{\vspace{6pt}$U_n^\mathrm{pol}$\vspace{6pt}} & \parbox[c]{8cm}{\vspace{6pt}Number of end-to-end links generated in time step $n$\vspace{6pt}} \\[6pt]
\hline
\parbox[c]{3cm}{\vspace{6pt}$T_n^\mathrm{pol}$\vspace{6pt}} & \parbox[c]{8cm}{\vspace{6pt}Duration of time step $n$\vspace{6pt}} \\[6pt]
\hline
\parbox[c]{3cm}{\vspace{6pt}$V_n^\mathrm{pol}$\vspace{6pt}} & \parbox[c]{8cm}{\vspace{6pt}Sum of Werner parameters of end-to-end links produced in time step $n$\vspace{6pt}} \\[6pt]
\hline
\parbox[c]{3cm}{\vspace{6pt}$V_n^\mathrm{fut,\mathrm{pol}}$\vspace{6pt}} & \parbox[c]{8cm}{\vspace{6pt}Sum of future Werner parameters of the new links produced in time step $n$\vspace{6pt}} \\[6pt]
\hline
\parbox[c]{3cm}{\vspace{6pt}$U_\ell^\mathrm{pol'}$\vspace{6pt}} & \parbox[c]{8cm}{\vspace{6pt}Number of end-to-end links generated in time step starting from state $\ell$\vspace{6pt}} \\[6pt]
\hline
\parbox[c]{3cm}{\vspace{6pt}$T_\ell^\mathrm{pol'}$\vspace{6pt}} & \parbox[c]{8cm}{\vspace{6pt}Duration of time step starting from state $\ell$\vspace{6pt}} \\[6pt]
\hline
\parbox[c]{3cm}{\vspace{6pt}$V_\ell^\mathrm{pol'}$\vspace{6pt}} & \parbox[c]{8cm}{\vspace{6pt}Sum of Werner parameters of end-to-end links produced in time step starting from state $\ell$\vspace{6pt}} \\[6pt]
\hline
\parbox[c]{3cm}{\vspace{6pt}$V_\ell^\mathrm{fut,\mathrm{pol'}}$\vspace{6pt}} & \parbox[c]{8cm}{\vspace{6pt}Sum of future Werner parameters of the new links produced in time step starting from state $\ell$\vspace{6pt}} \\[6pt]
\hline
\parbox[c]{3cm}{\vspace{6pt}$E_{\ell}^{\mathrm{pol'}}$\vspace{6pt}} & \parbox[c]{8cm}{\vspace{6pt} Number of entanglement swaps attempted in time step starting from state $\ell$\vspace{6pt}} \\[6pt]
\hline
\parbox[c]{3cm}{\vspace{6pt}$X^{\mathrm{pol'}}_{\ell}$\vspace{6pt}} & \parbox[c]{8cm}{\vspace{6pt}Number of new links matched with a link from the queue in time step starting from state $\ell$\vspace{6pt}} \\[6pt]
\hline
\parbox[c]{3cm}{\vspace{6pt}$Y^{\mathrm{pol'}}_{\ell}$\vspace{6pt}} & \parbox[c]{8cm}{\vspace{6pt}Number of pairs of new links matched in time step starting from state $\ell$\vspace{6pt}} \\[6pt]
\hline
\parbox[c]{3cm}{\vspace{6pt}$Z^{\mathrm{pol'}}_{\ell}$\vspace{6pt}} & \parbox[c]{8cm}{\vspace{6pt}Number of new links that remain unmatched in time step starting from state $\ell$\vspace{6pt}} \\[6pt]
\hline
\parbox[c]{3cm}{\vspace{6pt}$N'_\mathrm{local}$\vspace{6pt}} & \parbox[c]{8cm}{\vspace{6pt} (Hypothetical) number of local entanglement generation attempts until success during an entanglement swap, distributed as $\Geo(p_\mathrm{local})$ \vspace{6pt}} \\[6pt]
\hline
\parbox[c]{3cm}{\vspace{6pt}$N_\mathrm{local}$\vspace{6pt}} & \parbox[c]{8cm}{\vspace{6pt} Actual number of local entanglement generation attempts during an entanglement swap, distributed as $\TruncGeo(p_\mathrm{local}, c_\mathrm{local})$\vspace{6pt}} \\[6pt]
\hline\hline
\end{tabular}
\label{tab:random variables}
\end{table}


\section{Derivation of Performance Metrics as Long-term Averages}\label{app: markov abstract}

In \cref{section: performance metrics} we defined the steady-state rate and steady-state Werner parameter for the multiplexed quantum repeater in terms of time averages. 
In \cref{section: markov chains exact} we described the evolution of the signed queue lengths using discrete-time Markov chains, which completely described the state of the repeater.
A list of the main random variables used to model the multiplexed repeater is provided in \cref{tab:random variables}. 
In this appendix, we first prove Lemma~\ref{lemma: main replacing random variable by its expectation in ergodic theorem} about the long-term behavior of a generic Markov-modulated process $(B_n)_{n\geq1}\sim \Markov(P)$, 
which we use to compute the steady-state rate and Werner parameter.


\subsection{Ergodic Theorem for Markov-modulated Processes}

\begin{proof}[Proof of Lemma~\ref{lemma: main replacing random variable by its expectation in ergodic theorem}]
    We first recall the well-known fact that, for an irreducible Markov chain on a countable state space, a stationary distribution, if it exists, is unique.
    We now define $C_n := (A_n, B_n)$ for $n \in \NN$.
    Then,  $(C_n)_{n\geq1}$ is a Markov chain on the countable state space ${\mathcal{C} :=\{(a,b): a \in \mathcal{S}',b\in \mathcal{H}_a\}}$, where $\mathcal{H}_a$ denotes the support of $H_a$, $a \in \mathcal{S}'$, and its transition probabilities are given by
    \begin{equation}\label{eq: tpm product chain lemma}
        \PP \big(C_{n+1} = (a',b') \mid C_{n} = (a,b)\big) = \PP \big(A_{n+1} = a' \mid A_{n} = a\big) \PP_{H_{a'}}(b'),
    \end{equation}
    where $\PP_{H_{a'}}(b')$ denotes the probability of the singleton $\{b'\}$ under distribution $H_{a'}$.
    We show that $(C_n)_{n\geq1}$ is an irreducible Markov chain that has a stationary distribution $\rho$.
    We consider $(a,b), (a',b') \in \mathcal{C}$.
    Since $(A_n)_{n\geq1}$ is irreducible, $\exists~ k \in \NN$ such that ${\PP \big(A_{n+k} \!=\! a' \mid A_{n} \!=\! a\big)\!>\!0}$.
    Irreducibility of $(C_n)_{n\geq1}$ now follows by observing that 
    \begin{equation}\label{eq: tpm k product chain lemma}
        \PP \big(C_{n+k} = (a',b') \mid C_{n} = (a,b)\big) = \PP \big(A_{n+k} = a' \mid A_{n} = a\big) \PP_{H_{a'}}(b')>0,
    \end{equation}
    as $\PP_{H_{a'}}(b')>0$ by definition of $\mathcal{C}$.
    The distribution $\rho_{(a,b)} = \pi_a' \PP_{H_a}(b)$ is the stationary distribution of $(C_n)_{n\geq1}$, since
    \begin{align*}
        \sum_{(a,b) \in \mathcal{C}}\rho_{(a,b)} \PP \big(C_{n+1} = (a',b') \mid C_{n} = (a,b)\big) &= \sum_{(a,b)\in \mathcal{C}}\pi_a' \PP_{H_{a}}(b)\PP \big(A_{n+1} = a' \mid A_{n} = a\big) \PP_{H_{a'}}(b') \\
        &= \underbrace{\sum_{a \in \mathcal{S}'}\pi_a'\PP \big(A_{n+1} = a' \mid A_{n} = a\big)}_{\pi_{a'}'} \Big(\underbrace{\sum_{b \in \mathcal{H}_a}  \PP_{H_{a}}(b)}_{1}\Big)  \PP_{H_{a'}}(b') \\
        &= \rho_{(a',b')},
    \end{align*}
    and 
    \begin{align*}
        \sum_{(a,b) \in \mathcal{C}} \rho_{(a,b)} =
        \sum_{a \in \mathcal{S}'} \pi_a' \Big(\sum_{b \in \mathcal{H}_a} \PP_{H_{a}}(b)\Big)=1.
    \end{align*}
    The limiting distribution of $(B_n)_{n \geq 1}$ is given by the marginal of $\rho$, i.e., $\sum_{a \in \mathcal{S}'}\pi_a' \PP_{H_a}(b)$, which implies that the corresponding expectation $\mu_B$ exists and is given by $\mu_B=\sum_{a\in \mathcal{S}'}\pi'_a \EE(H_a)$ whenever $|\EE(H_a)|< \infty$ $\forall a \in \mathcal{S}'$.
    
    For the long-term average of $(B_n)_{n \geq 1}$, we define $\phi(a,b)=b$ on $\mathcal{C}$ 
    so that
    \begin{align*}
        \lim_{N\to \infty} \frac{1}{N}\sum_{n=1}^N B_n = \lim_{N\to \infty}\frac{1}{N} \sum_{n=1}^N \phi(A_n,B_n)
        \overset{\mathrm{a.s.}}{=} \sum_{(a,b) \in \mathcal{C}} \rho_{(a,b)} \phi(a,b)= \sum_{a \in \mathcal{S}'} \pi_{a}' \sum_{b \in \mathcal{H}_a} \PP_{H_{a}}(b)b = \sum_{a\in \mathcal{S}'}\pi_a' \mu_{H_a} = \mu_B.
    \end{align*}
    Here, the second equality follows by the ergodic theorem~\cite[Example 6.2.4]{durrett2019probability}, 
    which is possible due to the assumption that $|\mu_{H_a}| < \infty$ $\forall a \in \mathcal{S}'$.
    
\end{proof}
\begin{remark}
    For analyzing the performance metrics, we consider processes corresponding to the duration ($(T_n)_{n \ge 1}$), number of link generations ($(U_n)_{n \ge 1}$), and sum of Werner parameters of the links generated ($(V_n)_{n \ge 1}$) in a time step, where the underlying modulating process is the signed queue length $(S_n)_{n \ge 1}$. 
    The process $(S_n)_{n \ge 1}$ is a Markov chain defined on a finite state space and is irreducible.
    As the proof above shows, the modulated processes considered have a unique stationary distribution, which is their limiting distribution as well.
    We thus use these two terms interchangeably for our analysis below.
\end{remark}

\subsection{Steady-state Rate}
In this section and the next, we apply the ergodic lemma for Markov-modulated processes (Lemma \ref{lemma: main replacing random variable by its expectation in ergodic theorem}) to express the steady-state rate and steady-state Werner parameter of the multiplexed repeater, respectively, in terms of state averages.
For the rest of this section we let $\mathcal{S}$ denote the state space of the signed queue length, i.e., we adopt a policy-agnostic notation, as our analysis in this section applies to both FxdMux and DynMux policies.

Recall from Definition \ref{def: steady-state rate long time} that the steady-state rate is defined as
\begin{equation}
    r = \lim_{N\to \infty} \frac{\sum_{n=1}^N U_n}{\sum_{n=1}^N T_n},
\end{equation}
where $U_n$ is the number of end-to-end links produced in time step $n$ and $T_n$ denotes its duration. 
In \cref{theorem: steady-state rate ergodic} we use the ergodic theorem to compute the steady-state rate as the ratio of respective steady-state averages.
More precisely, we show that the steady-state rate is the ratio of the expectations of the random variables $U$ and $T$, where $U|S=\ell \sim U'_\ell$ and $T|S=\ell \sim T'_\ell$ with $U'_\ell$ and $T'_\ell$ respectively denoting the number of end-to-end links produced in a time step and the duration of a time step starting with signed queue length $\ell$.
Furthermore, $S$ here denotes the steady-state signed queue length.
We first illustrate the definitions of $U$ and $T$. 

Let $\ell\geq 0$ denote the signed queue length at the beginning of a time step,
where positive queue length implies that there is no link on the left segment.
Further, let $K_{L,\ell}'$ and $K_{R,\ell}'$ denote the number of remote links generated to the left and the right in this state, respectively, so that there are a total of $E'_{\ell}:=\min(K_{L,
\ell}', \ell + K_{R,\ell}')$ entanglement swaps to be attempted in the entanglement swap phase. 
On the other hand, starting from signed queue length $\ell<0$, we have $E'_{\ell}:=\min(-\ell+K_{L,\ell}', K_{R,\ell}')$. Denoting by $N'_\mathrm{local}$ the (hypothetical) number of local entanglement generation attempts until success during an entanglement swap, we have $N'_\mathrm{local}\sim\Geo(p_\mathrm{local})$ and each entanglement swap spans ${N_\mathrm{local}:=\min(N'_\mathrm{local},c_\mathrm{local})}$ local attempts. 
That is $N_\mathrm{local}\sim\TruncGeo(p_\mathrm{local},c_\mathrm{local})$. Then, the number of end-to-end links generated from state $\ell$ is distributed as
\begin{equation}\label{eq: number of links RV ell}
    U'_\ell \overset{d}{=} \sum\nolimits_{i=1}^{E'_{\ell}} \indicator(N_{\mathrm{local},i}'\leq c_\mathrm{local}),
\end{equation}
where $\indicator(\cdot)$ denotes the indicator function and $N'_{\mathrm{local},i}\overset{\mathrm{iid}}{\sim} \Geo(p_\mathrm{local})$. Taking expectations in \cref{eq: number of links RV ell}, 
by the iid property of the entanglement swap durations and their independence with number of swap attempts $E'_\ell$ we have that
\begin{equation}\label{eq: U to E}
    \EE(U'_\ell) = \sum_{k\geq 1}\PP\left(E_{\ell}'=k\right)k \, \underbrace{\EE\left(\indicator(N'_\mathrm{local}\leq c_\mathrm{local})\right)}_{p_\mathrm{swap}} = \EE(E'_{\ell}) \,p_\mathrm{swap}. 
\end{equation}
Also, the duration of the time step starting from state $\ell$ is distributed as
\begin{equation}\label{eq: duration time step RV ell}
    T'_\ell \overset{d}{=} t_\mathrm{long} + \sum\nolimits_{i=1}^{E'_{\ell}} N_{\mathrm{local},i}\, t_\mathrm{local},
\end{equation}
where $N_{\mathrm{local},i}\overset{\mathrm{iid}}{\sim} \TruncGeo(p_\mathrm{local},c_\mathrm{local})$, $t_\mathrm{long}$ is the combined duration of the synchronization and remote generation cycles that are present in every time step, and $N_{\mathrm{local},i}\, t_\mathrm{local}$ is the duration of the $i$th entanglement swap.
The expected duration of the time step starting with signed queue length $\ell$ can be expressed in terms of the expected number of end-to-end links produced in that step. 
Taking expectations in \cref{eq: duration time step RV ell}  and using the iid property of the entanglement swaps and their independence with $E'_\ell$, it follows that
\begin{equation}
    \EE(T_\ell') = t_\mathrm{long} + \EE(E'_{\ell}) \EE(N_{\mathrm{local}}) t_\mathrm{local}.
\end{equation}
By combining the above equation with \cref{eq: U to E} and setting $t_\mathrm{swap}:=\EE(N_{\mathrm{local}}) t_\mathrm{local}$ it follows that
\begin{equation}\label{eq: expected time step from U}
    \EE(T'_\ell) = t_\mathrm{long} + \EE(U'_\ell)\frac{t_\mathrm{swap}}{p_\mathrm{swap}}.
\end{equation}

We now provide a computable expression for the steady-state rate using Lemma~\ref{lemma: main replacing random variable by its expectation in ergodic theorem} in terms of the steady-state expectations of the number of end-to-end link generations and the duration of a time step, defined as below:
\begin{align}
    \mu_U &= \sum_{\ell\in \mathcal{S}}\pi_\ell \EE(U'_\ell) \quad  \label{eq: definition mu U} \\
    \text{and} \quad \mu_T &= \sum_{\ell\in \mathcal{S}}\pi_\ell \EE(T'_\ell), \label{eq: definition mu T}
\end{align}
where $\pi$ denotes the steady-state distribution of the signed queue length process $(S_n)_{n \ge 1}$.
\begin{remark}\label{rem:finiteUT}
    Since $U'_\ell \le m$, it is clear from \cref{eq: U to E} and \cref{eq: expected time step from U} that both $\mu_U$ and $\mu_T$ are finite.    
\end{remark}

We now prove \cref{theorem: steady-state rate ergodic} which expresses the steady-state rate in terms of $\mu_U$ and $\mu_T$.
\begin{proof}[Proof of \cref{theorem: steady-state rate ergodic}]
    Let $U_n$ denote the number of end-to-end links produced in time step $n$.
    Then, the conditional distribution of $U_n$ given entire history is the same as that of $U_n$ given $S_n$ and $U_n|(S_n=\ell) \overset{d}{=} U'_\ell$; see \cref{eq: number of links RV ell}.
    Hence, by Lemma \ref{lemma: main replacing random variable by its expectation in ergodic theorem} and Remark~\ref{rem:finiteUT}, 
        \begin{align}
            \lim_{N\to \infty} \frac{1}{N}\sum_{n=1}^{N}U_n & \overset{\mathrm{a.s.}}{=} \mu_U  \label{eq:Ulim} \\
            \text{and} \quad \lim_{N\to \infty} \frac{1}{N}\sum_{n=1}^{N}T_n & \overset{\mathrm{a.s.}}{=} \mu_T.
        \end{align}
        Therefore,
        \begin{equation}
            r=\lim_{N\to \infty} \frac{\sum_{n=1}^N U_n}{\sum_{n=1}^N T_n} = \lim_{N\to \infty} \frac{\frac{1}{N}\sum_{n=1}^N U_n}{\frac{1}{N}\sum_{n=1}^N T_n} \overset{\mathrm{a.s.}}{=} \frac{\mu_U}{\mu_T}.
        \end{equation}
\end{proof}

\subsection{Steady-state Werner Parameter}\label{section: steady state werner parameter}
The steady-state Werner parameter as defined in Definition \ref{def: steady-state werner long time} is the average Werner parameter of the end-to-end links produced in the long term. 
Similar to the steady-state rate, we want to express the steady-state Werner parameter in terms of averages with respect to the steady-state distribution $\pi$ of the signed queue lengths $(S_n)_{n\geq1}$.
However, 
the expected sum of the Werner parameters of the end-to-end links produced in a time step starting with signed queue length $\ell$ 
also depends on the time that its constituent links have been stored in memory.
Therefore, instead of computing the expected Werner parameter of the end-to-end links produced in a time step starting from state $\ell$, 
we compute the Werner parameters of the end-to-end links that \textit{will be produced} in future time steps from the newly generated links in the time step starting from state $\ell$. 
This is made precise below.

When new links are generated in state $\ell$, we have the following possibilities with respect to matching:
\begin{itemize}
    \item  a new link is matched with a link from the queue, and we denote the \textit{pairs} as $X$-pairs.
    \item two new links are matched with each other, and we denote the \textit{pairs} as $Y$-pairs.
    \item a new link is not matched and has to be stored in the queue, and we denote the \emph{links} as $Z$-links.
\end{itemize}
Examples of different types of link generation events are shown in \cref{fig:link generation events}. 
\begin{figure}
     \centering
     \begin{subfigure}[b]{0.3\textwidth}
         \centering
         \includegraphics[trim=8cm 5cm 8cm 0cm, clip, width=\textwidth]{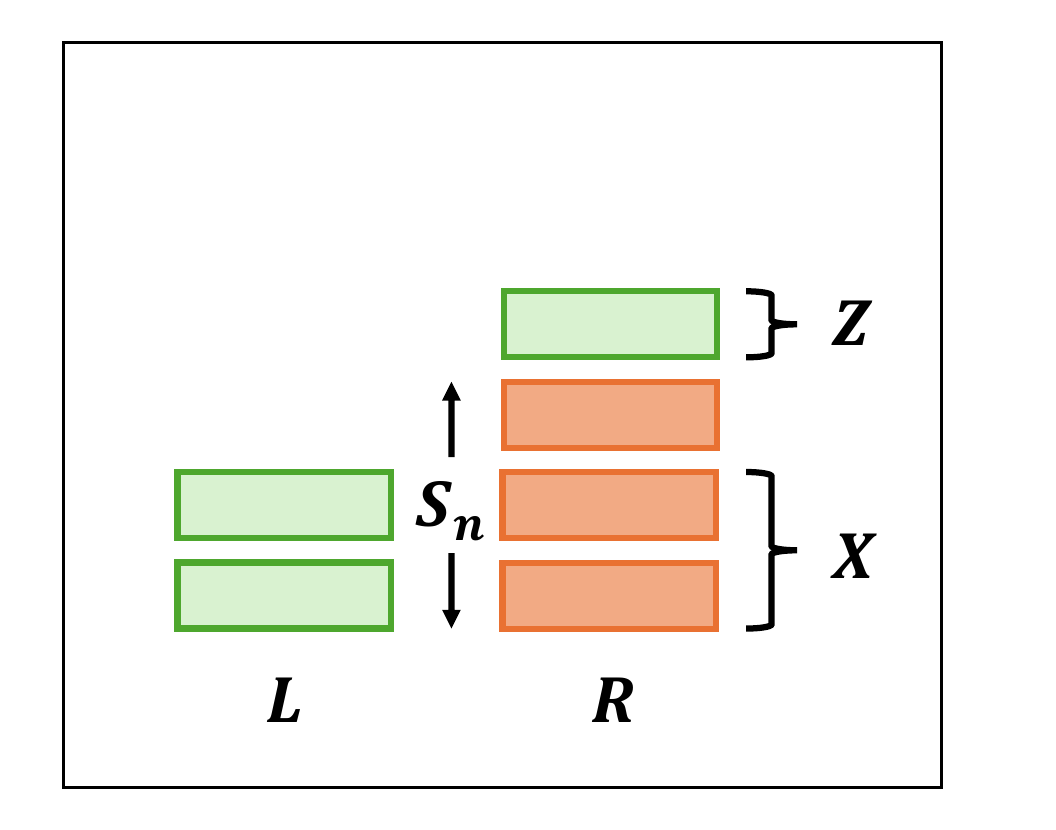}
         \caption{}
     \end{subfigure}
     \begin{subfigure}[b]{0.3\textwidth}
         \centering
         \includegraphics[trim=8cm 5cm 8cm 0cm, clip, width=\textwidth]{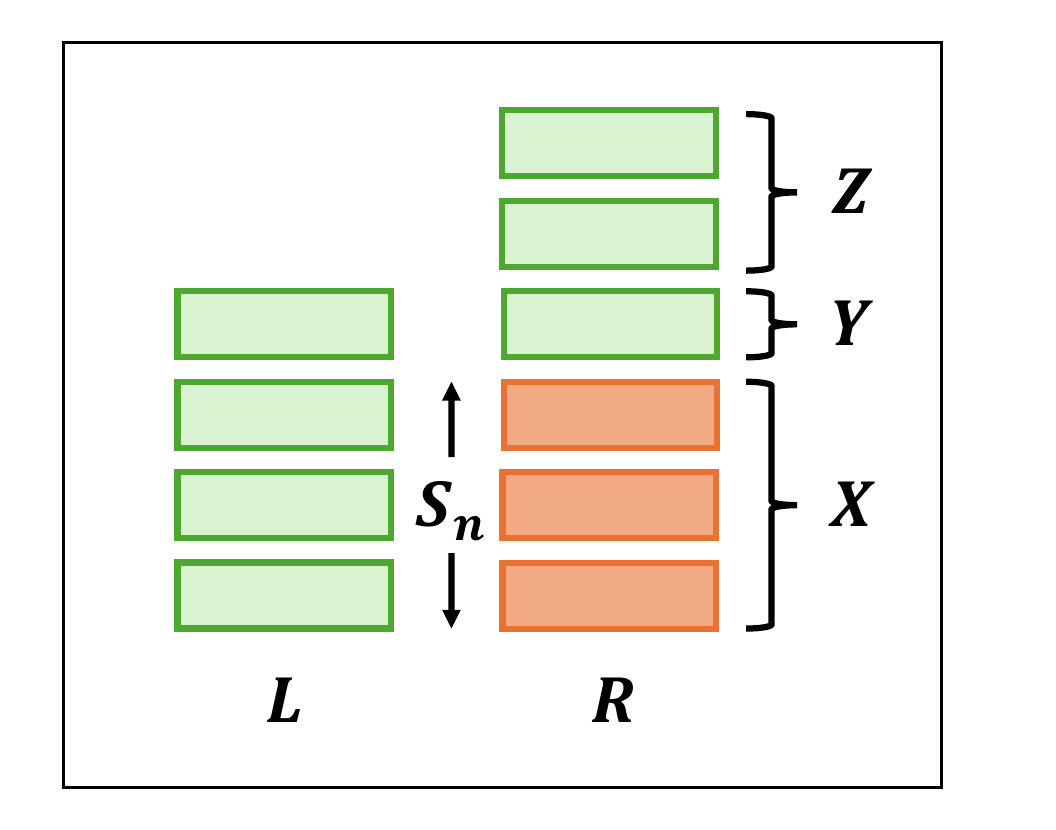}
         \caption{}
     \end{subfigure}
     \begin{subfigure}[b]{0.3\textwidth}
         \centering
         \includegraphics[trim=8cm 5cm 8cm 0cm, clip, width=\textwidth]{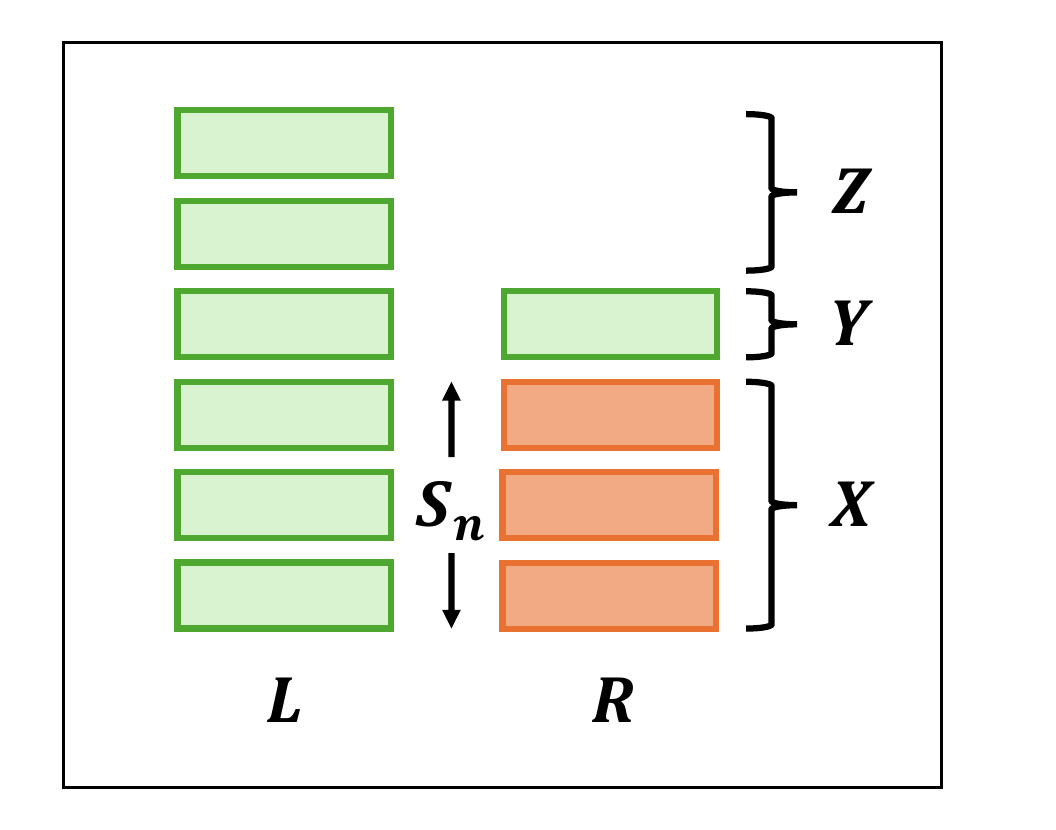}
         \caption{}
     \end{subfigure}
     \caption{\textbf{Examples of $X$-pairs, $Y$-pairs and $Z$-links in a time step starting with signed queue length $S_n = 3$.} The links that were already in queue are indicated by dark orange rectangles, and the new links, by light green rectangles. Subfigures (a), (b), and (c) are representatives for three different types of link generation events over the segments $L$ and $R$.}
    \label{fig:link generation events}
\end{figure}
The sum of the Werner parameters of the end-to-end links produced in a time step that starts with signed queue length $\ell$, would have contributions from the $X$ and $Y$ pairs. 
However, determining the Werner parameter of an $X$-pair requires the age of the queued link and finding that age is not straightforward only with the information on the signed queue length. 

The key observation to the computation of the steady-state Werner parameters is to note that any $X$-pair, i.e., a pair of a newly generated link and a queued link, must have started as a $Z$-link. 
Conversely, any $Z$-link will in the future become part of an $X$-pair. 
This motivates us to define random variables for the sum of \textit{future Werner parameters}: one based on the \textit{state} from which the $Y$-pairs and the $Z$-links are generated and the other based on the \textit{time index} when the $Y$-pairs and the $Z$-links are generated:
\begin{itemize}
    \item[] $V_\ell^{\mathrm{fut}'}$: the sum of the Werner parameters of the end-to-end links produced from the $Y$-pairs and the $Z$-links generated in a time step starting from \textit{state} $\ell$.
    \item[] $V^\mathrm{fut}_n$: the sum of the Werner parameters of the end-to-end links produced from the $Y$-pairs and $Z$-links generated in \textit{time step} $n$.
\end{itemize}

The next lemma relates the sum of Werner parameters of the end-to-end links produced in the time step $n$ to the time-indexed sum of future Werner parameters.
\begin{lemma}\label{lemma: equivalence of W and W fut}
    Let $V_n$ be the sum of the Werner parameters of the end-to-end links produced in time step $n$ and let $V_n^\mathrm{fut}$ be the sum of the Werner parameters of the end-to-end links produced from the new links generated in time step $n$, as defined above. Then,
    \begin{equation}
    \lim_{N\to \infty} \frac{1}{N}\sum_{n=1}^N V_n^\mathrm{fut} =
    \lim_{N\to \infty} \frac{1}{N}\sum_{n=1}^N V_n.
\end{equation}
\end{lemma}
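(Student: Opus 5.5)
The plan is a bookkeeping argument: $V_n$ and $V_n^\mathrm{fut}$ account for the same end-to-end links, only attributed to different time steps, and the attribution shift is bounded uniformly. Every end-to-end link produced in some time step comes either from a $Y$-pair (both constituent links generated in that step) or from an $X$-pair (a newly generated link matched with a queued link). Each queued link was a $Z$-link of some earlier step. Assign each end-to-end link to its \emph{generation step}: the step in which its $Y$-pair was created, or in which the queued partner of its $X$-pair was created as a $Z$-link. This defines a bijection between the end-to-end links counted in $\sum_{n} V_n$ and those counted in $\sum_n V_n^\mathrm{fut}$. Summing over $n=1,\dots,N$, the two partial sums can differ only through links whose generation step is at most $N$ but whose production step exceeds $N$. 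These are exactly the $Z$-links still in queue at the end of step $N$ (counted in $\sum_{n\le N}V_n^\mathrm{fut}$ but not yet in $\sum_{n\le N}V_n$). If the process starts with a nonempty queue, there is also a corresponding initial boundary term for queued links present at time $1$, counted in $\sum_{n\le N}V_n$ but not in $\sum_{n\le N} V^\mathrm{fut}_n$.

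The key estimate then bounds this boundary discrepancy uniformly. Every Werner parameter lies in $[0,1]$, and at the end of any step the queue holds at most $|S_{N+1}|\le m$ links (a finite state space of size $O(m)$). Each such queued $Z$-link contributes at most one end-to-end link, with Werner parameter at most $1$. Hence
\begin{equation*}
\Bigl|\sum_{n=1}^N V_n^\mathrm{fut} - \sum_{n=1}^N V_n\Bigr| \le |S_{N+1}| + |S_1| \le 2m
\end{equation*}
deterministically, for all $N$. Dividing by $N$ shows that the difference of the Cesàro averages tends to $0$ surely. Consequently one limit exists if and only if the other does, and the two coincide. The existence of the limit of $\frac1N\sum_n V_n^\mathrm{fut}$ is then supplied afterwards via Lemma \ref{lemma: main replacing random variable by its expectation in ergodic theorem}, using $V_n^\mathrm{fut}\le 2m$.

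The main obstacle is making the bijection rigorous with respect to the FIFO matching and swap failures. First, a $Z$-link matched later may fail its swap; it then contributes Werner parameter $0$, which is harmless, since failed swaps contribute zero to both sums. Second, one must check that every queued link is eventually matched, so that $V_n^\mathrm{fut}$ is a well-defined finite random variable. For this I would use irreducibility of the finite chain $(S_n)$: from any state the chain a.s. reaches $0$ in finite time, and by FIFO each queued link is then either matched or removed. Some care is also needed over the fact that $V_n^\mathrm{fut}$ depends on the future. This causes no problem here, because the comparison above is pathwise and deterministic once all future matches are realized.
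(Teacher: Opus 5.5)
Your argument is essentially the paper's proof. You match each $X$-pair term $V_{x,n}$ with the future Werner parameter $V^{\mathrm{fut}}_{z,n'}$ of the $Z$-link it came from, note that the $Y$-terms coincide, and bound the discrepancy by the number of links still queued at the end of step $N$; since that number is uniformly bounded, dividing by $N$ sends the difference to zero.

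One small slip: under DynMux the queue can hold up to $2m-2$ links, so $|S_{N+1}|\le m$ is false there (the paper uses $|S_{N+1}|\le 2m$). Any $N$-independent bound such as $|S_{N+1}|+|S_1|\le 4m$ suffices, so the conclusion is unaffected. Your extra $|S_1|$ term and your check that every queued link is a.s.\ eventually matched are harmless refinements of the paper's argument.
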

\begin{proof}
    Let $X_n$, $Y_n$, and $Z_n$ respectively denote the number of $X$-pairs, $Y$-pairs, and $Z$-links observed in time step $n$.
    Then,
    \begin{equation}\label{eq: w decomp}
        V_n=\sum_{x=1}^{X_n} V_{x,n} + \sum_{y=1}^{Y_n} V_{y,n},
    \end{equation}
    where $V_{x,n}$ is the Werner parameter of the end-to-end link from the $x$-th $X$-pair and $V_{y,n}$ is the Werner parameter of the end-to-end link from the $y$-th $Y$-pair in time step $n$.
    Similarly, we can express $V_n^\mathrm{fut}$ as
    \begin{equation}\label{eq: wfut decomp}
        V_n^\mathrm{fut}=\sum_{y=1}^{Y_n} V_{y,n} + \sum_{z=1}^{Z_n} V_{z,n}^\mathrm{fut},
    \end{equation}
    where $V_{z,n}^\mathrm{fut}$ is the Werner parameter that will be produced in a future time-step from the $z$-th $Z$-link.
    The $Y$-terms of \cref{eq: w decomp} and \cref{eq: wfut decomp} agree in every step since $Y$-links are generated and matched in the same step. 
    Further, any $X$-link that is matched in step $n$ must have been generated as a $Z$-link in the past, i.e.,
    $V_{x,n}=V^\mathrm{fut}_{z,n'}$ for some $n'<n$.
    Therefore,
    \begin{align}
        \Big|\frac{1}{N}\sum_{n=1}^{N} V_n - \frac{1}{N}\sum_{n=1}^N V_n^\mathrm{fut}\Big| &= \Big|\frac{1}{N}\sum_{n=1}^N \sum_{x=1}^{X_n} V_{x,n} - \frac{1}{N}\sum_{n=1}^N \sum_{z=1}^{Z_n} V_{z,n}^\mathrm{fut}\Big|
        \leq \frac{1}{N}|S_{N+1}|,
    \end{align}
    since the queue size $|S_{N+1}|$ at the start of step $N+1$ is the number of $Z$-links that have remained unmatched until the end of step $N$, and the Werner parameters of these unmatched $Z$-links are bounded by $1$. As the queue length $|S_{N+1}|$ is bounded by the number of chips, $2m$, it follows that
    \begin{equation}
        \lim_{N\to \infty} \Big|\frac{1}{N}\sum_{n=1}^N V_n - \frac{1}{N}\sum_{n=1}^N V_n^\mathrm{fut}\Big| \leq \lim_{N\to \infty} \frac{2m}{N} = 0,
    \end{equation}
    and we conclude $V_n$ and $V_n^\mathrm{fut}$ have the same long-term average.
\end{proof}

We now consider the expected sum of future Werner parameters produced in a time step starting with signed queue length $\ell$, which can be computed by conditioning on the counts of $X$-pairs, $Y$-pairs, and $Z$-links generated in that time step.
Since the distribution of these counts are completely determined by the signed queue length, we denote them as $(X'_\ell,Y'_\ell,Z'_\ell)$ and compute the expected sum of future Werner parameters as
\begin{equation}\label{eq: expected future Werner parameter from state}
    \EE\left(V^{\mathrm{fut}'}_\ell\right) = \sum_{(x,y,z) \in \NN_{\ge 0}^3} p_\ell(x,y,z)\, \EE\left(V^{\mathrm{fut}'}_\ell\mid(X'_\ell,Y'_\ell,Z'_\ell) = (x,y,z)\right),
\end{equation}
where $p_\ell(x,y,z) \!:=\! \PP\big((X'_\ell,Y'_\ell,Z'_\ell) = (x,y,z)\big)$.
The quantities $p_\ell(x,y,z)$ can be determined directly from the entanglement generation policy and is done for FxdMux in \cref{eq: prob xyz for fxd} and for DynMux in \cref{eq: prob xyz for dyn}. The computation of the expected sum of future Werner parameters given $(X'_\ell,Y'_\ell,Z'_\ell) = (x,y,z)$ is explained below, where we use the shorthand $V^{\mathrm{fut}'}_\ell|(x,y,z)$ for ${V^{\mathrm{fut}'}_\ell|(X'_\ell,Y'_\ell,Z'_\ell) = (x,y,z)}$.

\begin{lemma}\label{lemma: condition expectation V fut}
    Let $V_\ell^\mathrm{fut'}|(x,y,z)$ be the sum of future Werner parameters produced from the new links generated in a time step starting with signed queue length $\ell$ given the event $(X_\ell',Y_\ell',Z_\ell')=(x,y,z)$. 
    Denoting by $\Lambda(a,b)$ the future idle depolarization parameter incurred by a link at position $a$ in a queue of size $b$ at the end of a time step until it enters the entanglement swap phase, we have 
    \begin{align}
    & \EE\left(V^{\mathrm{fut}'}_\ell\mid(x,y,z)\right) =\left(\lambda_\mathrm{local}^{(2)}\right)^{x}\sum_{i=1}^y\left(\lambda_\mathrm{local}^{(2)}\right)^{i-1}\lambda_0 + \left(\lambda_\mathrm{local}^{(1)}\right)^{x+y} \sum_{i=1}^z \EE\left(\Lambda(\ell-x+i, \ell-x+z)\right)\lambda_0, \quad \text{where} \label{eq: conditional W future}\\
    & \lambda_\mathrm{local}^{(k)}:=\EE\left(\lambda_\mathrm{local}^{kN_\mathrm{local}}\right)\thickspace \mathrm{with}\thickspace N_\mathrm{local}\!\sim\! \TruncGeo(p_\mathrm{local}, c_\mathrm{local}), \quad \mathrm{and} \label{eq: def lambda local k}\\ 
    &\lambda_0 := \lambda_\mathrm{static}\,\EE\left(\lambda_\mathrm{active}^{2N'_\mathrm{local}} \mathbbm{1}(N'_\mathrm{local} \!\le\! c_\mathrm{local})\right)\thickspace \mathrm{with}\thickspace  N'_\mathrm{local}\!\sim\! \Geo(p_\mathrm{local}) \label{eq: definition of lambda 0}.
\end{align}
\end{lemma}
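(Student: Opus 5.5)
By linearity of expectation, I would split the conditional sum into the contribution of each of the $y$ $Y$-pairs and each of the $z$ $Z$-links. For every contributing link I would then track the product of depolarizing factors it accrues, using the two multiplicative Werner properties from the noise assumptions. Since swapping two Werner states multiplies their parameters, each end-to-end link's Werner parameter factorizes. It is $\lambda_\mathrm{static}$ times the active decoherence factor $\lambda_\mathrm{active}^{2N'_\mathrm{local}}$ of its own swap (both memory qubits are active during its $N'_\mathrm{local}$ local attempts), times the indicator that the swap succeeds, times the idle factors accumulated while other pairs are being swapped or while waiting through Sync-Gen cycles. The local-attempt counts of different swaps are iid and independent of the remote generation process. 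The expectation of each such product therefore splits into the expectation of its own swap factor, which is exactly $\lambda_0$, times the expectations of the idle factors.

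\textbf{Ordering.} I would fix the order in which swaps are executed in the current step. By the FIFO assumption, the $x$ $X$-pairs (queued link matched with a new link) are swapped first, followed by the $y$ $Y$-pairs. The $Z$-links remain in queue behind the $\ell-x$ old links still waiting.

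\textbf{$Y$-pairs.} The $i$-th $Y$-pair waits idly through $x+i-1$ previous swaps. During each of these, both of its links (two memory qubits) depolarize with factor $\lambda_\mathrm{local}^{2N_{\mathrm{local},j}}$ for an independent $N_{\mathrm{local},j}\sim\TruncGeo$. Independence gives $(\lambda_\mathrm{local}^{(2)})^{x+i-1}\lambda_0$, and summing over $i$ yields the first term.

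\textbf{$Z$-links.} Each $Z$-link is a single link that idles through all $x+y$ swaps of the current step, giving a factor $(\lambda_\mathrm{local}^{(1)})^{x+y}$. At the end of the step, the $i$-th $Z$-link sits at position $\ell-x+i$ in a queue of size $\ell-x+z$. Its further idle depolarization until it enters a swap phase is by definition $\Lambda(\ell-x+i,\ell-x+z)$. This factor depends only on the future evolution, which by the Markov property is conditionally independent of the current swap durations given the queue configuration. When the link is eventually swapped, the factor from its own swap is again $\lambda_0$ in expectation, independent of everything before.

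There is a subtlety in the $Z$-link factor. The partner link generated in the future is fresh, so it contributes no idle decoherence before the swap phase. Within that future step, any idling while earlier pairs are swapped must be regarded as part of $\Lambda$ (the paper defines $\Lambda$ as the depolarization "until it enters the entanglement swap phase", so I would interpret $\Lambda$ as including the in-phase waiting, or verify that the later lemmas on $\Lambda$ account for it). Multiplying the factors and summing over $i$ then gives the second term.

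\textbf{Main obstacle.} The hardest step is justifying that $\EE$ of the product factorizes for the $Z$-links. $\Lambda$ is random and depends on future generation outcomes and on future swap durations of links ahead in the queue. I would condition on the state at the end of the step and invoke the strong Markov property together with the independence of all $N'_{\mathrm{local}}$ draws. This shows that $\Lambda$ is independent of the current-step swap durations and of the link's own future swap duration, so the expectation becomes the product $(\lambda_\mathrm{local}^{(1)})^{x+y}\,\EE(\Lambda)\,\lambda_0$.
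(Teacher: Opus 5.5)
Your proposal is correct and follows essentially the same route as the paper. The paper writes $V^{\mathrm{fut}'}_\ell\mid(x,y,z)$ as a sum of $Y$-pair terms carrying $\prod_{j=1}^{x+i-1}\lambda_\mathrm{local}^{2N_{\mathrm{local},j}}$ and $Z$-link terms carrying $\prod_{j=1}^{x+y}\lambda_\mathrm{local}^{N_{\mathrm{local},j}}\,\Lambda(\ell-x+i,\ell-x+z)$, each multiplied by the own-swap factor $\lambda_\mathrm{static}\lambda_\mathrm{active}^{2N'_{\mathrm{local}}}\mathbbm{1}(N'_{\mathrm{local}}\le c_\mathrm{local})$, and then takes expectations using the independence of the local-attempt counts; your reading of $\Lambda$ as including the waiting within the future swap phase is what the paper's recursions for $\Lambda^\mathrm{fxd}$ and $\Lambda^\mathrm{dyn}$ encode via the factor $(\lambda_\mathrm{local}^{(2)})^{a-1}$, and your explicit independence argument for $\Lambda$ spells out what the paper compresses into one line.
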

\begin{proof}
Recall from \cref{section: system model} that we assume pairs of entangled links are swapped sequentially in FIFO order during the entanglement swap phase.
In this phase, there are at most $c_\mathrm{local}$ independent local entanglement generation attempts, and entanglement swap fails if no success is observed until this point.
We denote by $N'_\mathrm{local}$ the (hypothetical) number of local entanglement generation attempts until success, and by ${N_\mathrm{local}:=\min(N'_\mathrm{local},c_\mathrm{local})}$ the actual number of local attempts. 
Then for the link generation event ${(X'_\ell,Y'_\ell,Z'_\ell) \!=\! (x,y,z)}$ in a time step starting with signed queue length $\ell$, the sum of future Werner parameters is given by

\begin{multline}\label{eq: conditional W future as random variable}
    V^{\mathrm{fut}'}_\ell|(x,y,z) \overset{d}{=}
    \sum_{i=1}^y \Big(\prod_{j=1}^{x+i-1} \lambda_\mathrm{local}^{2N_{\mathrm{local},j}}\Big)
    \lambda_\mathrm{active}^{2N'_{\mathrm{local},x+i}} \lambda_\mathrm{static} \mathbbm{1}(N'_{\mathrm{local},x+i}\leq c_\mathrm{local}) + \\
    \Big(\prod_{j=1}^{x+y}\lambda_\mathrm{local}^{N_{\mathrm{local},j}}\Big)\sum_{i=1}^z \big(\Lambda(\ell-x+i, \ell-x+z)\lambda_\mathrm{active}^{2N'_{\mathrm{local},x+y+i}} \lambda_\mathrm{static} \mathbbm{1}(N'_{\mathrm{local},x+y+i}\leq c_\mathrm{local}),
\end{multline}
where the $N'_{\mathrm{local},i}\overset{\mathrm{iid}}{\sim} \Geo(p_\mathrm{local})$ for $i=1, \dots, x + y + z$ and $N_{\mathrm{local},i}:=\min(N'_{\mathrm{local},i},c_\mathrm{local})$ are the number of local entanglement generation attempts observed for the entanglement swap of pair $i$. 
Further, $\mathbbm{1}(\cdot)$ denotes the indicator function. The first term in \cref{eq: conditional W future as random variable} corresponds to contributions from $Y$-pairs and the second term to contributions from $Z$-links.

\textbf{$Y$-pairs in \cref{eq: conditional W future as random variable}:} 
The first term in \cref{eq: conditional W future as random variable} accounts for the $Y$-pairs that are generated and swapped in the current time step. All the $X$-pairs and the first $(i-1)$ $Y$-pairs have to be swapped before the $i$-th $Y$-pair can be swapped.
This gives the depolarization factor $\prod_{j=1}^{x+i-1} \lambda_\mathrm{local}^{2N_{\mathrm{local},j}}$ since both links of the $i$-th $Y$-pair are depolarized by $\lambda_\mathrm{local}$ during each local entanglement generation attempt between other pairs of chips.
The factors of $\lambda_\mathrm{active}$ account for the active depolarization experienced by links when local entanglement generation attempts are performed between the communication qubits of \textit{their own} chips during entanglement swapping. The factor $\lambda_\mathrm{static}$ accounts for all time-independent decoherence processes.
The indicator functions of the form $\mathbbm{1}(N'_{\mathrm{local},i}\leq c_\mathrm{local})$ 
make sure that the contribution is only added to the future Werner parameter when the entanglement swap is successful. 

\textbf{$Z$-links in \cref{eq: conditional W future as random variable}:} For the second term accounting for the $Z$-links,
recall that the second link of the pair is generated in a future time step. The overall factor $\prod_{j=1}^{x+y}\lambda_\mathrm{local}^{N_{\mathrm{local},j}}$ accounts for the depolarization of each $Z$-link while waiting for the $X$-pairs and $Y$-pairs in the step that they were generated to be swapped.
Conditioned on the event that $(X'_\ell,Y'_\ell, Z'_\ell)=(x,y,z)$, the $i$th $Z$-link produced in state $\ell$ starts the next time step in position $\ell-x+i$ in a queue of size $\ell-x+z$. The factor $\Lambda(a,b)$ denotes the future idle depolarization parameter incurred by a link when it starts at position $a$ in a queue of size $b$ until it enters the entanglement swap. This factor depends on the entanglement generation policy. Finally, the factors of $\lambda_\mathrm{active}$ account for depolarization when the $Z$-links are swapped in some future time step, $\lambda_\mathrm{static}$ gives the time-independent depolarization, and the indicator functions make sure the contributions are only added when the entanglement swap is successful.

Since $N_{\mathrm{local},i}$s are independent, the result follows by taking expectations in \cref{eq: conditional W future as random variable}.
\end{proof}
%

%
Note that $\lambda_\mathrm{local}^{(k)}$ in \cref{eq: def lambda local k} is the expected depolarization in a single link (for $k=1$) or a pair of links (for $k=2$) when establishing a local link between a different pair of chips. Moreover, $\lambda_0$ in \cref{eq: definition of lambda 0} is the maximum expected contribution of any pair of matched links to sum of Werner parameters, i.e. the contribution without depolarizing factors due to links idling in memory. It is shown in Lemma \ref{lemma: Lambda a b fxdmux} that the expectations $\EE(\Lambda(a,b))$ of future idle depolarization parameters for all starting positions $a$ and total queue lengths $b$ for the FxdMux policy can be obtained from a single linear system of dimension $m$, and Lemma \ref{lemma: Lambda a b dynmux} shows that for DynMux they can be obtained from $2m-2$ linear systems of dimensions $1,\dots,2m-2$.

Since $V_\ell^\mathrm{fut}$ has bounded outcomes, we can compute its steady-state expectation from Lemma \ref{lemma: main replacing random variable by its expectation in ergodic theorem} as
\begin{equation}\label{eq: definintion mu V fut}
    \mu_{V^\mathrm{fut}} = \sum_{\ell\in \mathcal{S}}\pi_\ell \EE\left(V^\mathrm{fut'}_\ell\right).
\end{equation}
The steady-state Werner parameter $w$ can then be given in terms of $\mu_U$ and $\mu_{V^\mathrm{fut}}$.


\begin{proof}[Proof of \cref{theorem: main steady-state werner fut ergodic}]
    Starting from Definition~\ref{def: steady-state werner long time}, we have 
    $$w=\lim_{N\to\infty}\frac{\sum_{n=1}^N V_n}{\sum_{n=1}^N U_n} \overset{\text{(i)}}{=}\lim_{N\to\infty}\frac{\sum_{n=1}^N V^\mathrm{fut}_n}{\sum_{n=1}^N U_n} = \lim_{N\to\infty}\frac{\frac{1}{N}\sum_{n=1}^N V^\mathrm{fut}_n}{\frac{1}{N}\sum_{n=1}^N U_n},$$
    where (i) follows from Lemma~\ref{lemma: equivalence of W and W fut}.
    Since $V_\ell^\mathrm{fut}$ and $U_\ell$ have bounded outcomes for all states $\ell$, we can now apply Lemma \ref{lemma: main replacing random variable by its expectation in ergodic theorem} on the numerator and the denominator of the last expression, which gives $\PP(\frac{1}{N}\sum_{n=1}^N V^\mathrm{fut}_n \to \mu_{V^{\mathrm{fut}}}) = 1$ and $\PP(\frac{1}{N}\sum_{n=1}^N U_n \to \mu_U) = 1$.
    Since $\mu_U>0$, the ratio is eventually well-defined on a probability $1$ set, and we indeed have $w \overset{\mathrm{a.s.}}{=} \mu_{V^{\mathrm{fut}}}/\mu_U$.
\end{proof}

We make the following observation regarding the maximum steady-state Werner parameter, which will be used in Appendix \ref{app:skr} when optimizing the secret key rate over the cutoff $c_\mathrm{local}$.
\begin{lemma}\label{lemma: upper bound on w}
    The steady-state Werner parameter $w$ is at most $w_\mathrm{max}:= \lambda_0/p_\mathrm{swap}$.
\end{lemma}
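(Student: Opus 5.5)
The plan is to use \cref{theorem: main steady-state werner fut ergodic}, which gives $w \overset{\mathrm{a.s.}}{=} \mu_{V^\mathrm{fut}}/\mu_U$. We then bound $\mu_{V^\mathrm{fut}}$ from above by $(\lambda_0/p_\mathrm{swap})\,\mu_U$ via a state-by-state comparison of $\EE(V_\ell^{\mathrm{fut}'})$ with $\EE(U'_\ell)$. Since both $\mu_{V^\mathrm{fut}}$ and $\mu_U$ are averages against the same stationary distribution $\pi$ (\cref{eq: definintion mu V fut} and \cref{eq: definition mu U}), it suffices to show
\begin{equation*}
\sum_{\ell}\pi_\ell\,\EE\big(V_\ell^{\mathrm{fut}'}\big) \le \frac{\lambda_0}{p_\mathrm{swap}}\sum_\ell \pi_\ell\,\EE(U'_\ell).
\end{equation*}
A per-state inequality will not quite work, because $V^{\mathrm{fut}'}_\ell$ counts $Y$-pairs and $Z$-links, while $U'_\ell$ counts successful swaps of $X$- and $Y$-pairs. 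We therefore pass through a count comparison.

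The first step bounds the conditional expectation in \cref{lemma: condition expectation V fut}. Every depolarizing factor $\lambda_\mathrm{local}^{(k)}$ and $\EE(\Lambda(a,b))$ lies in $[0,1]$, since each is an expectation of a product of numbers in $[0,1]$. Hence
\begin{equation*}
\EE\big(V^{\mathrm{fut}'}_\ell\mid (x,y,z)\big)\le (y+z)\lambda_0,
\end{equation*}
and so $\EE(V_\ell^{\mathrm{fut}'})\le \lambda_0\,\EE(Y'_\ell+Z'_\ell)$. For the denominator, \cref{eq: U to E} gives $\EE(U'_\ell)=p_\mathrm{swap}\,\EE(E'_\ell)$, and $E'_\ell = X'_\ell+Y'_\ell$ because every attempted swap pairs either a new link with a queued one or two new links. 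The claim therefore reduces to the stationary identity
\begin{equation*}
\sum_\ell \pi_\ell\,\EE(Z'_\ell)=\sum_\ell\pi_\ell\,\EE(X'_\ell),
\end{equation*}
that is, $\mu_Z=\mu_X$.

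The second step, which I expect to be the main obstacle, is this balance identity. I plan to prove it pathwise, in the same way as \cref{lemma: equivalence of W and W fut}. Each $Z$-link enters the queue, and each $X$-pair removes exactly one queued link, so
\begin{equation*}
|S_{N+1}|-|S_1|=\sum_{n=1}^N (Z_n - X_n).
\end{equation*}
Because the queue length is bounded by the number of chips, dividing by $N$ and applying Lemma \ref{lemma: main replacing random variable by its expectation in ergodic theorem} to the bounded processes $(X_n)$ and $(Z_n)$ (each modulated by $S_n$) gives $\mu_Z=\mu_X$ almost surely, hence as constants. One point to check is that a swap failure at the cutoff still removes the queued link, so that the $X$-count really decrements the queue. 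Under the model this holds: a failed swap discards the pair.

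Combining the two steps gives
\begin{equation*}
\mu_{V^\mathrm{fut}}\le\lambda_0(\mu_Y+\mu_Z)=\lambda_0(\mu_Y+\mu_X)=\frac{\lambda_0}{p_\mathrm{swap}}\,\mu_U .
\end{equation*}
Dividing by $\mu_U>0$ then yields $w\le \lambda_0/p_\mathrm{swap}=w_\mathrm{max}$.
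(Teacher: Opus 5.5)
Your proof is correct, but it takes a different route from the paper.

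\textbf{The paper's route.} The paper converts $\mu_{V^\mathrm{fut}}$ into $\mu_V$ using Lemma~\ref{lemma: equivalence of W and W fut}. It then bounds the Werner parameters of the links actually delivered in each step. Each of the $E'_\ell=X'_\ell+Y'_\ell$ swaps performed from state $\ell$ contributes at most $\lambda_0$ in expectation, and $\EE(U'_\ell)=p_\mathrm{swap}\EE(E'_\ell)$ counts exactly the same swaps. So a state-by-state inequality $\EE(V'_\ell)\le(\lambda_0/p_\mathrm{swap})\EE(U'_\ell)$ finishes the argument without any balance identity.

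\textbf{Your route.} You stay with $V^\mathrm{fut}$, whose per-state bound from Lemma~\ref{lemma: condition expectation V fut} counts $Y$-pairs and $Z$-links rather than swaps. You reconcile the two bookkeepings through the flow balance $\mu_Z=\mu_X$. You obtain it by the same telescoping of $|S_n|$ that drives Lemma~\ref{lemma: equivalence of W and W fut}, now applied to counts.

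\textbf{What each buys.} The paper's version is shorter. Yours has the advantage that the only processes you feed into Lemma~\ref{lemma: main replacing random variable by its expectation in ergodic theorem} yourself are $X_n$ and $Z_n$. These are functions of the step-$n$ generation outcomes and hence modulated by $S_n$ exactly like $U_n$. The only state-indexed Werner expectation you need is the explicitly computed $\EE(V^{\mathrm{fut}'}_\ell)$.

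By contrast, the conditional law of the realized sum $V_n$ given $S_n=\ell$ depends on the ages of the queued links, as the paper itself notes. So writing $\mu_V=\sum_\ell\pi_\ell\EE(V'_\ell)$ tacitly needs an extra step, which your route never requires. One such step is dominating $V_n$ by the $S_n$-modulated sum over swaps of $\lambda_\mathrm{static}\lambda_\mathrm{active}^{2N'_\mathrm{local}}\mathbbm{1}(N'_\mathrm{local}\le c_\mathrm{local})$.
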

\begin{proof}
    By \cref{theorem: main steady-state werner fut ergodic} the steady-state Werner parameter is $w \overset{\mathrm{a.s}}= \mu_{V^{\mathrm{fut}}}/\mu_{U}$. 
    In a time step starting with signed queue length $\ell$ and given the random variables $(X_\ell',Y_\ell',Z_\ell')$, the number of entanglement swaps in state $\ell$ is given by $E_\ell'=X_\ell' + Y_\ell'$. It thus follows by \cref{eq: U to E} that
    \begin{equation}
        \EE(U_\ell') = \EE(E_\ell')p_\mathrm{swap}=\sum_{(x,y,z)\in \NN^3_{\geq 0}}p_\ell(x,y,z) (x+y) p_\mathrm{swap}. 
    \end{equation}
    Moreover, by Lemma \ref{lemma: main replacing random variable by its expectation in ergodic theorem} and \ref{lemma: equivalence of W and W fut} we also have that $\mu_{V^\mathrm{fut}}\overset{\mathrm{a.s}}{=}\mu_V$.
    Since the expected Werner parameter produced from any entanglement swap is at most $\lambda_0$ (see \cref{eq: definition of lambda 0}), we can upper bound $\EE(V'_\ell)$ as
    \begin{equation}
        \EE(V'_\ell) \leq \sum_{(x,y,z)\in \NN^3_{\geq 0}}p_\ell(x,y,z) (x+y) \lambda_0.
    \end{equation}
    It follows that
    \begin{equation}
        w \overset{\mathrm{a.s}}{=} \frac{\mu_{V}}{\mu_{U}}= \frac{\sum_{\ell\in \mathcal{S}}\pi_\ell \EE(V'_\ell)}{\sum_{\ell\in \mathcal{S}}\pi_\ell \EE(U'_\ell)} \leq \frac{\lambda_0}{p_\mathrm{swap}}.
    \end{equation}
\end{proof}

\section{The Signed Queue Length under FxdMux and DynMux}\label{app: markov applied}
In this appendix, we describe the evolution of the signed queue length Markov chains under the FxdMux and DynMux policies. 
The exact steady-state rate and steady-state Werner parameter can be obtained from these Markov chains by solving at most $O(m)$ linear systems of dimension at most $O(m)$, where $2m$ is the number of quantum chips on the repeater node.
Additionally, we derive approximations to the steady-state rate and Werner parameter through the OSS approximation. The transition probability matrix for the underlying Markov chains for the signed queue length in the OSS approximation retains only the terms to the first order in $p$ from the exact transition probability matrices. In this sense, the OSS approximation assumes that at most one remote entangled link can be produced in each time step. Within the OSS approximation closed-form solutions for the steady-state rate and steady-state Werner parameter can be derived. In the parameter regime studied in \cref{section: numerical evaluations}, these closed-form solutions provide close approximations to the exact results. 

\subsection{Exact Analysis}\label{app: full markov chain models}
We first treat the signed queue length under the FxdMux policy and then proceed to the DynMux policy. 
Also, we use the following shorthand for the binomial probability mass function (pmf):
\begin{equation}\label{eq: binomial pmf shorthand}
    B(k;n,q)=
    \begin{cases}
        \binom{n}{k}q^k (1-q)^{n-k} & \mathrm{if\,}0\leq k\leq n \\
        0 & \mathrm{otherwise}
    \end{cases}~.
\end{equation}

\subsubsection*{FxdMux}
\textbf{Markov chain:} 
We denote the signed queue length at the beginning of the time step $n$ under the FxdMux policy as $S_n^\mathrm{fxd}$.
Then, $(S_n^\mathrm{fxd})_{n\geq1} \sim \Markov(P^\mathrm{fxd})$ with state space $\mathcal{S}^\mathrm{fxd} := \{-m,\dots, -1, 0, 1, \dots, m\}$ and the transition matrix $P^\mathrm{fxd}$ is given by
\begin{align}
      P^\mathrm{fxd}_{\ell,\ell'} &= \sum_{k=0}^{m} B(k; m,p)B(k+\ell'-\ell;m-\ell,p)\quad\mathrm{for}~\ell,\ell'\geq0, \label{eq:pfixed1}\\
      P^\mathrm{fxd}_{\ell,\ell'} &= \sum_{k=0}^{m} B(k-\ell'; m,p)B(k-\ell;m-\ell,p)\quad\mathrm{for}~\ell\geq0,\ell'\leq 0 \label{eq:pfixed2}\\
      P^\mathrm{fxd}_{\ell,\ell'} &=   P^\mathrm{fxd}_{-\ell,-\ell'} \label{eq:pfixed3}.
\end{align}
We now explain~\eqref{eq:pfixed1}--\eqref{eq:pfixed3}.
For $\ell=0$ the chain is empty, and there are $m$ remote entanglement generation attempts to the left and $m$ to the right.
The chain evolves to state $\ell'\geq 0$ if $k$ attempts succeed on the left and $\ell'+k$ attempts succeed on the right, for some $k \in \{0,1,\dots,m\}$.
In the entanglement swap phase, $k$ links will be matched and swapped, or lost in case of swap failure, so that at the start of the next time step $\ell'$ links will remain unmatched on the right segment. 
Similarly, 
when the chain starts with links present on the right segment (i.e., $\ell>0$), 
there are $m$ remote link generation attempts on the left and $m-\ell$ attempts on the right. 
The transition to state $\ell'\geq 0$ at the start of the next time step happens whenever $k$ links are generated on the left and $k+\ell'-\ell$ links are generated on the right. 
At the start of the entanglement swap phase, there will then be $k$ links on the left and $k+\ell'$ links on the right as there were already $\ell$ links on the right. After entanglement swaps, $\ell'$ links will remain on the right segment. 

The queue can also flip from the right to the left segment.
In particular, the queue flips from $\ell\geq 0$ to $\ell'\leq 0$ when $k-\ell'$ links succeed on the left, and $k-\ell$ links succeed on the right, such that the entanglement swap phase starts with a total of $k+|\ell'|$ links on the left segment and $k$ links on the right segment. Finally, by the left-right symmetry of the system all transition probabilities starting from states $\ell\leq 0$ can be determined.

\textbf{Stationary distribution:} The Markov chain $(S_n^\mathrm{fxd})_{n\geq 1}$ is irreducible and aperiodic on a finite state space. Irreducibility and aperiodicity follow from the fact that state $\ell=0$ can be reached in one step from any state
and any state can reach state $\ell=0$ in one step. The stationary distribution $\pi^\mathrm{fxd}$ can be found by solving a linear system of dimension $|\mathcal{S}^\mathrm{fxd}|=2m+1$ \cite[Eq. 9.31]{van2009performance}.

\textbf{Steady-state rate:} 
For a state $\ell\geq 0$, the number of new links generated on the left segment during the generation phase under the FxdMux policy is $K_{L,\ell}^{\mathrm{fxd}'} \sim \Bin(m,p)$, while the number of new links on the right segment is $K_{R,\ell}^{\mathrm{fxd}'} \sim \Bin(m-\ell,p)$. With
\begin{equation}\label{eq: E fxdmux}
E_{\ell}^{\mathrm{fxd}'}=\min(K_{L,\ell}^{\mathrm{fxd}'}, \ell + K_{R,\ell}^{\mathrm{fxd}'}),
\end{equation} we then have similar to \cref{eq: U to E} that
\begin{align}\label{eq: U fxd step 1}
    \EE\left(U_\ell^\mathrm{fxd'}\right) =\EE\left(E^{\mathrm{fxd'}}_\ell\right)p_\mathrm{swap}.
\end{align}
Using the fact that for a non-negative integer-valued random variable $A$, the expectation of $A$ can be expressed as $\EE(A) = \sum_{a=1}^\infty \PP(A\geq a)$, we have
\begin{align}
    \EE\left(E^{\mathrm{fxd'}}_\ell\right)&=\sum_{k=1}^m \PP(\min(K^{\mathrm{fxd}'}_{L,\ell},\ell + K^{\mathrm{fxd}'}_{R,\ell})\geq k)= 
    \sum_{k=1}^{m}\sum_{k_L,k_R=0}^m B(k_L;m,p) B(k_R;m-\ell,p) \mathbbm{1}(\min(k_L,\ell+k_R)\geq k),\label{eq: U fxd step 2}
\end{align}
where $B(k;n,q)$ is the binomial pmf shorthand from \cref{eq: binomial pmf shorthand} and $\mathbbm{1}(\cdot)$ denotes the indicator function. The case $\ell<0$ can be derived from the left-right symmetry. 
The expected duration of a time step starting in state $\ell$ follows similar to \cref{eq: expected time step from U},
\begin{equation}
    \EE\left(T_\ell^{\mathrm{fxd'}}\right) = t_\mathrm{long} + \EE\left(U^{\mathrm{fxd'}}_\ell\right)\frac{t_\mathrm{swap}}{p_\mathrm{swap}}.
\end{equation}
Using the numerical solution for $\pi^\mathrm{fxd}$, the steady-state rate $r^\mathrm{fxd}$ under the FxdMux policy can thus be computed exactly from \cref{eq: steady-state rate steady-state average} as
\begin{equation}
    r^\mathrm{fxd}\overset{\mathrm{a.s.}}{=} \frac{\sum_{\ell\in \mathcal{S}^\mathrm{fxd}}\pi^\mathrm{fxd}_\ell\EE\left(U_{\ell}^{\mathrm{fxd'}}\right)}{\sum_{\ell\in \mathcal{S}^\mathrm{fxd}}\pi^\mathrm{fxd}_\ell\EE\left(T_{\ell}^{\mathrm{fxd'}}\right)}.
\end{equation}

\textbf{Steady-state Werner parameter:} Under the FxdMux policy, we denote the number of $X$-pairs, $Y$-pairs, and $Z$-links produced in a time step starting with signed queue length $\ell$ as $(X^{\mathrm{fxd}'}_\ell,Y^{\mathrm{fxd}'}_\ell,Z^{\mathrm{fxd}'}_\ell)$.
The expected sum of future Werner parameters starting from this state can be calculated similar to \cref{eq: expected future Werner parameter from state}, i.e.,
\begin{equation}\label{eq: expected future Werner parameter from state fxd}
    \EE\left(V_\ell^\mathrm{fut,\, fxd'}\right) = \sum_{(x,y,z) \in \NN_{\ge 0}^3}p_\ell^\mathrm{fxd}(x,y,z) \EE\left(V_\ell^\mathrm{fut,\, fxd'}\mid(x,y,z)\right),
\end{equation}
where $p_\ell^\mathrm{fxd}(x,y,z)$ and $V_\ell^\mathrm{fut,\, fxd'}|(x,y,z)$ are respectively shorthands for $\PP\big((X^{\mathrm{fxd}'}_\ell,Y^{\mathrm{fxd}'}_\ell,Z^{\mathrm{fxd}'}_\ell) = (x,y,z)\big)$ and ${V_\ell^\mathrm{fut,\, fxd'}|(X^{\mathrm{fxd}'}_\ell,Y^{\mathrm{fxd}'}_\ell,Z^{\mathrm{fxd}'}_\ell) = (x,y,z)}$ for $(x,y,z) \in \NN_{\ge 0}^3$.
The probabilities $p_\ell^\mathrm{fxd}(x,y,z)$ 
can be calculated as
\begin{equation}\label{eq: prob xyz for fxd}
    p_\ell^\mathrm{fxd}(x,y,z) = 
    \begin{cases}
    B(x;m,p) B(z;m-\ell,p) \mathbbm{1}(y=0) &    \mathrm{if\,}x<\ell, \\
    B(k_L; m,p) B(k_R;m-\ell,p) \Big(\mathbbm{1}\big(k_L\leq \ell + k_R, k_L=\ell+y, k_R = y+z \big)\,+ \\
    \quad \mathbbm{1}\big(k_L> \ell + k_R, k_L = \ell+y+z, k_R = y\big) \Big) & \mathrm{if\,} x = \ell.
    \end{cases}
\end{equation}
This is because for $x<\ell$, matching links forming $X$-pairs are generated on the empty left segment from $m$ attempts, and $Z$-links are added to the queue on the right segment. 
The number of generated $Y$-pairs is necessarily zero in this case. 
For $x=\ell$, all queued links are matched. 
In that case, the first term within the parentheses represents the event of having unmatched links on the right segment, while the second term represents unmatched links on the left segment.

The expected sum of future Werner parameters from state $\ell$ conditioned on the event ${\{(X^{\mathrm{fxd}'}_\ell,Y^{\mathrm{fxd}'}_\ell,Z^{\mathrm{fxd}'}_\ell) \!=\! (x,y,z)\}}$, i.e. $\EE(V_\ell^\mathrm{fut,\, fxd'}|(x,y,z))$, can be calculated following \cref{eq: conditional W future}.
Evaluating it requires the computation of the expected future idle depolarization $\EE\left(\Lambda^\mathrm{fxd}(a,b)\right)$ of the link in position $a$ in a queue of size $b$ until it enters entanglement swap. Under the FxdMux policy, $\Lambda^\mathrm{fxd}(a,b)$ only depends on the starting position $a$ in the queue. To see this, note that since links are matched FIFO the number of time steps $N_a^\mathrm{fxd}$ until the link starting at position $a$ is matched is a random variable whose distribution can be computed as
\begin{equation}
    \PP(N_a^\mathrm{fxd}\geq n) = \PP\Big(\sum_{i=1}^n B_i\geq a\Big), 
\end{equation}
where $B_i \overset{\mathrm{iid}}{\sim} \Bin(m,p)$ are 
binomial random variables corresponding to the number of links generated on the empty segment in step $i$ until the links are matched. This does not depend on the initial queue size $b$. The additional depolarization within the time step in which the link is matched is also independent of $b$. Hence, we may define $\Lambda^\mathrm{fxd}(a):=\Lambda^\mathrm{fxd}(a,m)$ for the idle depolarization starting from position $a$ in the queue until entanglement swap, and subsequently define the vector $(f_a)_{a=1}^m$ of the expected idle depolarization parameters from all possible starting positions through 
\begin{equation}\label{eq: depolarization vector fxd}
    f_a := \EE\left(\Lambda^\mathrm{fxd}(a)\right)\quad\mathrm{for}\quad a=1,\dots, m.
\end{equation}

It is shown in Lemma \ref{lemma: Lambda a b fxdmux} below that the vector $(f_a)_{a=1}^m$ can be obtained as the solution to a linear system of dimension $m$. After solving this linear system numerically, the expected sum of future Werner parameters conditional on $(X^{\mathrm{fxd}'}_\ell,Y^{\mathrm{fxd}'}_\ell,Z^{\mathrm{fxd}'}_\ell)$ can be computed from \cref{eq: conditional W future}, so that \cref{eq: expected future Werner parameter from state fxd} for the expected sum of future Werner parameters from state $\ell$ can be evaluated.
Using the numerical solution for $\pi^\mathrm{fxd}$, the steady-state Werner parameter $w^\mathrm{fxd}$ under the FxdMux policy can thus be computed exactly from \cref{eq: steady-state Werner steady-state average} as
\begin{equation}\label{eq:steadyStateWernerFxd}
    w^\mathrm{fxd} \overset{\mathrm{a.s.}}{=} \frac{\sum_{\ell\in \mathcal{S}^\mathrm{fxd}}\pi^\mathrm{fxd}_\ell\EE\left(V_{\ell}^{\mathrm{fut,\, fxd'}}\right)}{\sum_{\ell\in \mathcal{S}^\mathrm{fxd}}\pi^\mathrm{fxd}_\ell\EE\left(U_{\ell}^{\mathrm{fxd'}}\right)}.
\end{equation}
\begin{lemma}\label{lemma: Lambda a b fxdmux}
    Let $\fvec = (f_a)_{a=1}^m$ be the vector of expected idle depolarization parameters for links starting from position $a$ in the queue until entering entanglement swap under the FxdMux policy, as defined in \cref{eq: depolarization vector fxd}. Then,
    \begin{equation}\label{eq: linear system for depolarization fxdmux}
        (I-G)\fvec = \vvec,
    \end{equation}
    where the matrix $G=(G_{a,a'})_{a,a'=1}^m$ is given by
    \begin{equation}
    G_{a,a'} = \begin{cases}
        B(a-a';m,p)\lambda_\mathrm{long}\left(\lambda_\mathrm{local}^{(1)}\right)^{a-a'} & \mathrm{if\,} a\geq a' \\
        0 &\mathrm{otherwise}
    \end{cases}
\end{equation}
and the vector $\vvec = (v_a)_{a=1}^m$ is given by
\begin{equation}
    v_a = \sum_{k=a}^m B(k;m,p) \lambda_\mathrm{long}\left(\lambda_\mathrm{local}^{(2)}\right)^{a-1}.
\end{equation}
\end{lemma}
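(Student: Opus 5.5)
Under FxdMux, a queued link at position $a$ waits until $a$ links have been generated on the opposite segment. I would derive the linear system by first-step analysis: condition on the number $K\sim\Bin(m,p)$ of links generated on the empty segment during the next generation phase. Here $\Lambda^{\mathrm{fxd}}(a)$ collects all idle depolarization from the end of the current time step until the link's own entanglement swap begins. This includes one factor $\lambda_\mathrm{long}$ for each Sync-Gen phase it waits through, and idle local depolarization while other pairs are swapped ahead of it. First I would check that the binomial count on the opposite segment does not depend on the queue length, as argued before the lemma. The number of attempts on the empty segment is always $m$, so $f_a$ is well defined independently of $b$.

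In the first step, fix $a$ and condition on $K=k$. The link always waits through the next Sync-Gen phase, giving a factor $\lambda_\mathrm{long}$.

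\emph{Case $k<a$.} The $k$ links ahead of ours are matched and swapped in FIFO order during the swap phase. Our link idles during each of these $k$ swaps, contributing $\prod_{j=1}^k\lambda_\mathrm{local}^{N_{\mathrm{local},j}}$. By independence of the $N_{\mathrm{local},j}$ and the Markov/FIFO structure, its expectation is $(\lambda_\mathrm{local}^{(1)})^k$. Our link then starts the next time step at position $a-k$, so by the strong Markov property the remaining expected factor is $f_{a-k}$. With $a'=a-k$, this contributes $B(a-a';m,p)\lambda_\mathrm{long}(\lambda_\mathrm{local}^{(1)})^{a-a'}f_{a'}$, which is exactly $(G\fvec)_a$.

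\emph{Case $k\ge a$.} Our link is matched in this step as the $a$-th pair. Before its own swap starts, the $a-1$ earlier pairs are swapped.

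In the second step, I would settle which local factor applies in the case $k\ge a$, and this is where I expect the main subtlety. The lemma uses $(\lambda_\mathrm{local}^{(2)})^{a-1}$, not $(\lambda_\mathrm{local}^{(1)})^{a-1}$. The reason is that the newly generated partner link on the other segment also idles during those $a-1$ swaps. Depolarization multiplies into the final Werner parameter of the pair, so I would absorb both links' idle decay into $\Lambda$. This matches the convention of Lemma~\ref{lemma: condition expectation V fut}, where the $Y$-pair factor uses $\lambda_\mathrm{local}^{(2)}$ and $\lambda_0$ is reserved for the active/static contribution. I would state this interpretation explicitly and verify that it is consistent with how the $Z$-link term in \cref{eq: conditional W future} uses $\EE(\Lambda)$. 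Summing the case over $k=a,\dots,m$ gives $v_a$. Independence of $K$ from the $N_\mathrm{local}$ justifies multiplying the expectations.

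Finally, adding both cases gives $\fvec=G\fvec+\vvec$, that is, $(I-G)\fvec=\vvec$. For solvability, $G$ is lower triangular with diagonal entries $B(0;m,p)\lambda_\mathrm{long}<1$ for $p>0$. Hence $I-G$ is invertible, and the system can even be solved by forward substitution.
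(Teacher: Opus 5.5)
Your proposal is correct and follows the paper's proof essentially verbatim. Both use a one-step backward recursion that conditions on the $\Bin(m,p)$ number of links generated on the empty segment. The $k<a$ branch yields $(G\fvec)_a$, and the $k\ge a$ branch yields $v_a$, with the same reading that $(\lambda_\mathrm{local}^{(2)})^{a-1}$ covers the idle decay of both links of the matched pair. Your extra remark that $I-G$ is lower triangular with diagonal entries $1-(1-p)^m\lambda_\mathrm{long}>0$, and hence invertible, is a small addition the paper leaves implicit.
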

\begin{proof}
    Let $\fvec = (f_a)_{a=1}^m$ be the vector of expected future idle depolarization parameters entering entanglement swap starting from position $a$ in the queue under the FxdMux policy, as defined in \cref{eq: depolarization vector fxd}.
We then have the following backward recursion by conditioning on the number of links generated in the current time step:
\begin{equation}\label{eq: dep from pos a in fxdmux}
    f_a =\sum_{k=0}^{a-1} B(k;m,p)\lambda_\mathrm{long}\left(\lambda_\mathrm{local}^{(1)}\right)^{k}f_{a-k} +\sum_{k=a}^m B(k;m,p) \lambda_\mathrm{long}\left(\lambda_\mathrm{local}^{(2)}\right)^{a-1}.
\end{equation}
The first term on the RHS corresponds to $0\leq k<a$ links being generated so that the link at position $a$ remains unmatched, but moves down to position $a-k$ in the queue. In this case, the expected depolarization in the current step is given by $\lambda_\mathrm{long}(\lambda_\mathrm{local}^{(1)})^{k}$.
The additional future depolarization of this link has the same distribution as $\Lambda^\mathrm{fxd}(a-k)$.
The second term gives the depolarization when $a$ or more links are generated.
The factor $\lambda_\mathrm{long}$ takes into account depolarization of the link at position $a$ during the generation phase, whereas the factor $(\lambda_\mathrm{local}^{(2)})^{a-1}$ gives the expected depolarization of the matched pair until it enters the entanglement swap.  
The backward recursion in \cref{eq: dep from pos a in fxdmux} can be reformulated as the $m$-dimensional linear system \cref{eq: linear system for depolarization fxdmux} given in the statement. 
\end{proof}

\subsubsection*{DynMux}
\textbf{Markov chain:} We denote the signed queue length at the beginning of time step $n$ under the DynMux policy as $S^\mathrm{dyn}_n$. Then, $(S_n^\mathrm{dyn})_{n\geq1} \sim \Markov(P^\mathrm{dyn})$ on state space
$\mathcal{S}^\mathrm{dyn} = \{-2m+2,\dots, -1, 0, 1, \dots, 2m-2\}.$
The maximum queue size $2m-2$ under the DynMux policy is achieved as follows. 
If there is one entangled link to the left or right end node, then $2m-1$ attempts are done to the other end node. If all these attempts succeed, then the queue size at the start of the next time step is $2m-2$ because one pair of links was consumed by an entanglement swap. The transition probabilities for the signed queue length under the DynMux policy are as follows:
\begin{align}
      P^\mathrm{dyn}_{0,\ell'} &= \sum_{k=0}^{m} B(k; m,p)B(k+\ell';m,p) \quad\mathrm{for}\quad \ell'\geq0, \\
      P^\mathrm{dyn}_{\ell,\ell'} &= B(\ell-\ell'; 2m-\ell,p)\quad\mathrm{for}\quad \ell>0 \\
      P^\mathrm{dyn}_{\ell,\ell'} &=   P_{-\ell,-\ell'}^\mathrm{dyn}.
\end{align}
Here, the case $\ell=0$ is the same as for FxdMux, since in that case both policies evenly distribute $m$ attempts to the left and $m$ attempts to the right. If a time step starts in state $\ell>0$ so that there are links in queue on the right segment, then all remaining $2m-\ell$ chips attempt entanglement generation with the left end node. 
The chain transitions to state $\ell'$ when $\ell-\ell'$ of these attempts succeed, where we note that $\ell'$ may take on negative values as well. The case $\ell\leq 0$ follows by left-right symmetry.

\textbf{Stationary distribution:} The Markov chain $(S^\mathrm{dyn}_n)_{n\geq 1}$ is irreducible and aperiodic. Aperiodicity is immediate since any state can transition to itself in one step. For irreducibility, note that any state $\ell\in \{0,1,\dots, 2m-2\}$ can be reached in a single step from state $-1$ since in this case $2m-1$ attempts are performed to the right and any number may succeed yielding at most $2m-2$ links on the right segment after entanglement swapping. By symmetry any state in $\ell\in \{-2m+2,\dots, -1,0\}$ can be reached from $\ell=1$. Moreover, there is a positive probability to transition from state $\ell$ with $|\ell|\geq 1$ to state $0$ in $|\ell|$ steps by generating a single link in every time step. Since both state $-1$ and state $1$ can be reached from state $0$ in one step, irreducibility follows. Since the state space is finite, the stationary distribution $\pi^\mathrm{dyn}$ is the limiting distribution, and it can be found as the solution to a linear system of dimension $|\mathcal{S}^\mathrm{dyn}|=4m-3$ \cite[Eq. 9.31]{van2009performance}.


\textbf{Steady-state rate:}
The number of new links generated on the left segment under the DynMux policy in state $\ell$ is distributed as 
\begin{equation}
    K_{L,\ell}^{\mathrm{dyn'}} \sim 
    \begin{cases}
        \Bin(m,p) & \mathrm{if\,} \ell=0 \\
        \Bin(2m-\ell, p) & \mathrm{if\,} \ell>0
    \end{cases} \quad \text{and} \quad K_{L,\ell}^{\mathrm{dyn'}} = 0~~ \text{if}~\ell<0,
\end{equation}
and the number of new links generated on the right segment is distributed as
\begin{equation}
    K_{R,\ell}^{\mathrm{dyn'}} \sim 
    \begin{cases}
        \Bin(2m+\ell, p) & \mathrm{if}\,\ell<0 \\
        \Bin(m,p) & \mathrm{if\,} \ell=0 
    \end{cases}
     \quad \text{and} \quad K_{R,\ell}^{\mathrm{dyn'}} = 0~~ \text{if}~\ell>0,
\end{equation}
The number of entanglement swaps attempted in state $\ell$ is then 
\begin{equation}\label{eq: E dynmux}
E_{\ell}^{\mathrm{dyn'}} =
    \begin{cases}
        \min(-\ell, K_{R,\ell}^{\mathrm{dyn'}}) & \mathrm{if}\,\ell<0 \\
        \min(K_{L,0}^{\mathrm{dyn'}},K_{R,0}^{\mathrm{dyn'}}) & \mathrm{if\,} \ell=0 \\
        \min(K_{L,\ell}^{\mathrm{dyn'}}, \ell)  & \mathrm{if\,} \ell>0
    \end{cases},
\end{equation}
and the expected number of end-to-end links produced in state $\ell$ follows similarly to \cref{eq: U to E},
\begin{equation}
    \EE\left(U_\ell^{\mathrm{dyn'}}\right) =  \EE\left(E_{\ell}^{\mathrm{dyn'}}\right)
    p_\mathrm{swap}.
\end{equation}
The expected duration of the time step starting in state $\ell$ is given in terms of $\EE(U_\ell^{\mathrm{dyn'}})$ through \cref{eq: expected time step from U} to be
\begin{equation}
    \EE\left(T_\ell^{\mathrm{dyn'}}\right) = t_\mathrm{long} + \EE\left(U^{\mathrm{dyn'}}_\ell\right)\frac{t_\mathrm{swap}}{p_\mathrm{swap}}.
\end{equation}
After numerically solving for the stationary distribution $\pi^\mathrm{dyn}$, the steady-state rate $r^\mathrm{dyn}$ under the DynMux policy can be computed exactly from \cref{eq: steady-state rate steady-state average} as
\begin{equation}
    r^\mathrm{dyn}\overset{\mathrm{a.s.}}{=} \frac{\sum_{\ell\in \mathcal{S}^\mathrm{dyn}}\pi^\mathrm{dyn}_\ell\EE\left(U_{\ell}^{\mathrm{dyn'}}\right)}{\sum_{\ell\in \mathcal{S}^\mathrm{dyn}}\pi^\mathrm{dyn}_\ell\EE\left(T_{\ell}^{\mathrm{dyn'}}\right)}.
\end{equation}

\textbf{Steady-state Werner parameter:} Under the DynMux policy, we denote the number of $X$-pairs, $Y$-pairs, and $Z$-links produced in a time step starting with signed queue length $\ell$ as $(X^{\mathrm{dyn}'}_\ell,Y^{\mathrm{dyn}'}_\ell,Z^{\mathrm{dyn}'}_\ell)$.
The expected sum of future Werner parameters starting from this state can then be calculated similar to \cref{eq: expected future Werner parameter from state}, i.e.,
\begin{equation}\label{eq: expected future Werner parameter from state dyn}
    \EE\left(V_\ell^\mathrm{fut,\, dyn'}\right) = \sum_{(x,y,z) \in \NN_{\ge 0}^3}p_\ell^\mathrm{dyn}(x,y,z) \EE\left(V_\ell^\mathrm{fut,\, dyn'}\mid(x,y,z)\right),
\end{equation}
where $p_\ell^\mathrm{dyn}(x,y,z)$ and $V_\ell^\mathrm{fut,\, dyn'}|(x,y,z)$ are respectively shorthands for $\PP\big((X^{\mathrm{dyn}'}_\ell,Y^{\mathrm{dyn}'}_\ell,Z^{\mathrm{dyn}'}_\ell) = (x,y,z)\big)$ and ${V_\ell^\mathrm{fut,\, dyn'}|(X^{\mathrm{dyn}'}_\ell,Y^{\mathrm{dyn}'}_\ell,Z^{\mathrm{dyn}'}_\ell) = (x,y,z)}$ for $(x,y,z) \in \NN_{\ge 0}^3$.
For $\ell>0$ the probabilities $p_\ell^\mathrm{dyn}(x,y,z)$ can be expressed as
\begin{equation}\label{eq: prob xyz for dyn}
    p_\ell^\mathrm{dyn}(x,y,z) = 
    \begin{cases}
    B(k_L;2m-\ell,p) \indicator(x=k_L)\indicator(y=0) \indicator(z=0) &    \mathrm{if\,}x<\ell, \\
    B(k_L;2m-\ell,p) \indicator(y=0)\indicator(\ell+z = k_L) &    \mathrm{if\,}x=\ell.
    \end{cases}
\end{equation}
This is because for $x<\ell$, matching links for $X$-pairs are generated on the empty left segment from $2m-\ell$ attempts, and $y$ and $z$ are necessarily zero since no new links can be generated on the right segment when $\ell>0$. 
For $x=\ell$, all links in the queue on the right segment are matched. 
Any additional link is a $Z$-link because it starts a new queue on the left segment. 
The case $\ell<0$ follows by left-right symmetry.
Since FxdMux and DynMux coincide for the empty chain, we have 
\begin{equation}
    p_\ell^\mathrm{dyn}(x,y,z)=p_\ell^\mathrm{fxd}(x,y,z) \quad \text{for} \quad \ell=0.
\end{equation}

The expected sum of future Werner parameters from state $\ell$ conditioned on the event ${\{(X^{\mathrm{dyn}'}_\ell,Y^{\mathrm{dyn}'}_\ell,Z^{\mathrm{dyn}'}_\ell) \!=\! (x,y,z)\}}$, i.e. $\EE(V_\ell^\mathrm{fut,\, dyn'}|(x,y,z))$, can be calculated following Eq.~(\ref{eq: conditional W future}--\ref{eq: definition of lambda 0}).
Evaluating it requires the computation of the expected future idle depolarization $\EE\left(\Lambda^\mathrm{dyn}(a,b)\right)$ of the link in position $a$ in a queue of size $b$ under the DynMux policy until it enters entanglement swap. 
Unlike for FxdMux, the $\Lambda^{\mathrm{dyn}}(a,b)$ does depend on the initial queue size $b$. However, as the queue is emptied in FIFO order, the difference $d:=b-a$ between the starting position of the link and the total queue length will stay constant until the link is matched. For each $d=0,\dots, 2m-3$, we therefore define a vector $\phivec^{(d)}=(\phi^{(d)}_a)_{a=1}^{2m-2-d}$ through
\begin{equation}\label{eq: depolarization vector dyn}
    \phi_{a}^{(d)} := \EE\left(\Lambda^\mathrm{dyn}(a, a+d)\right)\qquad \mathrm{for}\quad a=1,\dots, 2m-2-d.
\end{equation}
The dimension of the vectors $\phivec^{(d)}$ follows from the fact that the total queue size is at most the maximum queue size, i.e. $a+d$ is at most $2m-2$.

In Lemma \ref{lemma: Lambda a b dynmux} we show that each $\phivec^{(d)}$ is the solution to an independent linear system of dimension $2m-2-d$. After solving these $2m-2$ linear system numerically, the expected sum of future Werner parameters conditional on $(X^{\mathrm{dyn}'}_\ell,Y^{\mathrm{dyn}'}_\ell,Z^{\mathrm{dyn}'}_\ell)$ can be computed from \cref{eq: conditional W future}, so that \cref{eq: expected future Werner parameter from state dyn} for the expected sum of future Werner parameters from state $\ell$ can be evaluated.
Using the numerical solution for $\pi^\mathrm{dyn}$, the steady-state Werner parameter $w^\mathrm{dyn}$ under the DynMux policy can thus be computed exactly from \cref{eq: steady-state Werner steady-state average} as
\begin{equation}\label{eq:steadyStateWernerDyn}
    w^\mathrm{dyn} \overset{\mathrm{a.s.}}{=} \frac{\sum_{\ell\in \mathcal{S}^\mathrm{dyn}}\pi^\mathrm{dyn}_\ell\EE\left(V_{\ell}^{\mathrm{fut,\,dyn'}}\right)}{\sum_{\ell\in \mathcal{S}^\mathrm{dyn}}\pi^\mathrm{dyn}_\ell\EE\left(U_{\ell}^{\mathrm{dyn'}}\right)}.
\end{equation}

\begin{lemma}\label{lemma: Lambda a b dynmux}
    Let $d\in \{0,\dots, 2m-3\}$ and let $\phivec^{(d)} = (\phi^{(d)}_a)_{a=1}^{2m-2-d}$ be the vector of expected future idle depolarization parameters incurred by a link starting from position $a$ in the queue with initial queue size $a+d$ until entering entanglement swapping under the DynMux policy, as defined in \cref{eq: depolarization vector dyn}. Then,
    \begin{equation}\label{eq: linear system for depolarization dynmux}
        (I-\Gamma^{(d)})\phivec^{(d)} = \nuvec^{(d)},
    \end{equation}
    where the matrix $\Gamma^{(d)}=(\Gamma^{(d)}_{a,a'})_{a,a'=1}^{2m-2-d}$ is given by
    \begin{equation}
    \Gamma^{(d)}_{a,a'} = \begin{cases}
        B(a-a';2m-a-d,p)\lambda_\mathrm{long}\left(\lambda_\mathrm{local}^{(1)}\right)^{a-a'} & \mathrm{if\,} a\geq a' \\
        0 &\mathrm{otherwise}
    \end{cases}
    \end{equation}
and the vector $\nuvec^{(d)} = (\nu^{(d)}_a)_{a=1}^{2m-2-d}$ is given by
\begin{equation}
    \nu^{(d)}_a = \sum_{k=a}^{2m-a-d} B(k;2m-a-d,p) \lambda_\mathrm{long}\left(\lambda_\mathrm{local}^{(2)}\right)^{a-1}.
\end{equation}
\end{lemma}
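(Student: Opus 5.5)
The plan mirrors the proof of Lemma~\ref{lemma: Lambda a b fxdmux}: derive a one-step backward recursion for $\phi^{(d)}_a$ by conditioning on the number of links generated in the current time step, and then read off the lower-triangular linear system. Fix $d$ and consider a link at position $a$ in a queue of total size $b=a+d$ at the start of a time step; by symmetry, take the queue on the right segment. Under DynMux all $2m-b=2m-a-d$ free chips attempt generation with the left end node, so the number of new left links is $K\sim\Bin(2m-a-d,p)$. This count depends only on $(a,d)$, which is why $\Gamma^{(d)}$ and $\nuvec^{(d)}$ carry the binomial parameter $2m-a-d$.

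\textbf{Case 1: $0\le k<a$ links are generated.} The first $k$ queued links are matched FIFO and swapped, and no new links join the right queue. During this step our link incurs $\lambda_\mathrm{long}$ from the Sync-Gen phase and, in expectation, a factor $\lambda_\mathrm{local}^{(1)}$ for each of the $k$ swaps performed ahead of it; these swap durations are iid and independent of the future. At the next step the link sits at position $a-k$ in a queue of size $a+d-k$, so $d$ is unchanged. By the Markov property and FIFO ordering, its remaining depolarization is distributed as $\Lambda^\mathrm{dyn}(a-k,a-k+d)$, whose expectation is $\phi^{(d)}_{a-k}$. Writing $a'=a-k$ gives the entries $B(a-a';2m-a-d,p)\lambda_\mathrm{long}(\lambda_\mathrm{local}^{(1)})^{a-a'}$ of $\Gamma^{(d)}$.

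\textbf{Case 2: $k\ge a$ links are generated.} Our link is matched in this step. Its depolarization up to entering the swap is $\lambda_\mathrm{long}$ times the expected idle depolarization during the $a-1$ earlier swaps. Following the convention of Lemma~\ref{lemma: Lambda a b fxdmux}, this is counted for the matched pair, giving $(\lambda_\mathrm{local}^{(2)})^{a-1}$. The $a-1$ earlier swaps all precede the $a$-th in FIFO order and occur after generation, so any surplus links beyond $a$ do not affect this factor. Summing over $k$ from $a$ to $2m-a-d$ yields $\nu^{(d)}_a$.

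Adding the two cases gives $\phi^{(d)}_a=\sum_{a'}\Gamma^{(d)}_{a,a'}\phi^{(d)}_{a'}+\nu^{(d)}_a$, which is exactly $(I-\Gamma^{(d)})\phivec^{(d)}=\nuvec^{(d)}$. Since $a+d\le 2m-2$ for all relevant entries, the index range is closed. Moreover, $I-\Gamma^{(d)}$ is lower triangular with diagonal entries $1-(1-p)^{2m-a-d}\lambda_\mathrm{long}>0$, so the system has a unique solution obtainable by forward substitution.

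The main obstacle is justifying that the future depolarization really depends only on $(a,d)$ and not on other history. This requires that DynMux's attempt count is determined solely by the current queue size, that the queue stays on the same side until our link is matched (no new links join a nonempty queue under DynMux), and that idle factors multiply independently across steps.
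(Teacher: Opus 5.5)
Your proposal is correct and follows the paper's proof. Both condition on the number $k\sim\Bin(2m-a-d,p)$ of new links. The $k<a$ branch moves the link to position $a-k$ in a queue of size $a-k+d$, so $d$ is preserved; the $k\ge a$ branch contributes $\lambda_\mathrm{long}(\lambda_\mathrm{local}^{(2)})^{a-1}$. Your remarks that $I-\Gamma^{(d)}$ is lower triangular and uniquely solvable, and that the future depends only on $(a,d)$, are correct additions the paper leaves implicit.
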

\begin{proof}
Let $d\in \{0,\dots, 2m-3\}$ and let $\phivec^{(d)} = (\phi^{(d)}_a)_{a=1}^{2m-2-d}$ be the vector of expected future idle depolarization parameters incurred by a link starting from position $a$ in the queue with initial queue size $a+d$ until entering entanglement swapping under the DynMux policy.
Then we have the backward recursion
\begin{equation}\label{eq: depolarization recursion for dynmux}
    \phi_a^{(d)}= 
    \sum_{k=0}^{a-1}B(k;2m-a-d,p)\lambda_\mathrm{long}\left(\lambda_\mathrm{local}^{(1)}\right)^{k} \phi_{a-k}^{(d)}
    +
    \sum_{k=a}^{2m-a-d} B(k;2m-a-d,p)\lambda_\mathrm{long}\left(\lambda_\mathrm{local}^{(2)}\right)^{a-1},
\end{equation}
where the first term on the RHS corresponds to the case that less than $a$ new links are generated so that in the next time step the link of interest starts at position $a-k$ in a queue of size $a-k+d$ from where the expected idle depolarization until entanglement swapping is $\phi_{a-k}^{(d)}$. The second term corresponds to the case that $a$ or more links are generated so that the link at position $a$ is matched. It depolarizes by $\lambda_\mathrm{long}$ because it was stored for a Sync-Gen cycle, and it is expected to experience a further depolarization $(\lambda_\mathrm{local}^{(2)})^{a-1}$ while waiting for the pairs that precede it in queue to be swapped. 
The backward recursion in \cref{eq: depolarization recursion for dynmux} can be reformulated as the linear system in the statement.
\end{proof}

\subsection{OSS Approximation}\label{app: oss markov chain models}

In this section we provide the details of the OSS approximation introduced in \cref{section: markov chain oss}. Closed-form solutions for the steady-state rate and steady-state Werner parameter in the OSS approximations for FxdMux and DynMux are obtained and the results are collected in \cref{tab:OSS approximations rate and werner}. We assume that $2mp<1$.


\subsubsection*{FxdMux OSS}
\textbf{Markov chain:} We start by describing the OSS Markov chain for the signed queue length under the FxdMux policy.
We denote the signed queue length at the beginning of the time step $n$ under the FxdMux policy in the OSS approximation by $\widetilde S_n^\mathrm{fxd}$.
Then, $(\widetilde S_n^\mathrm{fxd})_{n\geq1} \sim \Markov(\widetilde P^\mathrm{fxd})$ with state space $\mathcal{\widetilde S}^\mathrm{fxd} := \{-m,\dots, -1, 0, 1, \dots, m\}$ and the transition probability matrix $\widetilde P^\mathrm{fxd}$ is obtained from the transition matrix of the exact Markov chain by retaining only the terms up to first order in $p$. In this sense only transitions involving at most one new link are taken into account. Retaining only the terms up to first order in $p$ in \cref{eq:pfixed1} and \cref{eq:pfixed2}, the nonzero transition probabilities in the OSS approximation for $\ell\geq 0$ are
\begin{equation}\label{eq: oss fxd tpm}
    \widetilde P_{\ell,\ell-1}^\mathrm{fxd} = mp, \qquad \widetilde P_{\ell,\ell}^\mathrm{fxd} = 1-(2m-\ell)p, \qquad\mathrm{and}\qquad \widetilde P_{\ell,\ell+1}^\mathrm{fxd} = (m-\ell)p,
\end{equation}
and the nonzero transition probabilities for $\ell<0$ are defined by the symmetry $\widetilde P^\mathrm{fxd}_{\ell,\ell'}=\widetilde P^\mathrm{fxd}_{-\ell,-\ell'}$. 

\textbf{Stationary distribution:} The stationary distribution for the OSS approximation of the FxdMux policy is given by
\begin{equation}\label{eq: steady-state oss fxdmux}
        \widetilde\pi_\ell^\mathrm{fxd} = \left[1+2\sum_{k=1}^m\frac{m!}{(m-k)!m^{k}}\right]^{-1}  \frac{m!}{(m-\ell)!m^{\ell}} \qquad \mathrm{for}\quad \ell \geq 0,
\end{equation}
and for $\ell<0$ it follows from the left-right symmetry. This can be verified by directly checking the stationary equation $\widetilde\pi_{\ell'}^\mathrm{fxd} = \sum_{\ell=-m}^m\widetilde\pi_{\ell}^\mathrm{fxd} \widetilde P^\mathrm{fxd}_{\ell,\ell'}$ and the normalization condition $\sum_{\ell={-m}}^m \widetilde\pi_\ell^\mathrm{fxd}=1$. 
For further use we define the shorthand 
\begin{equation}\label{eq: sigma shorthand}
    \sigma(m) := \sum_{\ell=1}^m\frac{m!}{(m-\ell)!m^{\ell}}.
\end{equation}

\textbf{Steady-state rate:}
We let $\widetilde U_\ell^\mathrm{fxd'}$ denote the number of end-to-end links produced in a time step starting in state $\ell$ under the FxdMux policy in the OSS approximation. In the OSS approximation there can be at most one entanglement swap in a time step. Hence, similar to \cref{eq: number of links RV ell}, we have 
\begin{equation}
    \widetilde U^{\mathrm{fxd}'}_\ell \overset{d}{=}\indicator(\widetilde E^{\mathrm{fxd}'}_\ell=1)\indicator(N'_{\mathrm{local}}\leq c_\mathrm{local}),
\end{equation}
where $\widetilde E_\ell^\mathrm{fxd'}$ is the number of entanglement swaps.
In the OSS approximation for the FxdMux policy there is an entanglement swap if and only if the absolute queue size decreases so that 
\begin{equation}
    \widetilde E_{\ell}^{\mathrm{fxd}'} \sim
    \begin{cases}
        \mathrm{Bern}\left(\widetilde P^\mathrm{fxd}_{\ell,\ell+1}\right) & \mathrm{if}\,\ell < 0 \\[8pt] 
        \mathrm{Bern}\left(\widetilde P^\mathrm{fxd}_{\ell,\ell-1}\right) & \mathrm{if}\,\ell > 0
    \end{cases}  \quad \text{and} \quad \widetilde E_{\ell}^{\mathrm{fxd}'} = 0~~ \text{if}~\ell=0.
\end{equation}

From \cref{eq: U to E} and \cref{eq: oss fxd tpm} the expected number of end-to-end links produced from a time step starting in state $\ell$ in the OSS approximation is then 
\begin{equation}
    \EE\left(\widetilde U_\ell^\mathrm{fxd'}\right) =
    \begin{cases}
        (mp) p_\mathrm{swap}  & \mathrm{if\,} \ell\neq 0 \\
        0 & \mathrm{otherwise}
    \end{cases}.
\end{equation}
Using the stationary distribution from \cref{eq: steady-state oss fxdmux} the expected number of end-to-end links produced in the steady-state in the OSS approximation is 

\begin{equation}\label{eq: U oss fxd}
    \mu_{\widetilde U^{\mathrm{fxd}}}= \sum_{\ell=-m}^m \widetilde \pi_\ell^\mathrm{fxd} \EE\left(\widetilde U_\ell^\mathrm{fxd'}\right)= \frac{2\sigma(m)}{1+2\sigma(m)}(mp)p_\mathrm{swap}
\end{equation}
The duration of the time step starting in state $\ell$ follows from \cref{eq: expected time step from U} as 
\begin{equation}\label{eq: T oss fxd}
    \mu_{\widetilde T^\mathrm{fxd}} = t_\mathrm{long} + \mu_{\widetilde U^\mathrm{fxd}}\frac{t_\mathrm{swap}}{p_\mathrm{swap}} .
\end{equation}
Using \cref{theorem: steady-state rate ergodic}, \cref{eq: U oss fxd}, and \cref{eq: T oss fxd}, the final equation for the steady-state rate in the OSS approximation for FxdMux then becomes 
\begin{align}
    \widetilde r^\mathrm{fxd} &\overset{\mathrm{a.s.}}{=} \frac{\mu_{\widetilde U^\mathrm{fxd}}}{ \mu_{\widetilde T^\mathrm{fxd}}} = \left[\frac{1+2\sigma(m)}{2\sigma(m)}\frac{t_\mathrm{long}}{mp} + t_\mathrm{swap}\right]^{-1}p_\mathrm{swap}. \label{eq: r fxd oss}
\end{align}
We note that a similar result for the steady-state rate in the OSS approximation of FxdMux was obtained previously in \cite{kunzelmann2025multiplexed}.

\textbf{Steady-state Werner parameter:} Under the FxdMux policy in the OSS approximation, we denote the number of $X$-pairs, $Y$-pairs, and $Z$-links produced in a time step starting with signed queue length $\ell$ as $(\widetilde X^{\mathrm{fxd}'}_\ell,\widetilde Y^{\mathrm{fxd}'}_\ell,\widetilde Z^{\mathrm{fxd}'}_\ell)$.
The expected sum of future Werner parameters starting from this state then can be calculated similar to \cref{eq: expected future Werner parameter from state}, i.e.,
\begin{equation}\label{eq: expected future Werner parameter from state fxd oss}
    \EE\left(\widetilde V_\ell^\mathrm{fut,\, fxd'}\right) = \sum_{(x,y,z) \in \NN_{\ge 0}^3}\widetilde p_\ell^\mathrm{fxd}(x,y,z) \EE\left(\widetilde V_\ell^\mathrm{fut,\, fxd'}\mid(x,y,z)\right),
\end{equation}
where $\widetilde p_\ell^\mathrm{fxd}(x,y,z)$ and $\widetilde V_\ell^\mathrm{fut,\, fxd'}|(x,y,z)$ are respectively shorthands for $\PP\big((\widetilde X^{\mathrm{fxd}'}_\ell,\widetilde Y^{\mathrm{fxd}'}_\ell,\widetilde Z^{\mathrm{fxd}'}_\ell) = (x,y,z)\big)$ and ${\widetilde V_\ell^\mathrm{fut,\, fxd'}|(\widetilde X^{\mathrm{fxd}'}_\ell,\widetilde Y^{\mathrm{fxd}'}_\ell,\widetilde Z^{\mathrm{fxd}'}_\ell) = (x,y,z)}$ for $(x,y,z) \in \NN_{\ge 0}^3$. In the OSS approximation, at most one link is generated in every time step so that $\widetilde Y^{\mathrm{fxd}'}_\ell=0$ for all $\ell$, while $\widetilde X^{\mathrm{fxd'}}_\ell$ and $\widetilde Z^{\mathrm{fxd'}}_\ell$ are Bernoulli random random variables.
Since the future Werner parameter only has contributions from $Z$-links, we have that
\begin{equation}
    \EE\left(\widetilde V_\ell^\mathrm{fut,\, fxd'}\right) =\widetilde p_\ell^{\mathrm{fxd}}(0,0,1)\EE\left(\widetilde V^\mathrm{fut,\, fxd'}_\ell\mid(0,0,1)\right).
\end{equation}
The probability to generate one $Z$-link is equal to the probability that the queue length increases by $1$. Hence, 
\begin{equation}
    \widetilde p_\ell^\mathrm{fxd}(0,0,1)=\begin{cases}
        \widetilde P^\mathrm{fxd}_{\ell,\ell-1} & \mathrm{if}\,-m<\ell<0\\
        \widetilde P_{0,1}^\mathrm{fxd}+\widetilde P_{0,-1}^\mathrm{fxd} & \mathrm{if\,}\ell=0\\
        \widetilde P^\mathrm{fxd}_{\ell,\ell+1} & \mathrm{if}\,0<\ell<m
    \end{cases}.
\end{equation}
The expected future Werner parameter conditional on one $Z$-link, $\EE(\widetilde V^\mathrm{fut,\, fxd'}_\ell\mid(0,0,1))$, is of the form of \cref{eq: conditional W future} and in the OSS approximation it becomes
\begin{equation}
    \EE\left(\widetilde V^\mathrm{fut,\, fxd'}_\ell\mid(0,0,1)\right) = \EE\left(\widetilde\Lambda^\mathrm{fxd}(|\ell|+1)\right) \lambda_0,
\end{equation}
where $\EE(\widetilde\Lambda^{\mathrm{fxd}}(|\ell|+1))$ is the expected idle depolarization for a link starting in position $|\ell|+1$ in the queue until it gets matched in the OSS approximation for FxdMux. We have that 
\begin{equation}
\EE\left(\widetilde\Lambda^\mathrm{fxd}({|\ell|+1})\right) = \left(\lambda_\mathrm{local}^{(1)}\right)^{|\ell|}\left(\EE\left(\lambda_\mathrm{long}^{\widetilde D^\mathrm{fxd}}\right)\right)^{|\ell|+1},
\end{equation}
where $\widetilde D^\mathrm{fxd}\sim \Geo(mp)$ is the number of time steps required to generate a matching link for the first link in the queue in the OSS approximation for FxdMux, and $\lambda_\mathrm{local}^{(1)} $ is the depolarization incurred by a link while waiting for entanglement swaps between another pair of links to be completed. After matching and swapping the first $|\ell|$ links in the queue, the link that started at position $|\ell|+1$ has become the first link in the queue. This gives the factors $(\lambda_\mathrm{local}^{(1)})^{|\ell|}(\EE(\lambda_\mathrm{long}^{\widetilde D^\mathrm{fxd}}))^{|\ell|}$. It then has to wait another $\widetilde D^\mathrm{fxd}$ steps until being matched itself, which gives the final factor of $\EE(\lambda_\mathrm{long}^{\widetilde D^\mathrm{fxd}})$.

The steady-state state average of the future Werner parameter in the OSS approximation for FxdMux then becomes
\begin{align}\label{eq: V oss fxd}
    \mu_{\widetilde V^\mathrm{fut,\, fxd}} &= \sum_{\ell=-m}^m \widetilde\pi_{\ell}^\mathrm{fxd}\EE\left(\widetilde V_\ell^\mathrm{fut,\, fxd'}\right) = 2\sum_{\ell=0}^{m-1} \widetilde\pi_{\ell}^\mathrm{fxd}\widetilde P^\mathrm{fxd}_{\ell,\ell+1}\left(\lambda_\mathrm{local}^{(1)}\right)^\ell\left(\EE\left(\lambda_\mathrm{long}^{\widetilde D^\mathrm{fxd}}\right)\right)^{\ell+1}\lambda_0,
\end{align}
where we have used the left-right symmetry.
It follows from \cref{theorem: main steady-state werner fut ergodic}, \cref{eq: U oss fxd}, and \cref{eq: V oss fxd}, that the steady-state Werner parameter in the OSS approximation for FxdMux is
\begin{align}
    \widetilde w^\mathrm{fxd} &\overset{\mathrm{a.s}}{=} \frac{\mu_{\widetilde V^{\mathrm{fut,\, fxd}}}}{\mu_{\widetilde U^\mathrm{fxd}}} \\
    &= \frac{2\sum_{\ell=0}^{m-1}\widetilde\pi_{\ell}^\mathrm{fxd} \widetilde P^\mathrm{fxd}_{\ell,\ell+1}\left(\lambda_\mathrm{local}^{(1)}\right)^\ell\EE\left(\lambda_\mathrm{long}^{\widetilde D^\mathrm{fxd}}\right)^{\ell+1}\lambda_0}{\frac{2\sigma(m)}{1+2\sigma(m)}(mp)p_\mathrm{swap}}
    \\
    &=   \sum_{\ell=0}^{m-1}\frac{1}{\sigma(m)}\frac{(m-1)!}{(m-\ell-1)!m^\ell}\left(\lambda_\mathrm{local}^{(1)}\right)^\ell\EE\left(\lambda_\mathrm{long}^{\widetilde D^\mathrm{fxd}}\right)^{\ell+1}\frac{\lambda_0}{p_\mathrm{swap}} \label{eq: w fxd oss}
\end{align}
where for the final equality we used that $\widetilde\pi_\ell^\mathrm{fxd}(1+2\sigma(m)) = m!/((m-\ell)!m^\ell)$ by \cref{eq: steady-state oss fxdmux} and $\widetilde P^\mathrm{fxd}_{\ell,\ell+1}= (m-\ell)p$ for $\ell\geq 0$ by \cref{eq: oss fxd tpm}.


\subsubsection*{DynMux OSS} 

\textbf{Markov chain:} We denote the signed queue length at the beginning of the time step $n$ under the DynMux policy in the OSS approximation as $\widetilde S_n^\mathrm{dyn}$. Then, $(\widetilde S_n^\mathrm{dyn})_{n\geq1} \sim \Markov(\widetilde P^\mathrm{dyn})$ with state space $\mathcal{\widetilde S}^\mathrm{dyn} := \{-1, 0, 1\}$ and the transition matrix $\widetilde P^\mathrm{dyn}$ is given by
\begin{equation}\label{eq: tpm oss dyn}
    \widetilde P^\mathrm{dyn} = 
    \begin{pmatrix}
    1-(2m-1)p & (2m-1)p & 0 \\ 
    mp & 1 - 2mp & mp \\ 
    0 & (2m-1)p & 1 - (2m-1)p    
    \end{pmatrix}.
\end{equation}
This transition probability matrix is the first order $p$ approximation of the exact transition probability matrix $P^\mathrm{dyn}$ restricted to $\{-1,0,1\}$. The reason for restricting the state space to the states $\{-1,0,1\}$ in the OSS approximation for DynMux is that these are the only recurrent states after approximating $P^\mathrm{dyn}$ to first order in $p$. 
    
\textbf{Stationary distribution:} The Markov chain $(\widetilde S^\mathrm{dyn})_{n\geq 1}$ is irreducible on a finite state space. The stationary distribution is given by 
    \begin{equation}
        \widetilde \pi_0^\mathrm{dyn} = \frac{2m-1}{4m-1} \qquad\mathrm{and}\qquad \widetilde \pi_{1}^\mathrm{dyn}=\widetilde \pi_{-1}^\mathrm{dyn} = \frac{m}{4m-1},
    \end{equation}
    as may be directly verified by checking the stationarity equation $\widetilde\pi_{\ell'}^\mathrm{dyn} = \sum_{\ell=-1}^1\widetilde\pi_{\ell}^\mathrm{dyn} \widetilde P^\mathrm{dyn}_{\ell,\ell'}$ and the normalization condition $\sum_{\ell=-1}^1 \widetilde\pi_\ell^\mathrm{dyn}=1$. 

\textbf{Steady-state rate:}
In the OSS approximation for DynMux, there is an entanglement swap if and only if the absolute queue size decreases. Hence, the number of entanglement swaps in a time step starting from state $\ell$ under the DynMux policy is
\begin{equation}
    \widetilde E_{\ell}^{\mathrm{dyn}'} \sim 
      \begin{cases}
        \mathrm{Bern}\left(\widetilde P^\mathrm{dyn}_{-1,0}\right) & \mathrm{if}\,\ell=-1 \\
        \mathrm{Bern}\left(\widetilde P^\mathrm{dyn}_{1,0}\right) & \mathrm{if}\,\ell =1
    \end{cases}   \quad \text{and} \quad \widetilde E_{\ell}^{\mathrm{dyn}'} = 0~~ \text{if}~\ell=0,
\end{equation}
and similar to \cref{eq: number of links RV ell} we have $\widetilde U^{\mathrm{dyn}'}_\ell \overset{d}{=}\indicator(\widetilde E^{\mathrm{dyn}'}_\ell=1)\indicator(N'_{\mathrm{local}}\leq c_\mathrm{local})$.
From \cref{eq: U to E} and \cref{eq: tpm oss dyn} the expected number of end-to-end links produced from a time step starting in state $\ell$ in the OSS approximation is then 
\begin{equation}
    \EE\left(\widetilde U_\ell^\mathrm{dyn'}\right) =
    \begin{cases}
        ((2m-1)p) p_\mathrm{swap}  & \mathrm{if\,} \ell\neq 0 \\
        0 & \mathrm{otherwise}
    \end{cases}.
\end{equation}
    The expected number of end-to-end links produced in the steady state in the OSS approximation is thus
    \begin{equation}\label{eq: oss U dyn}
        \mu_{\widetilde U^\mathrm{dyn}} = \sum_{\ell=-1}^1 \widetilde\pi^\mathrm{dyn}_\ell \EE\left(\widetilde U_\ell^\mathrm{dyn'}\right)= \frac{2m(2m-1)p}{4m-1}p_\mathrm{swap}
    \end{equation}
    Further, the expected duration of the time step follows using \cref{eq: expected time step from U} as
\begin{equation}\label{eq: oss T dyn}
    \mu_{\widetilde T^\mathrm{dyn}} = t_\mathrm{long} + \mu_{\widetilde U^\mathrm{dyn}}\frac{t_\mathrm{swap}}{p_\mathrm{swap}}.
\end{equation}
    It follows from \cref{theorem: steady-state rate ergodic}, \cref{eq: oss U dyn}, and \cref{eq: oss T dyn} that the steady-state rate for DynMux in the OSS approximation is
    \begin{align}
        \widetilde r^\mathrm{dyn} &\overset{\mathrm{a.s.}}{=} \frac{\mu_{\widetilde U^\mathrm{dyn}}}{\mu_{\widetilde T^\mathrm{dyn}}} = 
        \left[\frac{4m-1}{2m}\frac{t_\mathrm{long}}{(2m-1)p}+t_\mathrm{swap}\right]^{-1}p_\mathrm{swap}. \label{eq: r dyn oss}
    \end{align}

\textbf{Steady-state Werner parameter:}
Under the DynMux policy in the OSS approximation, we denote the number of $X$-pairs, $Y$-pairs, and $Z$-links produced in a time step starting with signed queue length $\ell$ as $(\widetilde X^{\mathrm{dyn}'}_\ell,\widetilde Y^{\mathrm{dyn}'}_\ell,\widetilde Z^{\mathrm{dyn}'}_\ell)$.
The expected sum of future Werner parameters starting from this state then can be calculated similar to \cref{eq: expected future Werner parameter from state}, i.e.,
\begin{equation}\label{eq: expected future Werner parameter from state dyn oss}
     \EE\left(\widetilde V_\ell^\mathrm{fut,\, dyn'}\right) = \sum_{(x,y,z) \in \NN_{\ge 0}^3}\widetilde p_\ell^\mathrm{dyn}(x,y,z) \EE\left(\widetilde V_\ell^\mathrm{fut,\, dyn'}\mid(x,y,z)\right),
\end{equation}
where $\widetilde p_\ell^\mathrm{dyn}(x,y,z)$ and $\widetilde V_\ell^\mathrm{fut,\, dyn'}|(x,y,z)$ are respectively shorthands for $\PP\big((\widetilde X^{\mathrm{dyn}'}_\ell,\widetilde Y^{\mathrm{dyn}'}_\ell,\widetilde Z^{\mathrm{dyn}'}_\ell) = (x,y,z)\big)$ and ${\widetilde V_\ell^\mathrm{fut,\, dyn'}|(\widetilde X^{\mathrm{dyn}'}_\ell,\widetilde Y^{\mathrm{dyn}'}_\ell,\widetilde Z^{\mathrm{dyn}'}_\ell) = (x,y,z)}$ for $(x,y,z) \in \NN_{\ge 0}^3$. In the OSS approximation, at most one link is generated in every time step so that $\widetilde Y^{\mathrm{dyn}'}_\ell=0$ for all $\ell$, while $\widetilde X^{\mathrm{dyn'}}_\ell$ and $\widetilde Z^{\mathrm{dyn'}}_\ell$ are Bernoulli random random variables.
Since the future Werner parameter only has contributions from $Z$-links, we have that
\begin{equation}
     \EE\left(\widetilde V_\ell^\mathrm{fut,\, dyn'}\right) =\widetilde p_\ell^{\mathrm{dyn}}(0,0,1)\EE\left(\widetilde V^\mathrm{fut,\,dyn'}_\ell\mid(0,0,1)\right).
\end{equation}
The probability to generate one $Z$-link is equal to the probability that the queue length increases by $1$. For the DynMux policy in the OSS approximation this can only happen in state $\ell=0$, so that
\begin{equation}
    \widetilde p_\ell^\mathrm{dyn}(0,0,1)=\begin{cases}
    \widetilde P_{0,1}^\mathrm{dyn}+\widetilde P_{0,-1}^\mathrm{dyn} & \mathrm{if\,}\ell=0\\
        0 & \mathrm{otherwise}
    \end{cases}.
\end{equation}
The expected future Werner parameter conditional on one $Z$-link, $\EE(\widetilde V^\mathrm{fut,\, dyn'}_\ell\mid(0,0,1))$, is of the form of \cref{eq: conditional W future}. For $\ell=0$ in the OSS approximation it becomes
\begin{equation}
    \EE\left(\widetilde V^\mathrm{fut,\,dyn'}_0\mid(0,0,1)\right) = \EE\left(\widetilde\Lambda^\mathrm{dyn}(1,1)\right) \lambda_0,
\end{equation}
where $\EE(\widetilde\Lambda^{\mathrm{dyn}}(1,1))$ is the expected future idle depolarization for a link starting in position $1$ in a queue of size $1$ until it gets matched under the OSS approximation for the DynMux policy. The number of time steps until the link in queue gets matched is $\widetilde D^\mathrm{dyn}\sim \Geo((2m-1)p)$. Hence,
\begin{equation}
\EE\left(\widetilde\Lambda^\mathrm{dyn}(1,1)\right) = \EE\left(\lambda_\mathrm{long}^{\widetilde D^\mathrm{dyn}}\right).
\end{equation}

The steady-state state average of the future Werner parameter in the OSS approximation for DynMux then becomes
\begin{align}\label{eq: oss V dyn}
    \mu_{\widetilde V^\mathrm{fut,\, dyn}} &= \widetilde \pi_0^\mathrm{dyn}  \EE\left(\widetilde V_0^\mathrm{fut,\, dyn'}\right) = \frac{(2m-1)2mp}{4m-1}\EE\left(\lambda_\mathrm{long}^{\widetilde D^\mathrm{dyn}}\right)\lambda_0,
\end{align}
where we've used that $\widetilde P_{0,1}^{\mathrm{dyn}} +\widetilde P_{0,-1}^{\mathrm{dyn}} =2mp.$
It follows from \cref{theorem: main steady-state werner fut ergodic}, \cref{eq: oss U dyn}, and \cref{eq: oss V dyn} that the steady-state Werner parameter in the OSS approximation for DynMux is
\begin{align}
    \widetilde w^\mathrm{dyn} &\overset{\mathrm{a.s}}{=} \frac{\mu_{\widetilde V^{\mathrm{fut,\,dyn}}}}{\mu_{\widetilde U^\mathrm{dyn}}} \\
    &= \frac{\frac{(2m-1)2mp}{4m-1}\EE\left(\lambda_\mathrm{long}^{\widetilde D^\mathrm{dyn}}\right)\lambda_0}{\frac{2m(2m-1)p}{4m-1}p_\mathrm{swap}}
    \\
    &= \EE\left(\lambda_\mathrm{long}^{\widetilde D^\mathrm{dyn}}\right)\frac{\lambda_0}{p_\mathrm{swap}}. \label{eq: w dyn oss}
\end{align}

\section{Steady-state QBER and Secret Key Rate Heuristic}\label{app:skr}
In this appendix, we derive the steady-state estimated Quantum Bit Error Rate (QBER) for the multiplexed repeater.
We specifically show that, like the steady-state rate and Werner parameter, the steady-state estimated QBER is also almost surely a constant.
We then use it to provide a heuristic estimate of the secret key rate of the repeater when we run the key distribution protocol described in Appendix \ref{sec:steadyStateQBER}, which is an entanglement-based implementation of the BB84 protocol~\cite{bennett1984quantum}.
We also explore the dependence of the estimated secret key rate on the cutoff for local entanglement generation attempts. In particular, we combine the exact analysis and the OSS approximation of Appendix \ref{app: markov applied} to heuristically optimize the secret key rate over the cutoff.

\subsection{Steady-state QBER}\label{sec:steadyStateQBER}

We consider the following asymmetric implementation of the protocol~\cite{tomamichel2012tight}, which is run for time steps $n \le N$ and we estimate the QBER at the end of time step $N$.
At time step $n$, user Alice at the left end node measures each of the $U_n$ generated links in basis $\mathbb{Z}$ with probability $p_n^\text{z}$ and in basis $\mathbb{X}$ with probability $1-p_n^\text{z}$, where $p_n^\text{z} \to 0$ as $n\to \infty$.
User Bob at the right end node does the same.
The measurement outcomes from the links for which both Alice and Bob choose basis $\mathbb{Z}$ are used for QBER estimation, whereas the measurement outcomes for which both choose basis $\mathbb{X}$ are used to make the key.
The QBER converges to the desired limit even when $p_n^\text{z}$ is kept constant across time steps, although we prove the result for the following efficient implementation of the protocol where $p_n^\text{z} \to 0$.

\begin{description}

\item[State preparation and distribution]
At time step $n$, $U_n$ end-to-end links are produced, each described by a Werner state of the form \cref{eq: werner state}.

\item[Measurement]
For each of the $U_n$ links produced, Alice chooses the measurement basis $\mathbb{Z}$ with probability $p_n^\text{z}$ where $(p_n^\text{z})^2=1/(2 \sqrt{n})$ and basis $\mathbb{X}$ with probability $p_n^\text{x}:=1-p_n^\text{z}$ for her half of the pair.
Bob does the same for his half, independently of Alice.
Since the ideal link is given by the singlet state, Bob flips his measurement outcomes.
Both store their outcomes in their respective classical registers.
We assume that the measurement process is instantaneous, i.e., there is no further depolarization of the generated links during this process.

\item[Sifting]
After time step $N$, Alice and Bob broadcast their basis choices over the classical channel and keep only the outcomes for which their bases match. 
We define the following quantities.
\begin{itemize}[itemsep=1pt, topsep=1.5pt]
    \item[] $B^\text{z}_n$: number of outcomes measured in $\mathbb{Z}$ basis in time step $n$ for which the bases agree.
    \item[] $Q_n$: number of disagreements between Bob's flipped measurements and Alice's measurement among the $B^\text{z}_n$ outcomes.
\end{itemize}
%
%
%
\item[QBER estimation]
Alice and Bob estimate QBER as 
\begin{equation}\label{eq:qber}
    \bar{Q}_N = \frac{\sum_{n=1}^N Q_n}{\sum_{n=1}^N B^\mathrm{z}_n}.
\end{equation}
The protocol aborts if $\bar{Q}_N$ exceeds a predetermined threshold.
Otherwise, they proceed to the next step.

\item[Classical post processing]
Alice and Bob perform error correction and privacy amplification to obtain a key out of the outcomes where both measured their half of the pair in $\mathbb{X}$ basis.
\end{description}

The key result of this appendix is that the estimated QBER $\bar{Q}_N$ converges almost surely to the anticipated constant limit as formalized below.

\begin{theorem}\label{lemma:qberN}
The estimated QBER satisfies
\begin{align*}
    \bar{Q}_N \overset{\mathrm{a.s.}}{\longrightarrow} \frac{1-w}{2}.
\end{align*}
where $w$ is the steady-state Werner parameter as defined in \cref{eq:steadyStateWerner}.
\end{theorem}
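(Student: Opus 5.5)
Write $\omega_{n,j}$ for the Werner parameter of the $j$-th link produced in time step $n$. For a Werner state $\rho_\omega$, the flipped $\mathbb{Z}$-outcomes of Alice and Bob disagree with probability $(1-\omega)/2$. Conditionally on the whole repeater history $\mathcal{F}$ (queue process, link counts, Werner parameters), the basis choices and measurement outcomes are therefore independent across links. Each link contributes an indicator $\xi_{n,j}\sim\mathrm{Bern}((p_n^\mathrm{z})^2)$ to $B^\mathrm{z}_n$. It contributes $\xi_{n,j}\zeta_{n,j}$ to $Q_n$, where $\zeta_{n,j}\sim\mathrm{Bern}((1-\omega_{n,j})/2)$ is independent of $\xi_{n,j}$.

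The plan is to show separately that
\[
\frac{\sum_{n\le N} B^\mathrm{z}_n}{\sum_{n\le N}(p_n^\mathrm{z})^2U_n}\to 1
\quad\text{and}\quad
\frac{\sum_{n\le N}\bigl(Q_n-(p_n^\mathrm{z})^2\tfrac{U_n-V_n}{2}\bigr)}{\sum_{n\le N}(p_n^\mathrm{z})^2U_n}\to 0
\]
almost surely. Then
\[
\bar Q_N\approx \frac{\sum_{n\le N}(p_n^\mathrm{z})^2(U_n-V_n)/2}{\sum_{n\le N}(p_n^\mathrm{z})^2U_n},
\]
and it remains to identify the limit of this weighted ratio as $(1-w)/2$.

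\textbf{Step 1 (deterministic-weight ergodic averages).} Let $a_n=(p_n^\mathrm{z})^2=1/(2\sqrt n)$. I want $\sum_{n\le N}a_nU_n\big/\sum_{n\le N}a_n\to\mu_U$ and $\sum_{n\le N}a_nV_n\big/\sum_{n\le N}a_n\to\mu_V$ almost surely. Here $\mu_V=\mu_{V^\mathrm{fut}}$ by Lemma \ref{lemma: equivalence of W and W fut} together with the ergodic averages in the proof of \cref{theorem: main steady-state werner fut ergodic}. The tool is Abel summation. Set $\Sigma_n=\sum_{k\le n}U_k$, so that $\Sigma_n=n\mu_U+o(n)$ almost surely by Lemma \ref{lemma: main replacing random variable by its expectation in ergodic theorem}. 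Then
\[
\sum_{n\le N}a_nU_n=a_N\Sigma_N+\sum_{n<N}(a_n-a_{n+1})\Sigma_n .
\]
Because $a_n$ is decreasing with $a_n-a_{n+1}\asymp n^{-3/2}$ and $\sum_{n\le N}a_n\asymp\sqrt N\to\infty$, a Toeplitz/Cesàro argument gives
\[
\sum_{n\le N}a_nU_n=\mu_U\sum_{n\le N}a_n+o\Bigl(\sum_{n\le N}a_n\Bigr).
\]
The same argument applies to $V_n$, since $\frac1N\sum_{n\le N}V_n$ converges almost surely. Dividing the two statements yields the ratio $(\mu_U-\mu_V)/(2\mu_U)=(1-w)/2$, using $w=\mu_{V^\mathrm{fut}}/\mu_U$ from \cref{theorem: main steady-state werner fut ergodic}.

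\textbf{Step 2 (martingale fluctuations).} This is the main obstacle, because the sampling rate $a_n\to0$ means the noise must be controlled relative to a denominator of order only $\sqrt N$. Define
\[
M_N=\sum_{n\le N}\sum_{j\le U_n}(\xi_{n,j}-a_n),
\qquad
M'_N=\sum_{n\le N}\sum_{j\le U_n}\bigl(\xi_{n,j}\zeta_{n,j}-a_n(1-\omega_{n,j})/2\bigr).
\]
Conditionally on $\mathcal F$, these are sums of independent centered bounded terms. Their conditional variance is at most
\[
\sum_{n\le N}a_nU_n\le m\sum_{n\le N}a_n=O(\sqrt N).
\]
I would take $D_N:=\sum_{n\le N}a_n\asymp\sqrt N$ and apply Hoeffding/Bernstein conditionally on $\mathcal F$. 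This gives $\PP(|M_N|>\varepsilon D_N\mid\mathcal F)\le 2\exp(-c\varepsilon^2\sqrt N)$, and the same bound for $M'_N$; the bound is summable in $N$, so Borel–Cantelli gives $M_N/D_N\to0$ and $M'_N/D_N\to0$ almost surely. Alternatively, one could apply Kronecker's lemma to the martingale with normalizer $D_N$.

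Combining Steps 1 and 2 with $\mu_U>0$ shows that the denominator $\sum_n B^\mathrm{z}_n$ is eventually positive. It also gives
\[
\bar Q_N=\frac{M'_N/D_N+\tfrac12(\mu_U-\mu_V)+o(1)}{M_N/D_N+\mu_U+o(1)}\to\frac{1-w}{2}\quad\text{almost surely.}
\]
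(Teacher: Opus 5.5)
Your proof is correct, and it takes a different route from the paper's in how the ``signal'' part is obtained.

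The paper decomposes $Q_n$ along the forward filtration, $Q_n=\EE(Q_n\mid\mathcal F_{n-1})+D_n$. It asserts that the compensator equals $g(S_n)/(4\sqrt n)$ with $g(\ell)=\EE(U'_\ell-V'_\ell)$. It then applies Lemma~\ref{lemma:sqrtLimit}, a weighted ergodic theorem for functions of the finite Markov chain proved by the same Abel summation you use. The noise is removed by Fact~\ref{fact:kronckerMDS} with $b_n=\sqrt n$ and $\EE(D_n^2)\le m^2/(2\sqrt n)$, and the denominator is handled ``in a similar fashion''.

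You instead condition on the whole repeater history. Your compensators are therefore the realized quantities $a_nU_n$ and $a_n(U_n-V_n)/2$. Your Abel-summation step then needs only pathwise Ces\`aro convergence of $U_n$ and $V_n$, with no Markov structure.

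This difference matters. The paper's identification of $\EE(Q_n\mid\mathcal F_{n-1})$ as a function of $S_n$ alone presumes that the Werner parameters delivered in step $n$ depend on the past only through $S_n$. In fact the fidelities of $X$-pairs depend on the ages of the queued links, which is the very reason $V^{\mathrm{fut}}$ is introduced. Your route does not rely on that identification.

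You can also shorten your input. Since $\sum_n V_n/\sum_n U_n\to w$ a.s. by \cref{theorem: main steady-state werner fut ergodic}, and $\frac1N\sum_n U_n\to\mu_U>0$ a.s., you get $\frac1N\sum_nV_n\to w\mu_U$ directly, without passing through $V^{\mathrm{fut}}$. In exchange, the paper's version offers a single forward-martingale argument and an explicit stationary-average form of the limit.

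One correction in Step~2: plain Hoeffding does not give your tail bound. The roughly $\sum_nU_n\asymp N$ summands have range of order $1$, so Hoeffding only yields $\exp(-2\varepsilon^2D_N^2/\sum_nU_n)$, which stays bounded away from zero. You need the variance-sensitive Bernstein inequality. With conditional variance at most $mD_N$ and summands bounded by $1$, it gives $2\exp\bigl(-\varepsilon^2D_N/(2m+2\varepsilon/3)\bigr)$, which is summable because $D_N\asymp\sqrt N$. Your Kronecker alternative also works, since $\sum_n m a_n/D_n^2\asymp\sum_n n^{-3/2}<\infty$.
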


Observe that the result is independent of the multiplexing policy, i.e., under the fixed (resp. dynamic) multiplexing policy, $w$ can be replaced by $w_\mathrm{fxd}$ from \cref{eq:steadyStateWernerFxd} (resp. $w_\mathrm{dyn}$ from \cref{eq:steadyStateWernerDyn}).
The rest of the appendix is devoted to proving Thm.~\ref{lemma:qberN} and providing a secret key rate heuristic using it.
The main idea of the proof is that both the numerator and denominator in \cref{eq:qber}, when normalized by $\sqrt{N}$, converge to respective constants almost surely, thereby establishing convergence of the estimated QBER.
To establish the result formally, we first express $\bar{Q}_N$ in terms of Werner parameters of individual links generated until time step $N$.
Recall that the signed queue length process $(S_n)_{n\geq 1}$ is defined on the state space $\mathcal{S}$ and its stationary distribution is $\pi$.
Further, conditional on $S_n=\ell$, independently across time step $n$,
\begin{align*}
    (U_n,W_{n,1},W_{n,2},\dots) \overset{d}{=} (U'_\ell,W'_{\ell,1},W'_{\ell,2},\dots),    
\end{align*}
where $W_{n,i}$ ($1 \le i \le U_n$) denotes the Werner parameters of the $i$th link generated during time step $n$ and $W'_{\ell,i}$ denotes the Werner parameters of the $i$th link generated during a time step starting with signed queue length $\ell$. 
We emphasize that the Werner parameters of individual links are considered at the end of the time step to account for depolarization during the time step.

For $n\geq 1$ and $1\leq i\leq U_n$, let $I_{n,i}$ denote the Bernoulli variable indicating if the $i$th link generated during time step $n$ is used for QBER estimation, i.e.,
\begin{align}\label{eq:Ini}
  I_{n,i}\sim \mathrm{Bern}((p_n^\mathrm{z})^2),~~ \text{with}~~
(p_n^\mathrm{z})^2:=\frac{1}{2\sqrt n}, 
\end{align}
independent of everything else. 
Then, $B^\mathrm{z}_n=\sum_{i=1}^{U_n}I_{n,i}$ and $V_n=\sum_{i=1}^{U_n}W_{n,i}$.
Given $I_{n,i} = 1$, let $Q_{n,i}$ be the Bernoulli variable taking value $1$ in case Alice and Bob have different outcomes after Bob has flipped his. Then, by properties of the Werner state in~\cref{eq: werner state},
\begin{align}\label{eq:Qni}
  Q_{n,i}|(I_{n,i} = 1, W_{n,i}) \sim \mathrm{Bern}\Big(\frac{1-W_{n,i}}{2}\Big).
\end{align}
For notational convenience, we also define
$Q_{n,i}|(I_{n,i} = 0) =0$, which leads to following expressions for the summands in the numerator and denominator of the estimated QBER (Eq.~\ref{eq:qber}):
\begin{align}\label{eq:MnQn}
  Q_n=\sum_{i=1}^{U_n}Q_{n,i}, \quad  B^\mathrm{z}_n=\sum_{i=1}^{U_n}I_{n,i}.
\end{align}
We now consider the asymptotic behavior of $\sum_{n=1}^N Q_n/\sqrt{N}$ and $\sum_{n=1}^N B^\mathrm{z}_n/\sqrt{N}$, where the scaling is motivated by $1/\sqrt{n}$ Bernoulli thinning in \cref{eq:Ini}.
We first prove a lemma for a generic Markov chain, which will be useful in establishing the corresponding limits.


\begin{lemma}\label{lemma:sqrtLimit}
Let \((X_n)_{n\ge 1}\) be an irreducible aperiodic Markov chain on a finite state $\mathcal{S}'$ space with steady-state distribution $\pi'$.
For a bounded deterministic function $\alpha:\mathcal S'\to\mathbb R$, we define 
\begin{align}\label{eq:muAlpha}
    \mu_{(\alpha)}:=\sum_{\ell \in \mathcal{S}'} \pi'_\ell \alpha(\ell)   
\end{align}
Then,
\[
\frac{1}{\sqrt N}\sum_{n=1}^N \frac{\alpha(X_n)}{\sqrt n}
\overset{\mathrm{a.s.}}{\longrightarrow} 2\mu_{(\alpha)}.
\]
\end{lemma}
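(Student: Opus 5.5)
The plan is to reduce the weighted sum to ordinary ergodic averages by summation by parts. Set $A_n := \sum_{k=1}^n \alpha(X_k)$. Lemma \ref{lemma: main replacing random variable by its expectation in ergodic theorem} applies with $B_n=\alpha(X_n)$ and degenerate distributions $H_a=\delta_{\alpha(a)}$. Alternatively, the ergodic theorem for finite irreducible chains applies directly. Either way we get $A_n/n \to \mu_{(\alpha)}$ almost surely. Abel summation with weights $n^{-1/2}$ gives
\begin{equation*}
\sum_{n=1}^N \frac{\alpha(X_n)}{\sqrt n} = \frac{A_N}{\sqrt N} + \sum_{n=1}^{N-1} A_n\Big(\frac{1}{\sqrt n}-\frac{1}{\sqrt{n+1}}\Big).
\end{equation*}

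Dividing by $\sqrt N$, the boundary term is $A_N/N \to \mu_{(\alpha)}$ a.s. For the second term, write $A_n = n\mu_{(\alpha)} + n\varepsilon_n$ with $\varepsilon_n \to 0$ a.s. Note that
\begin{equation*}
\frac{1}{\sqrt n}-\frac{1}{\sqrt{n+1}} = \frac{1}{\sqrt n\sqrt{n+1}(\sqrt n+\sqrt{n+1})} \sim \tfrac12 n^{-3/2}.
\end{equation*}
Hence $n(\frac{1}{\sqrt n}-\frac{1}{\sqrt{n+1}})\sim \tfrac12 n^{-1/2}$. The deterministic part is then
\begin{equation*}
\frac{\mu_{(\alpha)}}{\sqrt N}\sum_{n=1}^{N-1} n\Big(\frac{1}{\sqrt n}-\frac{1}{\sqrt{n+1}}\Big) \to \mu_{(\alpha)}\cdot\tfrac12\cdot 2 = \mu_{(\alpha)},
\end{equation*}
using $\sum_{n\le N} n^{-1/2}\sim 2\sqrt N$. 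A cleaner route is to apply the same Abel identity to the constant sequence $\alpha\equiv 1$, where the left side is $\sum_{n\le N} n^{-1/2}/\sqrt N \to 2$. Adding the two pieces gives the limit $2\mu_{(\alpha)}$.

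The only delicate step is the error term $\frac{1}{\sqrt N}\sum_{n<N} n\varepsilon_n(\frac{1}{\sqrt n}-\frac{1}{\sqrt{n+1}})$, which I expect to be the main (mild) obstacle, handled by a Cesàro/Toeplitz argument. The weights $c_{N,n}:=\frac{n}{\sqrt N}(\frac{1}{\sqrt n}-\frac{1}{\sqrt{n+1}})$ are nonnegative and have bounded row sums (converging to $1$ as above). Each fixed $n$ has $c_{N,n}\to 0$ as $N\to\infty$. On the probability-one event where $\varepsilon_n\to 0$, fix $\delta>0$ and choose $n_0$ with $|\varepsilon_n|<\delta$ for $n\ge n_0$. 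The finitely many initial terms are bounded by $\frac{C n_0 \sup|\alpha|}{\sqrt N}\to 0$, using $|\varepsilon_n|\le 2\sup|\alpha|$ from boundedness of $\alpha$. The tail is at most $\delta\sum_n c_{N,n}\le 2\delta$ for large $N$. Since $\delta$ is arbitrary, the error vanishes almost surely, which completes the argument. Aperiodicity is not actually needed beyond what the ergodic theorem requires.
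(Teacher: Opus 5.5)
Your proposal is correct and follows essentially the paper's route: the ergodic theorem for the partial sums, Abel summation with weights $n^{-1/2}$, and an $\varepsilon$-splitting (Toeplitz) argument on the remainder using $n\bigl(\frac{1}{\sqrt n}-\frac{1}{\sqrt{n+1}}\bigr)=O(n^{-1/2})$. The only cosmetic difference is that the paper centers $\alpha(X_n)$ by $\mu_{(\alpha)}$ before summing by parts, so $2\mu_{(\alpha)}$ comes directly from $\frac{1}{\sqrt N}\sum_{n\le N} n^{-1/2}\to 2$; you sum by parts first and recover $2\mu_{(\alpha)}$ as the boundary term plus the deterministic Abel sum.
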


\begin{proof}
The main idea is to consider the deviation of the summands around the asymptotic mean and show that the cumulative deviation scaled by $\sqrt{N}$ converges to zero almost surely.
We have
\begin{align}\label{eq:alphaXn}
\frac{1}{\sqrt N}\sum_{n=1}^N \frac{\alpha(X_n)}{\sqrt n}
=
\mu_{(\alpha)} \; \underbrace{\frac{1}{\sqrt N}\sum_{n=1}^N \frac{1}{\sqrt n}}_{{\longrightarrow}\;2}
+
\frac{1}{\sqrt N}\sum_{n=1}^N \,\underbrace{\frac{\alpha(X_n)-\mu_{(\alpha)}}{\sqrt n}}_{=:A_n/\sqrt{n}}.
\end{align}
Now letting
$R_n:=\sum_{k=1}^n A_k$, 
by ergodic theorem for Markov chains,
\begin{align}\label{eq:ergodicR}
    \frac{R_n}{n}=\frac1n\sum_{k=1}^n(\alpha(X_k)-\mu_{(\alpha)})\overset{\mathrm{a.s.}}{\longrightarrow}0.    
\end{align}
Rewriting $A_n = R_n-R_{n-1}$ with the convention that $R_0=0$, summation by parts gives
\begin{align}\label{eq:AnBySqrtN}
     \sum_{n=1}^N \frac{A_n}{\sqrt n} = \frac{R_N}{\sqrt N} + \sum_{n=1}^{N-1} R_n\!\left(\frac1{\sqrt n}-\frac1{\sqrt{n+1}}\right)
    \implies  \frac1{\sqrt N}\sum_{n=1}^N \frac{A_n}{\sqrt n} = \underbrace{\frac{R_N}{N}}_{\substack{\overset{\mathrm{a.s.}}{\longrightarrow}0 \\ \eqref{eq:ergodicR}}} + \frac1{\sqrt N}\sum_{n=1}^{N-1} R_n\!\left(\frac1{\sqrt n}-\frac1{\sqrt{n+1}}\right)
\end{align}
%
We now consider the second term for each $\omega$ in the set where the first term converges to zero, which has probability~$1$.
Given $\varepsilon>0$, we have $|R_n(\omega)|<\varepsilon n$ for $n>N_\varepsilon(\omega)$. 
Since $n\!\left(\frac1{\sqrt n}-\frac1{\sqrt{n+1}}\right)\le \frac1{2\sqrt n}$,
\begin{align}\label{eq:remainderSqrtLemma}
    \frac1{\sqrt N}\sum_{n=1}^{N-1} |R_n(\omega)|\!\left(\frac1{\sqrt n}-\frac1{\sqrt{n+1}}\right) \le \frac1{\sqrt N}\sum_{n=1}^{N_\varepsilon(\omega)-1} |R_n(\omega)|\!\left(\frac1{\sqrt n}-\frac1{\sqrt{n+1}}\right) + \frac\varepsilon{2\sqrt N}\sum_{n=N_\varepsilon(\omega)}^{N-1} \frac1{\sqrt n}.
\end{align}
Since,
$\frac1{\sqrt N}\sum_{n=1}^N \frac{1}{\sqrt n}\to 2$,
taking limsup shows that the RHS in \cref{eq:remainderSqrtLemma} is smaller than $\varepsilon$.
Plugging back in \cref{eq:AnBySqrtN} and in turn in \cref{eq:alphaXn}, the claim follows.
\end{proof}

We will also use the following well-known fact for martingale difference sequences.

\begin{fact}\label{fact:kronckerMDS}
    Let $(D'_k|\mathcal F'_k)$ be a martingale difference sequence (MDS), i.e., $\EE(|D'_k|) < \infty$ and $\EE(D'_k|\mathcal{F}'_{k-1}) = 0$, where $\mathcal{F}'_k$ denotes the corresponding filtration (i.e., the information available until time step $k$).
    Further let $\{b_k\}$ be a positive sequence with $b_k \uparrow \infty$ such that $\sum_{k=1}^\infty \EE({D'}_k^{2})/b_k^2 < \infty$. Then,
    \begin{align}\label{eq:MDSkronecker}
        \frac{1}{b_n} \sum_{k=1}^n D'_k \overset{\mathrm{a.s.}}{\longrightarrow} 0 \quad \text{as} \quad n \to \infty.
    \end{align}
\end{fact}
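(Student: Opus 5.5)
This is the classical combination of the $L^2$ martingale convergence theorem with Kronecker's lemma. We first show that the weighted partial sums converge almost surely, and then transfer this to the normalized partial sums. First we note that the hypothesis $\sum_k \EE({D'_k}^2)/b_k^2<\infty$ forces $\EE({D'_k}^2)<\infty$ for every $k$, so each $D'_k$ is square integrable. Next we define
\begin{equation*}
    M_n := \sum_{k=1}^n \frac{D'_k}{b_k}, \qquad n\ge 1 .
\end{equation*}
Since $b_k$ is deterministic and positive and $\EE(D'_k\mid\mathcal F'_{k-1})=0$, the process $(M_n)$ is a martingale with respect to $(\mathcal F'_n)$.

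Its increments are orthogonal in $L^2$. For $j<k$, condition on $\mathcal F'_{k-1}$ to get $\EE(D'_jD'_k)=\EE\big(D'_j\,\EE(D'_k\mid\mathcal F'_{k-1})\big)=0$; the product is integrable by Cauchy--Schwarz. Hence
\begin{equation*}
    \sup_n \EE(M_n^2)=\sum_{k=1}^\infty \frac{\EE({D'_k}^2)}{b_k^2}<\infty ,
\end{equation*}
so $(M_n)$ is bounded in $L^2$. By Doob's martingale convergence theorem, $M_n$ converges almost surely (and in $L^2$) to a finite limit $M_\infty$. In other words, the series $\sum_k D'_k/b_k$ converges on an event of probability one.

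The final step is a pathwise application of Kronecker's lemma on that event. Fix an outcome $\omega$ in this event and write $x_k=D'_k(\omega)$, with $M_0:=0$, so that $x_k=b_k(M_k-M_{k-1})$. Summation by parts gives
\begin{equation*}
    \frac{1}{b_n}\sum_{k=1}^n x_k = M_n - \frac{1}{b_n}\sum_{k=1}^{n-1}(b_{k+1}-b_k)\,M_k .
\end{equation*}
The weights $(b_{k+1}-b_k)/b_n$ are nonnegative, sum to $(b_n-b_1)/b_n\to1$, and each fixed weight tends to $0$ because $b_n\uparrow\infty$. By a Ces\`aro (Toeplitz) argument, the second term therefore converges to $M_\infty(\omega)$. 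Concretely, split the sum at an index $K$ beyond which $|M_k-M_\infty|<\varepsilon$: the finitely many early terms vanish as $n\to\infty$, and the tail contributes at most $\varepsilon$. The two terms cancel in the limit, which yields \cref{eq:MDSkronecker}.

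The only step needing care is the Toeplitz estimate inside Kronecker's lemma, which uses both the monotonicity of $b_k$ and $b_k\to\infty$. The martingale part is standard once orthogonality of the increments has been checked.
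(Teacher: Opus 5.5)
Your proof is correct and follows the same route as the paper: the weighted sums $\sum_{k\le n} D'_k/b_k$ form an $L^2$-bounded martingale that converges almost surely, and Kronecker's lemma is then applied pathwise on that event. The only difference is that you also verify orthogonality of the increments and prove Kronecker's lemma (via summation by parts and a Toeplitz argument), whereas the paper simply cites both.
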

The fact is based on the idea that if $(D'_k|\mathcal F'_k)$ is an MDS, so is $(\frac{D'_k}{b_k}|\mathcal F'_k)$. 
Defining $Y'_n =\sum_{k \le n} D'_k/b_k$, it can be verified that $(Y'_k|\mathcal{F}'_k)$ is a martingale with $\EE({Y'}_n^2) = \sum_{k=1}^n \EE({D'}_k^{2})/b_k^2$, i.e., $\sup_n \EE({Y'}_n^2)<\infty$. Then by martingale convergence theorem~\cite[Sec. 12.1]{williams1991probability}, $Y'_n \overset{\mathrm{a.s.}}{\longrightarrow} Y'_\infty$ for some $Y'_\infty$ with $|Y'_\infty|<\infty$ a.s.
Applying Kronecker's lemma~\cite[Sec. 12.7]{williams1991probability} pathwise, we see that $\sum_{k=1}^n D'_k /b_n \to 0$ on this probability $1$ set, validating \cref{eq:MDSkronecker}.

To derive the asymptotic behavior of $\sum_{n=1}^N Q_n/\sqrt{N}$, we first define the sum of the Werner parameters of the end-to-end links produced in a time step starting from state $\ell$ as
\begin{align}
    V'_\ell:=\sum_{i=1}^{U'_\ell}W'_{l,i}.    
\end{align}
and the corresponding steady-state mean 
\begin{align}\label{eq:muV}
    \mu_V:=\sum_{\ell \in \mathcal S}\pi_\ell\EE(V'_\ell).
\end{align}
Observe that
\begin{align}
    \mu_V \overset{\text{a.s.}}{=} \lim_{N\to \infty} \frac{1}{N}\sum_{n=1}^N V_n \overset{\text{(i)}}{=} \lim_{N\to \infty} \frac{1}{N}\sum_{n=1}^N V_n^\mathrm{fut} \overset{\text{a.s.}}{=} \mu_{V^\mathrm{fut}},
\end{align}
where the first and last equality follows from Lemma \ref{lemma: main replacing random variable by its expectation in ergodic theorem} and (i) follows from Lemma~\ref{lemma: equivalence of W and W fut}.
We are now ready to derive the convergence result for the scaled numerator of \cref{eq:qber}.

\begin{lemma}\label{lemma:sqrtN}
For the scaled cumulative bit errors, the following result holds 
\begin{align*}
    \frac{1}{\sqrt{N}} \sum_{n=1}^N Q_n  \overset{\mathrm{a.s.}}{\longrightarrow} \frac{\mu_U- \mu_V}{2},   
\end{align*}
where $\mu_U$ and $\mu_V$ were defined in \cref{eq: definition mu U} and \cref{eq:muV} respectively. 
\end{lemma}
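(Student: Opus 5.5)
We decompose $Q_n$ into a predictable part plus martingale noise. Let $\mathcal F_n$ be the $\sigma$-algebra generated by everything up to the end of time step $n$. This includes the queue process, the counts $U_k$, the Werner parameters $W_{k,i}$, the indicators $I_{k,i}$ and the outcomes $Q_{k,i}$ for $k\le n$. Let $\mathcal G_n:=\mathcal F_{n-1}\vee\sigma(S_n,U_n,W_{n,1},\dots,W_{n,U_n})$. By \cref{eq:Ini} and \cref{eq:Qni}, and since $I_{n,i}$ is independent of everything else,
\begin{equation*}
\EE(Q_n\mid\mathcal G_n)=\sum_{i=1}^{U_n}\frac{1}{2\sqrt n}\,\frac{1-W_{n,i}}{2}=\frac{U_n-V_n}{4\sqrt n}.
\end{equation*}
We therefore write
\begin{equation*}
\sum_{n=1}^N Q_n=\sum_{n=1}^N D_n+\sum_{n=1}^N\frac{U_n-V_n}{4\sqrt n},\qquad D_n:=Q_n-\frac{U_n-V_n}{4\sqrt n},
\end{equation*}
and treat the two sums separately.

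\textbf{Noise term.} $D_n$ is a martingale difference sequence with respect to the interleaved filtration $(\mathcal G_n,\mathcal F_n)$. Indeed, $\EE(D_n\mid\mathcal G_n)=0$, which implies $\EE(D_n\mid\mathcal F_{n-1})=0$ by the tower property. The sequence is bounded, since $U_n\le m$ and therefore $|D_n|\le 2m$. Its second moment is
\begin{equation*}
\EE(D_n^2)\le \EE\big(\mathrm{Var}(Q_n\mid\mathcal G_n)\big)+0\le \EE(U_n)\cdot \frac{1}{2\sqrt n}\le \frac{m}{2\sqrt n},
\end{equation*}
because $Q_n$ is a sum of at most $m$ conditionally independent Bernoulli variables, each with mean at most $1/(2\sqrt n)$. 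We apply Fact~\ref{fact:kronckerMDS} with $b_n=\sqrt n$. The summability condition becomes $\sum_n \EE(D_n^2)/n\le \sum_n m/(2n^{3/2})<\infty$. Hence $N^{-1/2}\sum_{n\le N}D_n\to0$ almost surely.

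\textbf{Drift term.} Here $U_n-V_n$ is not a deterministic function of a Markov chain, so Lemma~\ref{lemma:sqrtLimit} does not apply directly. We proceed in two steps.

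First, replace $U_n-V_n$ by its conditional mean given the state. The Werner parameter of a link produced in step $n$ depends on the age of the queued link it was matched with. That age is not encoded in $S_n$ alone, so we condition on the pre-step history: let $\beta_n:=\EE(U_n-V_n\mid\mathcal F_{n-1})$. The difference $(U_n-V_n)-\beta_n$ is again a bounded martingale difference sequence. The same application of Fact~\ref{fact:kronckerMDS} with $b_n=\sqrt n$ (now the condition reads $\sum_n 1/(n\cdot n)<\infty$, applied to the differences divided by $\sqrt n$) shows that its $1/\sqrt n$-weighted sum, scaled by $N^{-1/2}$, vanishes almost surely.

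Second, handle $\sum_n \beta_n/\sqrt n$. If $\beta_n$ were a function $\alpha(X_n)$ of an irreducible aperiodic finite chain, Lemma~\ref{lemma:sqrtLimit} would give the limit $2\,\EE_\pi(U-V)$. We avoid needing a finite chain carrying the ages, using the same summation-by-parts argument as in Lemma~\ref{lemma:sqrtLimit}. That argument only uses the Ces\`aro convergence $\frac1n\sum_{k\le n}(U_k-V_k)\to\mu_U-\mu_V$ almost surely. This Ces\`aro limit holds for $U$ by Lemma~\ref{lemma: main replacing random variable by its expectation in ergodic theorem}. For $V$ it holds by Lemma~\ref{lemma: equivalence of W and W fut} together with $\mu_V=\mu_{V^{\mathrm{fut}}}$.

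Redoing the proof of Lemma~\ref{lemma:sqrtLimit} with $A_n:=(U_n-V_n)-(\mu_U-\mu_V)$ in place of $\alpha(X_n)-\mu_{(\alpha)}$ is purely pathwise given $R_n/n\to0$. It therefore yields
\begin{equation*}
\frac{1}{\sqrt N}\sum_{n=1}^N\frac{U_n-V_n}{\sqrt n}\to 2(\mu_U-\mu_V)\quad\text{almost surely}.
\end{equation*}
This approach makes the first step unnecessary. Dividing by $4$ gives the drift limit $(\mu_U-\mu_V)/2$, and combining it with the vanishing noise term proves the claim.

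The main obstacle is precisely that $V_n$ is not a function of a finite-state chain, so Lemma~\ref{lemma:sqrtLimit} cannot be quoted directly. The plan is to observe that its proof only needs the almost-sure Ces\`aro limit, and that this limit is already available from Lemma~\ref{lemma: main replacing random variable by its expectation in ergodic theorem} and Lemma~\ref{lemma: equivalence of W and W fut}. The other point needing care is making the martingale structure precise, since the thinning indicators $I_{n,i}$ and the error indicators $Q_{n,i}$ must be conditionally independent given the links.
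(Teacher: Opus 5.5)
Your proof is correct, and it takes a genuinely different route from the paper's. The paper compensates $Q_n$ by $\EE(Q_n\mid\mathcal F_{n-1})$ and asserts that this equals $\EE(Q_n\mid S_n)=g(S_n)/(4\sqrt n)$, where $g(\ell)=\EE(U'_\ell-V'_\ell)$. It then quotes Lemma~\ref{lemma:sqrtLimit} verbatim for the drift, and handles its (differently defined) noise term with Fact~\ref{fact:kronckerMDS} and $b_n=\sqrt n$, as you do.

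You instead compensate by $\EE(Q_n\mid\mathcal G_n)=(U_n-V_n)/(4\sqrt n)$, which conditions on the realized links of step $n$. You then observe that the summation-by-parts proof of Lemma~\ref{lemma:sqrtLimit} is pathwise. It needs only the almost-sure Ces\`aro limit of $U_n-V_n$, which Lemma~\ref{lemma: main replacing random variable by its expectation in ergodic theorem} and Lemma~\ref{lemma: equivalence of W and W fut} already supply.

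The paper's route buys a one-line citation. Yours buys robustness. The identity $\EE(Q_n\mid\mathcal F_{n-1})=g(S_n)/(4\sqrt n)$ tacitly requires the conditional mean of $V_n$ given the past to depend on $S_n$ only. But the Werner parameter of an $X$-pair depends on the age of the queued link. That information is contained in $\mathcal F_{n-1}$ but not in $S_n$, which is precisely why the paper introduces $V^{\mathrm{fut}}$. Your argument never needs that identification. Strictly, it delivers the limit $\frac12(\mu_U-\mu_{V^{\mathrm{fut}}})$ directly, and $\mu_V=\mu_{V^{\mathrm{fut}}}$ is used only to restate it in the lemma's form.

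Two cosmetic points. First, the $\beta_n$ detour in your drift paragraph is superseded by the pathwise argument and can be deleted. Second, your variance bound does not need conditional independence of the $Q_{n,i}$: since $Q_n\le U_n$, you get $\EE(Q_n^2\mid\mathcal G_n)\le U_n\,\EE(Q_n\mid\mathcal G_n)\le m^2/(4\sqrt n)$, which is already summable against $1/n$.
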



\begin{proof}
The main idea is to decompose the bit error $Q_n$ into its conditional mean and a noise term and then show that almost surely the scaled cumulative conditional mean converges to the desired limit, while the scaled cumulative noise term vanishes.
The first part follows from Lemma~\ref{lemma:sqrtLimit} and we use Fact.~\ref{fact:kronckerMDS} for the second part.

Let $\mathcal F_n$ denote the complete information (filtration) until time step $n$.
We write
\begin{align}
&Q_n
=
\EE(Q_n| \mathcal F_{n-1})
+
{D}_n, \quad \text{where} \label{eq:QnDecomp}\\
&{D}_n := Q_n - \EE(Q_n| \mathcal F_{n-1}). \label{eq:Dtilden}
\end{align}
Note that 
$\EE(Q_n| \mathcal F_{n-1}) = \EE(Q_n| S_n)$, i.e., the conditional mean of $Q_n$ given the information until the end of time step $n-1$ is equivalent to that of given the signed queue length $S_n$ at the beginning of time step $n$.
Also, $\EE(D_n| \mathcal F_{n-1}) = 0$, i.e., \(({D}_n,\mathcal F_n)\) is a martingale difference sequence.

\noindent \textbf{Mean term in}~\eqref{eq:QnDecomp}: Observe that
\begin{align*}
&\EE(Q_n| S_n\!=\!\ell)
=
\EE\!\left(\sum_{i=1}^{U_n}Q_{n,i} \mid S_n\!=\!\ell\right)
=
\EE\!\left(\sum_{i=1}^{U_n}
\EE(Q_{n,i}| I_{n,i},W_{n,i},S_n\!=\!\ell)
 \mid S_n\!=\!\ell\right)
=
\EE\!\left(\sum_{i=1}^{U_n}
I_{n,i}\frac{1\!-\!W_{n,i}}{2}
 \mid S_n\!=\!\ell\right), \\
\implies &\EE(Q_n| S_n\!=\!\ell)=
\EE\!\left(\sum_{i=1}^{U_n}
\EE(I_{n,i})
\frac{1\!-\!W_{n,i}}{2}
 \mid S_n\!=\!\ell\right)
=
\frac{1}{2\sqrt n}\,
\EE\!\left(\sum_{i=1}^{U_n}\frac{1\!-\!W_{n,i}}{2}
 \mid S_n\!=\!\ell\right)
= \frac{1}{4\sqrt n}\,
\EE\!\left(U_n-\sum_{i=1}^{U_n}W_{n,i}
 \mid S_n\!=\!\ell\right).
\end{align*}
Recalling that $\sum_{i=1}^{U_n}W_{n,i} = V_n$, we have $\EE(Q_n| S_n\!=\!\ell)= \EE(U_n-V_n| S_n\!=\!\ell) /(4 \sqrt{n}) = \EE(U'_\ell-V'_\ell) /(4 \sqrt{n})$.
We now introduce the shorthand $g(\ell):=\EE(U'_\ell-V'_\ell)$, which gives
\[
\EE(Q_n| \mathcal F_{n-1})
=
\frac{g(S_n)}{4\sqrt n}.
\]
Since $\sum_{\ell \in \mathcal S} \pi_\ell g(\ell) = \sum_{\ell \in \mathcal S} \EE(U'_\ell-V'_\ell) = \mu_U-\mu_V$, applying Lemma~\ref{lemma:sqrtLimit} we have
\begin{align}\label{eq:Qpredictable}
    \frac{1}{\sqrt N}\sum_{n=1}^N \frac{g(S_n)}{\sqrt n} \overset{\mathrm{a.s.}}{\longrightarrow} 2\mu_{(g)} \implies \frac{1}{\sqrt N}\sum_{n=1}^N \EE(Q_n| \mathcal F_{n-1})
\overset{\mathrm{a.s.}}{\longrightarrow}
\frac{\mu_U-\mu_V}{2}.    
\end{align}
\noindent \textbf{Noise term in}~\eqref{eq:QnDecomp}: 
Starting from the definition of ${D}_n$ in \cref{eq:Dtilden}, 
\begin{align}\label{eq:expectDn2}
    \EE({D}_n^2| \mathcal F_{n-1})
    \!=\! \EE((Q_n\!-\!\EE(Q_n| \mathcal F_{n-1}))^2 \mid \mathcal F_{n-1})
    \!=\! \EE(Q_n^2| \mathcal F_{n-1}) \!-\! (\EE(Q_n|\mathcal F_{n-1}))^2 \!\le\! \EE(Q_n^2| \mathcal F_{n-1}) \implies \EE({D}_n^2) \!\le\! \EE(Q_n^2).
\end{align}
We now aim to bound the RHS. 
Recalling the definition of $Q_{n,i}$ from \cref{eq:Qni},
\begin{align}
    Q_{n,i} \le I_{n,i} \overset{\eqref{eq:MnQn}}{\implies} Q_n \le B^\mathrm{z}_n \le U_n \implies \EE(Q_n^2 | U_n) \le U_n \EE(B^\mathrm{z}_n | U_n) \overset{\eqref{eq:Ini}}{=} \frac{U_n^2}{2\sqrt{n}} \implies \EE(Q_n^2) \le \frac{\EE(U_n^2)}{2\sqrt{n}} \le \frac{m^2}{2\sqrt{n}}.
\end{align}
From \cref{eq:expectDn2}, we then have
\begin{align}\label{eq:limDtilde}
\sum_{n=1}^\infty \frac{\EE({D}_n^2)}{n} < \infty \xRightarrow{\text{Fact.\,\ref{fact:kronckerMDS}}} \frac{1}{\sqrt N}\sum_{n=1}^N {D}_n \overset{\mathrm{a.s.}}{\longrightarrow} 0,
\end{align}
establishing the hypothesis. 


\end{proof}

In a similar fashion, we can show that
\begin{align}
    \frac{1}{\sqrt N}\sum_{n=1}^N  B^\mathrm{z}_n\overset{\mathrm{a.s.}}{\longrightarrow}
\mu_U.    
\end{align}
The claim in Thm.~\ref{lemma:qberN} is now immediate.
Scaling both numerator and denominator in \cref{eq:qber} by $1/\sqrt{N}$, we have
$$\bar{Q}_N = \frac{\frac{1}{\sqrt{N}}\sum_{n=1}^N Q_n}{\frac{1}{\sqrt{N}}\sum_{n=1}^N B^\mathrm{z}_n} \overset{\mathrm{a.s.}}{\longrightarrow} \frac{(\mu_U-\mu_V)/2}{\mu_U} \overset{\mathrm{a.s.}}{=} \frac{1-w}{2}.$$

\subsection{Secret Key Rate Heuristic and Corresponding Numerical Optimization over the Cutoff for Local Entanglement Generation Attempts}\label{section: secret key rate optimization}
We have so far shown that steady-state rate and estimated QBER are almost surely constant.
We now use them to provide a heuristic estimate of the secret key rate for the repeater as
\begin{equation}\label{eq: steady-state secret key rate}
    f_\mathrm{skr} := \max\bigg(1-2h\bigg(\frac{1-w}{2}\bigg),0\bigg)r,
\end{equation}
where $h$ denotes the binary entropy function.
Note that we have again adopted a policy-independent notation.
It is well-known that the secret key rate is positive only if the steady-state rate $r>0$ and the steady-state Werner parameter $w>w_\mathrm{QKD}\approx 0.780$.

In the multiplexed repeater described in \cref{section: system model}, the cutoff $c_\mathrm{local}$ on the number of local entanglement generation attempts during entanglement swapping can be used to trade off between steady-state rate and steady-state Werner parameter: a smaller cutoff improves the steady-state Werner parameter at the cost of reducing the steady-state rate. Here we show how the exact analysis and the OSS approximation of Appendix \ref{app: markov applied} can be used to find an approximation to the maximum achievable steady-state secret key rate by optimizing the rate/Werner tradeoff over the cutoff $c_\mathrm{local}$. The results for selected parameter configurations are shown in \cref{fig:NILN_eta_skr_lowerbound} in the main text.

Optimization of the steady-state secret key rate over $c_\mathrm{local}$ is a discrete optimization problem. Evaluating the secret key rate in the exact analysis for a particular value of $c_\mathrm{local}$ is numerically expensive, especially for large values of $\log_2(m)$, which is the regime of interest for positive secret key rates. Optimization by enumeration is therefore infeasible. However, using the OSS approximation, an approximation to the optimal cutoff value can be found, and using a single evaluation of the exact analysis this yields an approximation to the optimal secret key rate. The process is illustrated in \cref{fig:skr optimization} for a particular instance, and the steps are explained in more detail below.

\textit{Step 1:} We first restrict the range of $c_\mathrm{local}$ where the optimal cutoff could be achieved. Recall from Lemma \ref{lemma: upper bound on w} that the steady-state Werner parameter is at most $w_\mathrm{max}=\lambda_0/p_\mathrm{swap}$. Recalling the expressions of $\lambda_0$ and $p_\mathrm{swap}$ from \cref{tab:derived-quantities}, the maximum steady-state Werner parameter at cutoff $c_\mathrm{local}$ is thus at most 
\begin{equation} \label{eq: wmax clocal}
    w_\mathrm{max}(c_\mathrm{local}) = \frac{\lambda_\mathrm{static}\EE\left(\lambda_\mathrm{active}^{2N'_\mathrm{local}}\indicator(N'_\mathrm{local}\leq c_\mathrm{local})\right)}{\PP\left(N'_\mathrm{local}\leq c_\mathrm{local}\right)} =  \lambda_\mathrm{static}\EE\left(\lambda_\mathrm{active}^{2N'_\mathrm{local}}\mid N'_\mathrm{local}\leq c_\mathrm{local}\right),
\end{equation}
where $N'_\mathrm{local}\sim \Geo(p_\mathrm{local})$.
Since $\lambda_\mathrm{static}\leq 1$, it follows from Proposition \ref{proposition: range for n_coh_active to guarantee no key at n is c} below that no key can be produced for $c_\mathrm{local}\geq n_\mathrm{coh-active}$ when $5\leq n_\mathrm{coh-active}\leq 1/p_\mathrm{local}$. In particular, for local entanglement generation probability $p_\mathrm{local}=10^{-3}$, the active coherence time $n_\mathrm{coh-active}=1000$ considered in \cref{fig:NILN_eta_skr_lowerbound} is within this range.

\textit{Step 2:} The steady-state secret key rate in the OSS approximation is computed for $c_\mathrm{local}\in \{1, \dots, n_\mathrm{coh-active}-1\}$ using the closed-form expression of the steady-state rate and steady-state Werner parameter listed in \cref{tab:OSS approximations rate and werner}. The value $\widetilde c^*_\mathrm{local}$ where the steady-state secret key rate in the OSS approximation is maximized is an approximation for the optimal cutoff value.

\textit{Step 3:} An approximation to the optimal steady-state secret key rate after optimizing over $c_\mathrm{local}$ is obtained as
\begin{equation}
    f_\mathrm{skr}^* := f_\mathrm{skr}(r(\widetilde c_\mathrm{local}^*), w(\widetilde c_\mathrm{local}^*)),
\end{equation}
which is the heuristic estimate of the secret key rate given the exact steady-state rate $r$ and Werner parameter $w$ for cutoff $\widetilde c_\mathrm{local}^*$.

\begin{figure}
    \centering
    \includegraphics[width=0.5\linewidth]{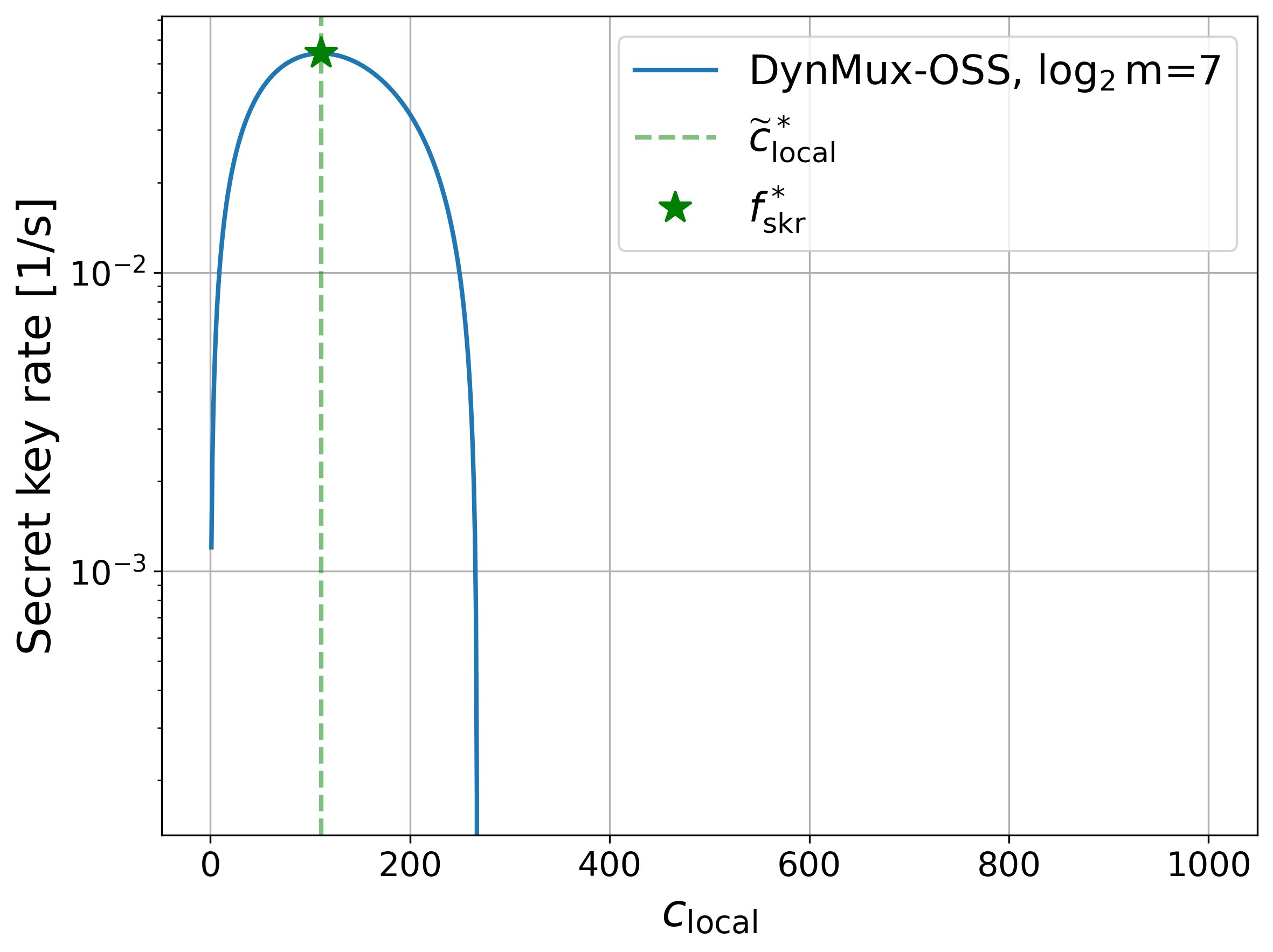}
    \caption{\textbf{Heuristic optimization of secret key rate over cutoff on the number of local entanglement generation attempts.} The curve shows the secret key rate as computed from the OSS approximation for the DynMux policy as a function of the cutoff $c_\mathrm{local}$ on the number of local entanglement generation attempts. The optimal value for the cutoff according to the OSS approximation is $\widetilde c_\mathrm{local}$ and is indicated by the green dashed line. The star marks the approximation $f_{\mathrm{skr}}^*$ to the maximally achievable secret key rate. The instance shown corresponds to the point in \cref{fig:NILN_eta_skr_lowerbound} at $\log_2(m)=7$ and $\eta_\mathrm{switch}=1$, where $p_\mathrm{local}=10^{-3}$ and $n_\mathrm{coh-active}=1000$.}
    \label{fig:skr optimization}
\end{figure}

In step 1 of the numerical optimization of the cutoff on the number of local entanglement generation attempts discussed above the following proposition is used. 
For simplicity, we only consider the case that $n_\mathrm{coh-active}$ is even.
\begin{proposition}\label{proposition: range for n_coh_active to guarantee no key at n is c} 
If $5\leq n_\mathrm{coh-active}\leq 1/p_\mathrm{local}$ is even, then the maximum steady-state Werner parameter $w_\mathrm{max}(c_\mathrm{local})=\lambda_\mathrm{static}\EE\left(\lambda_\mathrm{active}^{2N'_\mathrm{local}}\mid N'_\mathrm{local}\leq c_\mathrm{local}\right)$ with $N'_\mathrm{local}\sim \Geo(p_\mathrm{local})$ is below the QKD threshold $w_\mathrm{QKD}\approx 0.780$ for $c_\mathrm{local}\geq n_\mathrm{coh-active}$.
\end{proposition}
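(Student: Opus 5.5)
Since $\lambda_\mathrm{static}\le 1$, it suffices to bound $g(c):=\EE\big(\lambda_\mathrm{active}^{2N'_\mathrm{local}}\mid N'_\mathrm{local}\le c\big)$ with $N'_\mathrm{local}\sim\Geo(p_\mathrm{local})$. Write $q=1-p_\mathrm{local}$, $n=n_\mathrm{coh-active}$, and $x=\lambda_\mathrm{active}^2=e^{-2/n}$. The plan has three steps.

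\textbf{Step 1: closed form.} Summing the truncated geometric series gives
\begin{equation*}
g(c)=\frac{p_\mathrm{local}\,x\,(1-(qx)^c)}{(1-qx)(1-q^c)}.
\end{equation*}

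\textbf{Step 2: reduce to $c=n$.} I will show that $g$ is nonincreasing in $c$, so that $\sup_{c\ge n}g(c)=g(n)$. Conditioning on $N'\le c$ versus $N'\le c+1$ merely adds the outcome $N'=c+1$. That outcome carries the value $x^{c+1}$, which is at most every value already in the support, so the conditional mean can only decrease. This monotonicity is elementary.

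\textbf{Step 3: bound $g(n)$ uniformly in $p_\mathrm{local}$.} This is the main obstacle. At $c=n$ we have $x^n=e^{-2}$, so
\begin{equation*}
g(n)=\frac{p_\mathrm{local}\,x\,(1-q^n e^{-2})}{(1-qx)(1-q^n)}.
\end{equation*}
I will show that $g(n)$ is increasing in $p_\mathrm{local}$ on $(0,1/n]$. The intuition is that larger $p_\mathrm{local}$ shifts the conditional law of $N'$ toward smaller values. More concretely, the conditional pmf on $\{1,\dots,n\}$ is proportional to $q^{k}$, and this family is monotone likelihood ratio in $q$; since $x^k$ is decreasing in $k$, the expectation decreases in $q$, i.e.\ increases in $p_\mathrm{local}$. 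It therefore suffices to evaluate at $p_\mathrm{local}=1/n$.

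At $p_\mathrm{local}=1/n$, use $q^n=(1-1/n)^n\le e^{-1}$, together with the bounds $1-qx\ge 1-(1-1/n)(1-2/n+2/n^2)$ and $x\le 1$. These should give an upper bound on $g(n)$ that is a function of $n$ alone, with limiting value as $n\to\infty$ equal to
\begin{equation*}
\int_0^1 e^{-t}e^{-2t}\,dt\Big/\int_0^1 e^{-t}\,dt=\frac{(1-e^{-3})/3}{1-e^{-1}}\approx 0.50,
\end{equation*}
which is comfortably below $0.78$.

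The delicate part is the finite-$n$ regime near $n=5$ or $6$. There I expect to check the explicit bound directly, for example by showing that the resulting rational-exponential expression in $n$ is monotone in $n$ for $n\ge 6$ and evaluating it at $n=6$. The evenness hypothesis is presumably what makes a clean pairing or estimate possible in the original argument; I would use it only if my bounds on $(1-1/n)^n$ need $n\ge 6$. The remaining step is to verify numerically that the small cases stay below $w_\mathrm{QKD}$.
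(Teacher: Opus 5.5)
Your argument is correct, but it reaches the conclusion by a different route from the paper's. The paper uses majorization for both reductions. Monotonicity in $c_\mathrm{local}$ comes from $\zvec^{(c+1)}\prec\zvec^{(c)}$ together with Schur-convexity of $\xvec\mapsto\langle\xvec,\muvec\rangle$ for decreasing $\muvec$. At $c_\mathrm{local}=n$, the paper compares the truncated geometric pmf on $\{1,\dots,n\}$ with the flat pmf $2/n$ on $\{1,\dots,n/2\}$. This is exactly where its hypotheses enter: $p_\mathrm{local}\le 1/n$ guarantees that the first entry is at most $2/n$, and evenness makes $n/2$ an integer. The comparison yields a bound independent of $p_\mathrm{local}$, namely $\frac{2}{n}\frac{1-e^{-1}}{1-e^{-2/n}}$, which the paper shows is decreasing in $n$ and evaluates to $\approx 0.767$ at $n=5$.

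You instead use an elementary averaging argument for $c$. For $p_\mathrm{local}$, you locate the worst case $p_\mathrm{local}=1/n$ by likelihood-ratio ordering and then estimate the closed form there. Your route needs no evenness and gives a sharper bound. The paper's route avoids computing the closed form and never has to identify a worst-case $p_\mathrm{local}$, because its comparison is already uniform in $p_\mathrm{local}\le 1/n$.

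Your last step is left at ``should give,'' and one part of it is off. The bound $q^n\le e^{-1}$ can only be used in the denominator factor $1-q^n$. In the numerator factor $1-q^ne^{-2}$ it points the wrong way. So bound the numerator crudely: $p_\mathrm{local}\,x\,(1-q^ne^{-2})\le 1/n$. Combined with $1-qx\ge (3n^2-4n+2)/n^3$, this gives $g(n)\le\big[(3-4/n+2/n^2)(1-e^{-1})\big]^{-1}$. This bound is decreasing in $n$, equals $\approx 0.662$ at $n=6$ (and $\approx 0.694$ at $n=5$), and tends to $\approx 0.527$. The value $0.50$ you quoted is the limit of $g(n)$ itself at $p_\mathrm{local}=1/n$, not of this bound. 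Either way, no separate numerical check of small cases is needed.
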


To prove the above proposition, we will use techniques from the theory of majorization \cite{marshall1979inequalities}. A vector $\xvec\in \RR^n$ \textit{is majorized by} a vector $\yvec\in \RR^n$, denoted $\xvec\prec \yvec$, if \cite[Definition 1.A.1]{marshall1979inequalities}
\begin{enumerate}
    \item $\sum_{i=1}^k x^\downarrow_i \leq \sum_{i=1}^k y^\downarrow_i$ for $k=1,\dots, n$
    \item $\sum_{i=1}^n x_i = \sum_{i=1}^n y_i$,
\end{enumerate}
where $\xvec^\downarrow$ is the vector in which the entries of $\xvec$ are sorted in decreasing order. 
Intuitively, $\xvec \prec \yvec$ if the entries of $\xvec$ are spread out more evenly than those of $\yvec$. Let $\mathcal{D}\subset \RR^n$ be the set of all vectors whose entries are sorted in decreasing order. A function $\phi:\mathcal{D}\to \RR$ is \emph{Schur convex on $\mathcal{D}$} (c.f. \cite[Definition 3.A.1]{marshall1979inequalities}) if for all $\xvec,\yvec \in \mathcal{D}$,
\begin{equation}
    \xvec\prec \yvec \implies \phi(\xvec)\leq \phi(\yvec).
\end{equation}
The particular fact that we will use from majorization theory is that taking the inner product with a fixed decreasing vector is a Schur convex function.
\begin{lemma}[Schur convexity of inner product with a fixed decreasing vector]\label{lemma: schur convexity of inner product with decreasing vectors}
    Let $\muvec\in \mathcal{D}$. The function $\phi:\mathcal{D}\to \RR$ defined through $\xvec \mapsto \langle \xvec, \muvec\rangle$, where $\langle-,-\rangle$ is the standard inner product on $\RR^n$,
    is Schur-convex on $\mathcal{D}$. 
\end{lemma}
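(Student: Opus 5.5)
The plan is to verify the majorization definition directly via partial sums, using Abel summation. Fix $\muvec\in\mathcal{D}$, so $\mu_1\geq\mu_2\geq\dots\geq\mu_n$. Take $\xvec,\yvec\in\mathcal{D}$ with $\xvec\prec\yvec$. Since both vectors already have their entries in decreasing order, $\xvec^\downarrow=\xvec$ and $\yvec^\downarrow=\yvec$. Define the partial sums $X_k:=\sum_{i=1}^k x_i$ and $Y_k:=\sum_{i=1}^k y_i$, with $X_0=Y_0=0$. The majorization hypothesis then says exactly that $X_k\leq Y_k$ for $k=1,\dots,n-1$ and $X_n=Y_n$. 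We must show $\langle \xvec,\muvec\rangle\leq\langle\yvec,\muvec\rangle$.

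The key step is summation by parts. Writing $x_i=X_i-X_{i-1}$ gives
\begin{equation*}
\langle \xvec,\muvec\rangle=\sum_{i=1}^n (X_i-X_{i-1})\mu_i=\sum_{k=1}^{n-1}X_k(\mu_k-\mu_{k+1})+X_n\mu_n ,
\end{equation*}
and the same identity holds for $\yvec$ with $Y_k$ in place of $X_k$. Subtracting the two gives
\begin{equation*}
\langle\yvec,\muvec\rangle-\langle\xvec,\muvec\rangle=\sum_{k=1}^{n-1}(Y_k-X_k)(\mu_k-\mu_{k+1})+(Y_n-X_n)\mu_n .
\end{equation*}
The last term vanishes because $X_n=Y_n$. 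Each remaining summand is a product of two nonnegative factors: $Y_k-X_k\geq0$ by the majorization hypothesis, and $\mu_k-\mu_{k+1}\geq0$ because $\muvec$ is decreasing. Hence the difference is nonnegative, which is exactly the Schur convexity claim on $\mathcal{D}$.

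There is no real obstacle here. The only points needing care are the index bookkeeping in the Abel summation and the observation that the equal-total condition is what removes the boundary term $\mu_n(Y_n-X_n)$. Without that condition, the sign of this term would be uncontrolled when $\mu_n<0$. One could instead cite the known characterization that a function on $\mathcal{D}$ is Schur convex iff it is increasing in the partial-sum coordinates (Marshall–Olkin, 3.A.3), but the direct computation above is self-contained.
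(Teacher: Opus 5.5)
Your proof is correct, but it takes a different route from the paper. The paper does no computation. It invokes the differential criterion of Marshall--Olkin (Theorem 3.A.3): a sufficiently smooth function on $\mathcal{D}$ is Schur convex there iff its gradient lies in $\mathcal{D}$. It then notes that $\nabla\phi(\xvec)=\muvec$ has decreasing entries. You instead verify the definition of $\prec$ directly, via the Abel-summation identity $\langle\yvec,\muvec\rangle-\langle\xvec,\muvec\rangle=\sum_{k=1}^{n-1}(Y_k-X_k)(\mu_k-\mu_{k+1})+(Y_n-X_n)\mu_n$.

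This identity is essentially the mechanism behind the cited theorem, specialised to a linear function: in partial-sum coordinates, $\phi$ has slopes $\mu_k-\mu_{k+1}\geq 0$. So the two arguments agree in substance. The paper's version is a one-line appeal to a standard result, given the citation. Yours is self-contained and needs no differentiability hypotheses. It also shows which majorization condition does what: the inequalities $X_k\le Y_k$ pair with the gaps $\mu_k-\mu_{k+1}\ge 0$, and the equal-total condition removes the boundary term.

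A small bonus of your computation: when $\mu_n\geq 0$, as in the paper's applications where $\mu_k=\lambda^k$, the conclusion already follows from the partial-sum inequalities $X_k\le Y_k$ for $k=1,\dots,n$. The equal-total condition is not needed in that case.
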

\begin{proof}
    Let $\muvec \in \mathcal{D}$, then $\xvec\mapsto \langle \xvec, \muvec\rangle$ is smooth on $\RR^n$. By \cite[Theorem 3.A.3]{marshall1979inequalities} it thus suffices to show that $\nabla \phi(\xvec) \in \mathcal{D}$ for all $\xvec\in \mathcal{D}$. This is the case since for $k=1,\dots, n-1$ we have
    \begin{equation}
        (\nabla \phi(\xvec))_k = (\partial_k\phi)(\xvec) = \mu_k \geq \mu_{k+1}= (\partial_{k+1}\phi)(\xvec) = (\nabla \phi(\xvec))_{k+1},
    \end{equation}
    where we have used that $(\partial_k \phi)(\xvec) = \mu_k$ for all $\xvec\in \RR^n$ and the inequality follows since the entries of $\muvec$ are decreasing.
\end{proof}

We use the Schur convexity of the inner product to show that $w_\mathrm{max}(c_\mathrm{local})$ defined in \cref{eq: wmax clocal} is monotonically decreasing with $c_\mathrm{local}$. To simplify the notation, we let $q,\lambda\in (0,1)$ and consider the map
    \begin{equation}\label{eq: definition of map f}
        f_{(q,\lambda)}:\NN\to [0,1], \qquad c\mapsto \EE\left(\lambda^{N'}\mid N'\leq c\right)\quad\mathrm{with}\quad N'\sim \Geo(q),
    \end{equation}
which has the same structure as $w_\mathrm{max}(c_\mathrm{local})$ in \cref{eq: wmax clocal}.
\begin{lemma}\label{lemma: monotonicity of werner with c}
    The map $f_{(q,\lambda)}:\NN\to [0,1]$ defined in \cref{eq: definition of map f} is monotonically decreasing.
\end{lemma}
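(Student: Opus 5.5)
The plan is to write $f_{(q,\lambda)}(c)$ as an inner product between a probability vector and a fixed decreasing vector, and then apply Lemma \ref{lemma: schur convexity of inner product with decreasing vectors}. Conditioning the geometric variable on $N'\le c$ gives
\begin{equation*}
f_{(q,\lambda)}(c)=\sum_{k=1}^{c} x^{(c)}_k \lambda^k,\qquad x^{(c)}_k:=\frac{q(1-q)^{k-1}}{1-(1-q)^c},\quad k=1,\dots,c .
\end{equation*}
To compare $c$ and $c+1$ in a common space $\RR^{c+1}$, pad $\xvec^{(c)}$ with a trailing zero. Take the fixed vector $\muvec=(\lambda,\lambda^2,\dots,\lambda^{c+1})$, which lies in $\mathcal{D}$ because $\lambda\in(0,1)$. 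Then $f_{(q,\lambda)}(c)=\langle \xvec^{(c)},\muvec\rangle$ and $f_{(q,\lambda)}(c+1)=\langle \xvec^{(c+1)},\muvec\rangle$.

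Both vectors have entries that are already decreasing, since $(1-q)^{k-1}$ decreases in $k$ and the padded zero is smallest. So both lie in $\mathcal{D}$, and both sum to $1$. The key step is to show $\xvec^{(c+1)}\prec\xvec^{(c)}$. For the partial sums with $j\le c$, set $r:=1-q$. Then
\begin{equation*}
\sum_{k=1}^j x^{(c)}_k=\frac{1-r^j}{1-r^c}\ \ge\ \frac{1-r^j}{1-r^{c+1}}=\sum_{k=1}^j x^{(c+1)}_k ,
\end{equation*}
because $1-r^{c}\le 1-r^{c+1}$. At $j=c+1$ both sums equal $1$. This gives the majorization.

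Lemma \ref{lemma: schur convexity of inner product with decreasing vectors} then yields $f_{(q,\lambda)}(c+1)=\phi(\xvec^{(c+1)})\le\phi(\xvec^{(c)})=f_{(q,\lambda)}(c)$. Induction on $c$ completes the proof.

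The only delicate point is the bookkeeping: the lemma requires both arguments to be in $\mathcal{D}$ of the same dimension. The zero padding handles this, and it does not affect the inner product.

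As a sanity check, intuitively the truncated geometric distribution at $c+1$ shifts mass toward larger $k$, where $\lambda^k$ is smaller. If the majorization route ran into trouble, a direct fallback is available: write $f(c+1)$ as a convex combination of $f(c)$ and $\lambda^{c+1}$, and note $\lambda^{c+1}\le\lambda^k$ for all $k\le c$, so $\lambda^{c+1}\le f(c)$. Either argument suffices, and I would present the majorization version since the paper has set up that machinery.
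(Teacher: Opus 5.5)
Your proof is correct and follows the paper's argument: you write $f_{(q,\lambda)}(c)=\langle \xvec^{(c)},\muvec\rangle$ with the conditional geometric weights (the paper's $\zvec^{(c)}$), work in the first $c+1$ coordinates, use $\xvec^{(c+1)}\prec\xvec^{(c)}$, and apply Lemma \ref{lemma: schur convexity of inner product with decreasing vectors}. Your partial-sum check $(1-r^j)/(1-r^c)\ge(1-r^j)/(1-r^{c+1})$ proves the majorization step, which the paper only asserts, and your convex-combination fallback is a valid, more elementary alternative.
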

\begin{proof}
    Define the infinite dimensional vector $\muvec = (\mu_k)_{k\geq 1}$ with entries $\mu_k = \lambda^k$. Then, with $\zvec^{(c)} = (z^{(c)}_k)_{k\geq 1}$ defined by the conditional geometric probabilities
    \begin{equation}
        z^{(c)}_k = 
        \begin{cases}
            q(1-q)^{k-1}/\sum_{j=1}^{c}q(1-q)^{j-1} & \mathrm{if}\,k\leq c \\
            0 & \mathrm{otherwise}
        \end{cases},
    \end{equation}
    we can express $f_{(q,\lambda)}(c)$ as 
    \begin{equation}
        f_{(q,\lambda)}(c) = \langle \zvec^{(c)},\muvec\rangle. 
    \end{equation}
    We now restrict $\muvec$, $\zvec^{(c)}$, and $\zvec^{(c+1)}$ to the first $c+1$ entries. These restrictions are all part of $\mathcal{D}\subset \RR^{c+1}$. Moreover, $\zvec^{(c+1)}$ is majorized by $\zvec^{(c)}$. It thus follows from Lemma \ref{lemma: schur convexity of inner product with decreasing vectors} that
    \begin{equation}
        f_{(q,\lambda)}(c) =  \langle \zvec^{(c)},\muvec\rangle \geq \langle \zvec^{(c+1)},\muvec\rangle = f(c+1),
    \end{equation}
    and we conclude that $f_{(q,\lambda)}$ decreases monotonically with $c$.
\end{proof}
The above lemma shows that $w_\mathrm{max}(c_\mathrm{local})$ decreases monotonically with $c_\mathrm{local}$. 
The next step is to upper bound $w_\mathrm{max}(n_\mathrm{coh-active})$ when the cutoff is taken to be the active coherence time. We use the map $f_{(q,\lambda)}$ defined in \cref{eq: definition of map f} again.
\begin{lemma}\label{lemma: bound on truncated geometric expecation}
    Let $\lambda = e^{-2/n}$. If $1\leq n\leq 1/q$ is even, then
    \begin{equation}
        f_{(q,\lambda)}(n)\leq \frac{2}{n}\frac{1-e^{-1}}{1-e^{-2/n}}.
    \end{equation}
\end{lemma}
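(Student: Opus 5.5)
We need to bound $f_{(q,\lambda)}(n)=\sum_{k=1}^n z^{(n)}_k\lambda^k$ with $\lambda=e^{-2/n}$ and $z^{(n)}_k\propto (1-q)^{k-1}$ on $\{1,\dots,n\}$. The right-hand side $\frac{2}{n}\frac{1-e^{-1}}{1-e^{-2/n}}$ is essentially what one gets from the \emph{uniform} distribution on $\{1,\dots,n/2\}$ applied to $\lambda^k$. Indeed,
\begin{equation*}
\frac{2}{n}\sum_{k=1}^{n/2} e^{-2k/n} = \frac{2}{n}\,e^{-2/n}\,\frac{1-e^{-1}}{1-e^{-2/n}} \le \frac{2}{n}\,\frac{1-e^{-1}}{1-e^{-2/n}}.
\end{equation*}
This is why evenness of $n$ is required. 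So the plan is to compare the truncated geometric weights $\zvec^{(n)}$ with the uniform vector $\yvec$ on the first $n/2$ coordinates, padded with zeros to length $n$, and to invoke Lemma~\ref{lemma: schur convexity of inner product with decreasing vectors} with $\muvec=(\lambda^k)_{k=1}^n\in\mathcal{D}$.

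The key step is to show $\zvec^{(n)}\prec\yvec$. Both vectors are decreasing and sum to one. We therefore need the partial sums to satisfy
\begin{equation*}
\sum_{k\le j} z^{(n)}_k\le \min\!\left(\frac{2j}{n},1\right) \qquad \text{for all } j.
\end{equation*}
For $j\ge n/2$ the right side is $1$ and there is nothing to prove. For $j<n/2$ it suffices to bound the largest entry, since a decreasing vector's partial sums are at most $j\,z^{(n)}_1$. We need
\begin{equation*}
z^{(n)}_1=\frac{q}{1-(1-q)^n}\le \frac{2}{n}.
\end{equation*}
Here the hypothesis $n\le 1/q$ enters: since $nq\le 1$,
\begin{equation*}
1-(1-q)^n\ge 1-e^{-nq}\ge nq\,(1-e^{-1})\ge nq/2,
\end{equation*}
where the middle inequality holds by concavity of $x\mapsto1-e^{-x}$ on $[0,1]$. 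Hence $z^{(n)}_1\le 2/n$.

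Once majorization is established, Schur convexity of $\xvec\mapsto\langle\xvec,\muvec\rangle$ gives $f_{(q,\lambda)}(n)=\langle\zvec^{(n)},\muvec\rangle\le\langle\yvec,\muvec\rangle$, and the geometric sum above yields the claimed bound. The hypothesis $n\ge1$ together with evenness ensures that $n/2$ is a positive integer.

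The only delicate point is the inequality $z^{(n)}_1\le 2/n$, that is, $1-(1-q)^n\ge nq/2$ whenever $nq\le1$. I expect it to go through as sketched via concavity; alternatively one could use the Bernoulli-type bound $(1-q)^n\le 1-nq+\binom n2 q^2$, which gives $1-(1-q)^n\ge nq(1-(n-1)q/2)\ge nq/2$. Everything else is bookkeeping.
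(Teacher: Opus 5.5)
Your proposal is correct and follows essentially the same route as the paper: show that the truncated geometric weights are majorized by the uniform vector with entries $2/n$ on the first $n/2$ coordinates (using $z_1\le 2/n$, which follows from $nq\le 1$), apply Schur convexity of $\xvec\mapsto\langle\xvec,\muvec\rangle$, and evaluate the resulting geometric sum. The differences are minor. You prove $1-(1-q)^n\ge nq/2$ via $(1-q)^n\le e^{-nq}$ and concavity of $1-e^{-x}$, whereas the paper uses monotonicity of $\sum_{k=0}^{n-1}(1-q)^k$ in $q$ and evaluates at $q=1/n$; and you spell out the partial-sum bound $\sum_{k\le j}z_k\le j\,z_1$, which the paper leaves implicit when it deduces the majorization.
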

\begin{proof}
    Let $\zvec = (z_k)_{k=1}^{n}$ be the vector whose entries equal the conditional geometric probabilities
    \begin{equation}
        z_k = \frac{q(1-q)^{k-1}}{\sum_{j=1}^{n}q (1-q)^{j-1}} = \frac{q(1-q)^{k-1}}{1-(1-q)^n}
    \end{equation}
    and let $\muvec = (\mu_k)_{k=1}^{n}$ be the vector with entries
    \begin{equation}
        \mu_k = \lambda^{k}.
    \end{equation}
    Then, 
    \begin{equation}
        f_{(q,\lambda)} = \langle \zvec, \muvec\rangle.
    \end{equation}
    Now define the vector $\zvec'=(z'_k)_{k=1}^{n}$ with entries
    \begin{equation}
        z_k' = \begin{cases}
            2/n & \mathrm{if}\,k\leq n/2 \\
            0 & \mathrm{otherwise}
        \end{cases}.
    \end{equation}
    The entries of $\muvec$, $\zvec$, and $\zvec'$ are in decreasing order, i.e. they are all in $\mathcal{D}\subset \RR^n$. Moreover, by the assumption that $n\leq 1/q$, it follows from Lemma \ref{lemma: majorization assumption} that
    \begin{equation}\label{eq: majorization assumption application}
        z'_1=\frac{2}{n}\geq \frac{q}{1-(1-q)^n} =z_1,
    \end{equation}
    so that $\zvec$ is majorized by $\zvec'$. Since $\xvec \mapsto \langle \xvec,\muvec\rangle$ is Schur convex by Lemma \ref{lemma: schur convexity of inner product with decreasing vectors}, it follows that
    \begin{equation}
        f_{(q,\lambda)}(n) = \langle \zvec, \muvec\rangle \leq \langle \zvec',\muvec\rangle = \sum_{k=1}^{n/2} \frac{2}{n}\lambda^k \leq \frac{2}{n}\frac{1-e^{-1}}{1-e^{-2/n}},
    \end{equation}
    where in the final step we substituted $\lambda = e^{-2/n}$ and used that $\lambda \leq 1$.
\end{proof}
It follows from the above lemma that 
\begin{equation}\label{eq: upper bound w max}
    w_\mathrm{max}(n_\mathrm{coh-active}) \leq \frac{2}{n_\mathrm{coh-active}}\frac{1-e^{-1}}{1-e^{-2/n_\mathrm{coh-active}}}.
\end{equation}
The inequality in \cref{eq: majorization assumption application} in the proof of the lemma above is ensured by the following.
\begin{lemma}\label{lemma: majorization assumption}
    If $n\geq 1$, then $q\leq 1/n$ implies that
    \begin{equation}
        \frac{2}{n}\geq \frac{q}{1-(1-q)^n}.
    \end{equation}
\end{lemma}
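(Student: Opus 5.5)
The plan is to clear denominators and reduce the claim to a lower bound on $g(q):=1-(1-q)^n$. Here $q\in(0,1)$, as in the definition of $f_{(q,\lambda)}$. Then $1-(1-q)^n>0$, so the claimed inequality $\frac{2}{n}\geq \frac{q}{1-(1-q)^n}$ is equivalent to
\begin{equation*}
    1-(1-q)^n \;\geq\; \frac{nq}{2}\qquad\text{for } 0<q\leq 1/n .
\end{equation*}

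The first route is a second-order Bonferroni (inclusion--exclusion) bound. The inequality $(1-q)^n\leq 1-nq+\binom{n}{2}q^2$ holds for $q\in[0,1]$. This can be read either as the Bonferroni inequality for the union of $n$ independent events of probability $q$, or checked by Taylor's theorem with remainder, since the third derivative of $(1-q)^n$ in $q$ is nonpositive. It gives
\begin{equation*}
    1-(1-q)^n \;\geq\; nq-\binom{n}{2}q^2 \;=\; nq\Bigl(1-\frac{(n-1)q}{2}\Bigr).
\end{equation*}
The hypothesis $q\leq 1/n$ gives $(n-1)q\leq (n-1)/n<1$. The bracket is therefore at least $1/2$, which yields $1-(1-q)^n\geq nq/2$, as required.

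As a backup in case the Bonferroni step needs more justification, I would use concavity. The map $g$ is concave on $[0,1]$ because $g''(q)=-n(n-1)(1-q)^{n-2}\leq 0$, and $g(0)=0$. Hence on $[0,1/n]$ the graph of $g$ lies above the chord from $(0,0)$ to $(1/n,g(1/n))$, so $g(q)\geq nq\,g(1/n)$. It then suffices to show $g(1/n)=1-(1-1/n)^n\geq 1/2$, which follows from $(1-1/n)^n\leq e^{-1}<1/2$.

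The only step needing care is the second-order expansion bound (or, in the alternative route, the chord bound). Both are standard one-variable calculus facts, so no real obstacle is expected. The edge case $n=1$ is trivial: the claim reads $2\geq q/q=1$.
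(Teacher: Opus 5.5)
Your proof is correct. Your backup route is essentially the paper's argument. The paper writes $\frac{1-(1-q)^n}{q}=\sum_{k=0}^{n-1}(1-q)^k$ and notes that this is decreasing in $q$. That is equivalent to your chord bound, since $g(q)/q$ is nonincreasing for a concave $g$ with $g(0)=0$. The paper then evaluates at $q=1/n$ and concludes with the same estimate $(1-1/n)^n\le e^{-1}$.

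Your primary route is genuinely different. The second-order inclusion--exclusion bound $1-(1-q)^n\ge nq-\binom{n}{2}q^2$ is purely polynomial and needs no exponential estimate. It also shows exactly where $q\le 1/n$ enters: it keeps the quadratic correction $(n-1)q/2$ below $1/2$. It even gives $1-(1-q)^n\ge \frac{n+1}{2}\,q$.

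In exchange, the paper's route gives a better constant for large $n$, namely $1-(1-q)^n\ge (1-e^{-1})\,nq$, with a one-line monotonicity argument. Your Bonferroni/Taylor step uses integrality of $n$ (or at least $n\ge 2$, for the sign of the third derivative), as does the paper's geometric-sum identity. Your concavity version works for every real $n\ge 1$. Since the lemma is only applied with $n=n_\mathrm{coh-active}$ an even integer, this makes no difference here.
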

\begin{proof}
    Let $n\geq 1$. We will show that for $q\leq 1/n$,
    \begin{equation}\label{eq: inverted majorization condition}
       \frac{n}{2}\leq \frac{1-(1-q)^n}{q}.
    \end{equation}
    The right hand side of \cref{eq: inverted majorization condition} is monotonically decreasing in $q$ for $q\leq 1$. This can be seen directly from the representation of the right hand side as the sum $\sum_{k=0}^{n-1}(1-q)^k$: if $q$ increases, then each term in the sum becomes smaller. Hence, for $q\leq 1/n$, the right hand side is minimized at $q=1/n$. At $q=1/n$, the inequality in \cref{eq: inverted majorization condition} is satisfied since
    \begin{equation}
        \frac{1-(1-q)^n}{q}\Big\lvert_{q=1/n} = \frac{1-(1-1/n)^n}{1/n} \geq n(1-e^{-1}) \geq \frac{n}{2}
    \end{equation}
    where we have used that $(1-1/n)^n\leq e^{-1}\approx 0.368$ \footnote{To see this, note that for any $x\in \RR$ it holds that $1-x\leq e^{-x}$. With $x=1/n$ this yields $1-1/n \leq e^{-1/n}$. Raising this inequality to the $n$-th power gives the result.}. We conclude that the inequality in \cref{eq: inverted majorization condition} holds for all $q\leq 1/n$, which proves the lemma.
    
\end{proof}

The final lemma shows that the upper bound \cref{eq: upper bound w max} found from Lemma \ref{lemma: bound on truncated geometric expecation} decreases with $n_\mathrm{coh-active}$.
\begin{lemma}\label{lemma: monotonic function}
    The function $g:[1,\infty)\to \RR$ defined by $n \mapsto \frac{2}{n}\frac{1-e^{-1}}{1-e^{-2/n}}$ is monotonically decreasing.
\end{lemma}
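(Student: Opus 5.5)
Substituting $x=2/n$ turns the claim into a statement about a function of one variable, which I will then handle with a derivative sign check. For $n\in[1,\infty)$, put $x=2/n\in(0,2]$. Then
\begin{equation*}
g(n)=(1-e^{-1})\,\frac{x}{1-e^{-x}}=(1-e^{-1})\,\frac{xe^{x}}{e^{x}-1}=:(1-e^{-1})\,h(x).
\end{equation*}
The map $n\mapsto x$ is strictly decreasing and $1-e^{-1}>0$. It therefore suffices to show that $h(x)=x/(1-e^{-x})$ is monotonically increasing on $(0,2]$; in fact I will show this on all of $(0,\infty)$.

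For the monotonicity of $h$, I compute the derivative directly:
\begin{equation*}
h'(x)=\frac{(1-e^{-x})-xe^{-x}}{(1-e^{-x})^{2}}=\frac{e^{-x}\,(e^{x}-1-x)}{(1-e^{-x})^{2}}.
\end{equation*}
The numerator is nonnegative because of the elementary inequality $e^{x}\geq 1+x$, which is the same inequality $1-x\leq e^{-x}$ already used in the footnote to Lemma \ref{lemma: majorization assumption}, with $x$ replaced by $-x$. The inequality is strict for $x\neq 0$. Hence $h'(x)>0$ on $(0,\infty)$, so $h$ is strictly increasing there, and consequently $g$ is strictly decreasing in $n$ on $[1,\infty)$.

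As a cross-check, I can also argue without derivatives. The function $1/h(x)=(1-e^{-x})/x=\int_0^1 e^{-xt}\,dt$ is an average of the decreasing functions $x\mapsto e^{-xt}$, so it is decreasing in $x$. This again gives that $h$ is increasing.

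I do not expect a real obstacle. The only step that needs care is the sign bookkeeping in the chain of decreasing reparametrizations. The removable point $x\to0$ does not arise, since $n\geq1$ gives $x\leq 2$ and $n<\infty$ keeps $x>0$.

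Combined with Lemma \ref{lemma: bound on truncated geometric expecation} and Lemma \ref{lemma: monotonicity of werner with c}, this yields Proposition \ref{proposition: range for n_coh_active to guarantee no key at n is c}. It suffices to evaluate $g(5)=0.4(1-e^{-1})/(1-e^{-0.4})\approx 0.767<w_\mathrm{QKD}$; since $g$ is decreasing, the same bound holds for all $n_\mathrm{coh-active}\geq 5$.
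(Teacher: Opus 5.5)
Your proof is correct and follows essentially the paper's route. Both arguments reduce monotonicity to the sign of a derivative, and with $x=2/n$ the paper's condition $\frac{2}{n}\frac{e^{-2/n}}{1-e^{-2/n}}\leq 1$ is exactly your $xe^{-x}\leq 1-e^{-x}$, i.e.\ $e^{x}\geq 1+x$. The paper closes it by writing $\frac{e^{-x}}{1-e^{-x}}=\sum_{k\geq 1}e^{-kx}$ and bounding the sum by $\int_0^\infty e^{-xt}\,dt=1/x$, whereas you cite $e^{x}\geq 1+x$ directly, which also gives strict monotonicity; your derivative-free check via $(1-e^{-x})/x=\int_0^1 e^{-xt}\,dt$ is likewise valid.
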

\begin{proof}
    We will show that the derivative $g'$ is nonpositive on $[1,\infty)$. We have
    \begin{equation}
        g'(n) = -\frac{2}{n^2} \frac{1-e^{-1}}{1-e^{-2/n}} + \frac{4}{n^3}\frac{(1-e^{-1})e^{-2/n}}{(1-e^{-2/n})^2}.
    \end{equation}
    Hence, for $n\in [1,\infty)$,
    \begin{equation}\label{eq: chain of equivalances monotonicity lemma}
        g'(n)\leq 0 \iff -1 + \frac{2}{n}\frac{e^{-2/n}}{1-e^{-2/n}}\leq 0 \iff \frac{2}{n}\sum_{k=1}^{\infty}e^{-2k/n} \leq 1.
    \end{equation}
    Bounding the latter sum by an integral it follows that
    \begin{equation}
        \frac{2}{n}\sum_{k=1}^{\infty}e^{-2k/n} \leq \frac{2}{n}\int_0^\infty dx\,e^{-2x/n}= 1.
    \end{equation}
     This verifies the last inequality in \cref{eq: chain of equivalances monotonicity lemma}. We conclude that $g'(n)\leq 0$ for $n\in [1,\infty)$ so that $g$ is monotonically decreasing on $[1,\infty)$. 
\end{proof}

\begin{proof}[Proof of Proposition \ref{proposition: range for n_coh_active to guarantee no key at n is c}]
    Let $n_\mathrm{coh-active}$ be such that $5\leq n_\mathrm{coh-active}\leq 1/p_\mathrm{local}$. By Lemma \ref{lemma: monotonicity of werner with c} the maximum Werner parameter $w_\mathrm{max}(c_\mathrm{local})$ decreases monotonically with $c_\mathrm{local}$. To show that $w_\mathrm{max}(c_\mathrm{local})\leq w_\mathrm{QKD}$ for all $c_\mathrm{local}\geq n_\mathrm{coh-active}$, it thus suffices to show it for $c_\mathrm{local}=n_\mathrm{coh-active}$.
    We assume that $n_\mathrm{coh-active}\leq 1/p_\mathrm{local}$ so that 
    \begin{equation}
        w_\mathrm{max}(n_\mathrm{coh-active}) \leq \frac{2}{n_\mathrm{coh-active}}\frac{1-e^{-1}}{1-e^{-2/n_\mathrm{coh-active}}}
    \end{equation}
    by Lemma \ref{lemma: bound on truncated geometric expecation}.
    The upper bound on the right hand side is monotonically decreasing with $n_\mathrm{coh-active}$ by Lemma \ref{lemma: monotonic function}, and for $n_\mathrm{coh-active}=5$ it evaluates to approximately $0.767$, which is below the QKD threshold $w_\mathrm{QKD}\approx 0.780$. Since $n_\mathrm{coh-active}$ was assumed to be at least $5$, it follows that $w_\mathrm{max}(n_\mathrm{coh-active})$ is below the QKD threshold.
\end{proof}

\section{Validation with NetSquid}\label{app: validation simulation}

In this appendix we discuss the validation of the exact analysis and the OSS approximation with simulation. We have simulated the multiplexed quantum repeater using NetSquid \cite{coopmans2021netsquid}, which is a discrete event simulator for quantum networks. The simulation code has been made available in the repository at \cite{grimbergen2026multiplexing}. 

\subsection{Description of NetSquid Simulation}
NetSquid \cite{coopmans2021netsquid} is a discrete event simulator for quantum networks. 
The efficiency of a discrete event simulator lies in the fact that it samples the random time between events, rather than sampling the random outcome of every time step. 
Internally, NetSquid tracks the density matrices of all (entangled) qubits in the network. Decoherence channels are applied to qubits every time an event occurs that involves these qubits.
For the multiplexed quantum repeater we simulate two main events:
\begin{enumerate}
    \item Multiplexed deliveries of links between quantum chips on the repeater node and one of the end nodes.
    \item Deliveries of internal links between two quantum chips on the repeater node.
\end{enumerate}
The distribution of the time between these events is known. NetSquid samples these distributions to schedule the next events into the future. The simulation then jumps to the first event that will occur. The \textsc{swap} gates and Bell state measurements that have to be performed after multiplexed deliveries and internal deliveries, respectively, are simulated as instantaneous. 
When an end-to-end link is created, the corresponding density matrix is immediately removed from the simulation and an \emph{end-to-end delivery} is registered in the simulation results. Each end-to-end delivery has the attributes \texttt{time} and \texttt{fidelity}, corresponding to the time when the concerned end-to-end link was generated and its fidelity to the ideal Bell state, respectively. 

The raw simulation data consists of lists of end-to-end deliveries corresponding to different \textit{runs} of the simulation. Each run starts with an empty repeater and ends when $N_\mathrm{links}$ end-to-end links have been generated. For each parameter configuration, we simulate a total of $N_\mathrm{runs}$ independent runs.
Below and in \cref{alg: raw data processing} we describe how the statistics are obtained from the raw data. 

For a single run, the steady-state rate is estimated as
\begin{equation}
    \hat r = \frac{U}{T},
\end{equation}
where $U$ is the total number of end-to-end links generated in the run, and $T$ is the total time needed to generate these links. Since the run stops when the number of end-to-end links reaches some predefined total number of links $N_\mathrm{links}$, we have that $U=N_\mathrm{links}$ for every run, and that $T$ is equal to the time at which the last link was generated. The simulation outputs fidelity which can be converted to Werner parameter using the relation $w(F)=(4F-1)/3$ for Werner states. The steady-state Werner parameter for a single run is estimated as
\begin{equation}
    \hat w = \frac{V_\mathrm{sum}}{U},
\end{equation}
where $V_\mathrm{sum}$ is the sum of all the Werner parameters of end-to-end links in the run. Statistics are obtained by estimating the steady-state rate and steady-state Werner parameter over $N_\mathrm{runs}$. The overall estimate of the simulation is taken to be the mean of the estimates for each run, and the error in the simulation is quantified by the standard error of the mean.



\begin{algorithm}[H]
\caption{\textsc{ProcessRawResults}(ResultsRaw)}
\begin{algorithmic}[1]
\State $\textit{rates} \gets []$
\State $\textit{werners} \gets []$
\ForAll{$(\textit{runId},\ \textit{deliveries}) \in \textit{ResultsRaw}$}
    \State $U \gets |\textit{deliveries}|$ \Comment{total number of links is total number of deliveries}
    \State $T \gets \textit{deliveries}[-1].\text{time}$
    \Comment{total time is time of final delivery}
    \State $V_{\text{sum}} \gets \displaystyle\sum_{d \,\in\, \textit{deliveries}} w(d.\text{fidelity})$ \Comment{Convert fidelity to Werner parameter and sum}
    \State $\hat{r} \gets U \,/\, T$ \Comment{rate estimate}
    \State $\hat{w} \gets V_{\text{sum}} \,/\, U$ \Comment{Werner estimate}
    \State Append $\hat{r}$ to \textit{rates}
    \State Append $\hat{w}$ to \textit{werners}
\EndFor
\State
\State $N_\mathrm{runs} \gets |\textit{rates}|$ \Comment{number of runs}
\State $\bar{r} \gets \text{mean}(\textit{rates})$
\State $\bar{w} \gets \text{mean}(\textit{werners})$
\State $\sigma_r \gets \text{std}(\textit{rates}) \,/\, \sqrt{N_\mathrm{runs}}$ \Comment{standard error of the mean}
\State $\sigma_w \gets \text{std}(\textit{werners}) \,/\, \sqrt{N_\mathrm{runs}}$
\State
\State \Return $\left(\, r_{\text{sim}} = \bar{r},\; w_{\text{sim}} = \bar{w},\; \sigma_r,\; \sigma_w \,\right)$
\end{algorithmic}
\label{alg: raw data processing}
\end{algorithm}

\subsection{Validation of Analytical Findings via NetSquid Simulation}
We compare the steady-state rate and Werner parameter from the exact analysis, the OSS approximation, and NetSquid simulation in \cref{fig:IL2_validation_full_sim} for the same ideal local entanglement generation parameters as in \cref{fig:IL2_rate_werner}.
For the simulation, we set the number of runs to $N_\mathrm{runs}=50$ and the number of end-to-end links per run to $N_\mathrm{links}=2000$. Validation results for more parameter configurations including non-ideal local entanglement generation can be found in the data repository \cite{grimbergen2026multiplexing_data}. In all parameter configurations tested in the regime $2mp<1$ the three methods of computation show good agreement, supporting our hypothesis 
that the OSS approximation is a good approximation for the exact analysis in this regime. 

\begin{figure*}[htbp]
    \centering
    \begin{subfigure}[t]{0.45\textwidth}
        \centering
        \includegraphics[width=\textwidth]{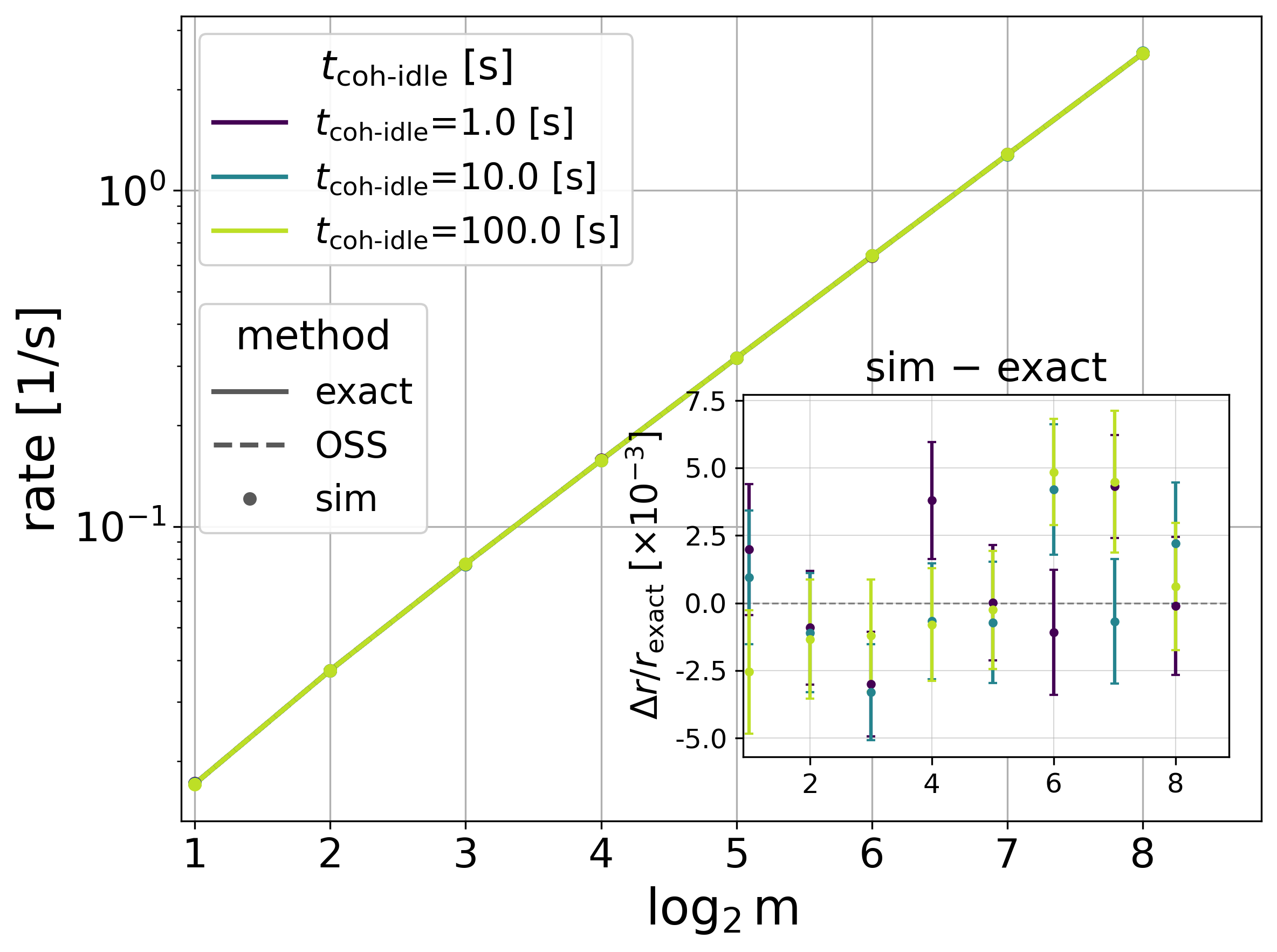}
        \caption{DynMux, rate}
        \label{fig:IL2_validation_rate2000_dyn}
    \end{subfigure}
    \hfill
    \begin{subfigure}[t]{0.45\textwidth}
        \centering
        \includegraphics[width=\textwidth]{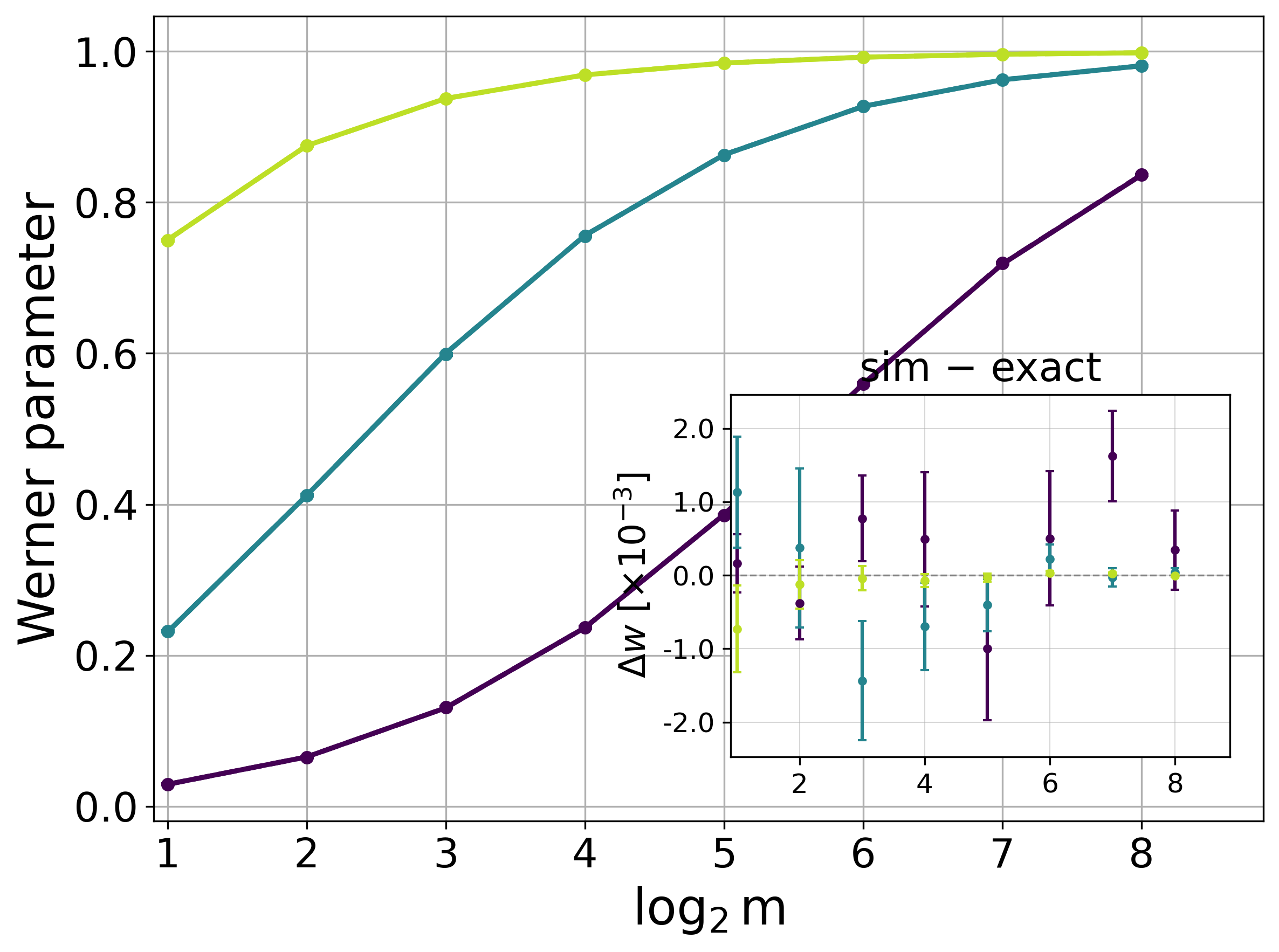}
        \caption{DynMux, Werner parameter}
        \label{fig:IL2_validation_Werner2000_dyn}
    \end{subfigure}
    \\
    \begin{subfigure}[t]{0.45\textwidth}
        \centering
        \includegraphics[width=\textwidth]{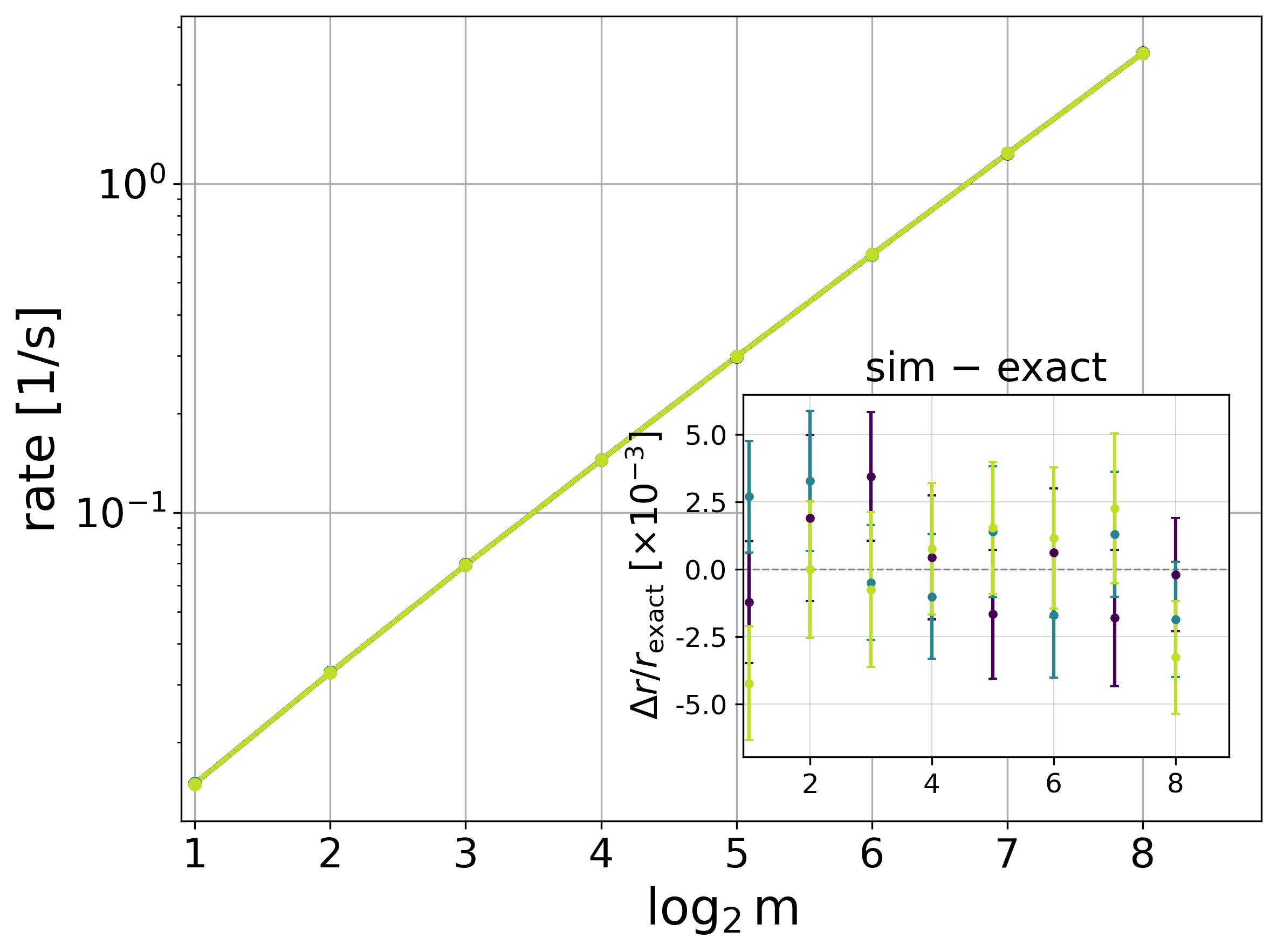}
        \caption{FxdMux, rate}
        \label{fig:IL2_validation_rate2000_fxd}
    \end{subfigure}
    \hfill
    \begin{subfigure}[t]{0.45\textwidth}
        \centering
        \includegraphics[width=\textwidth]{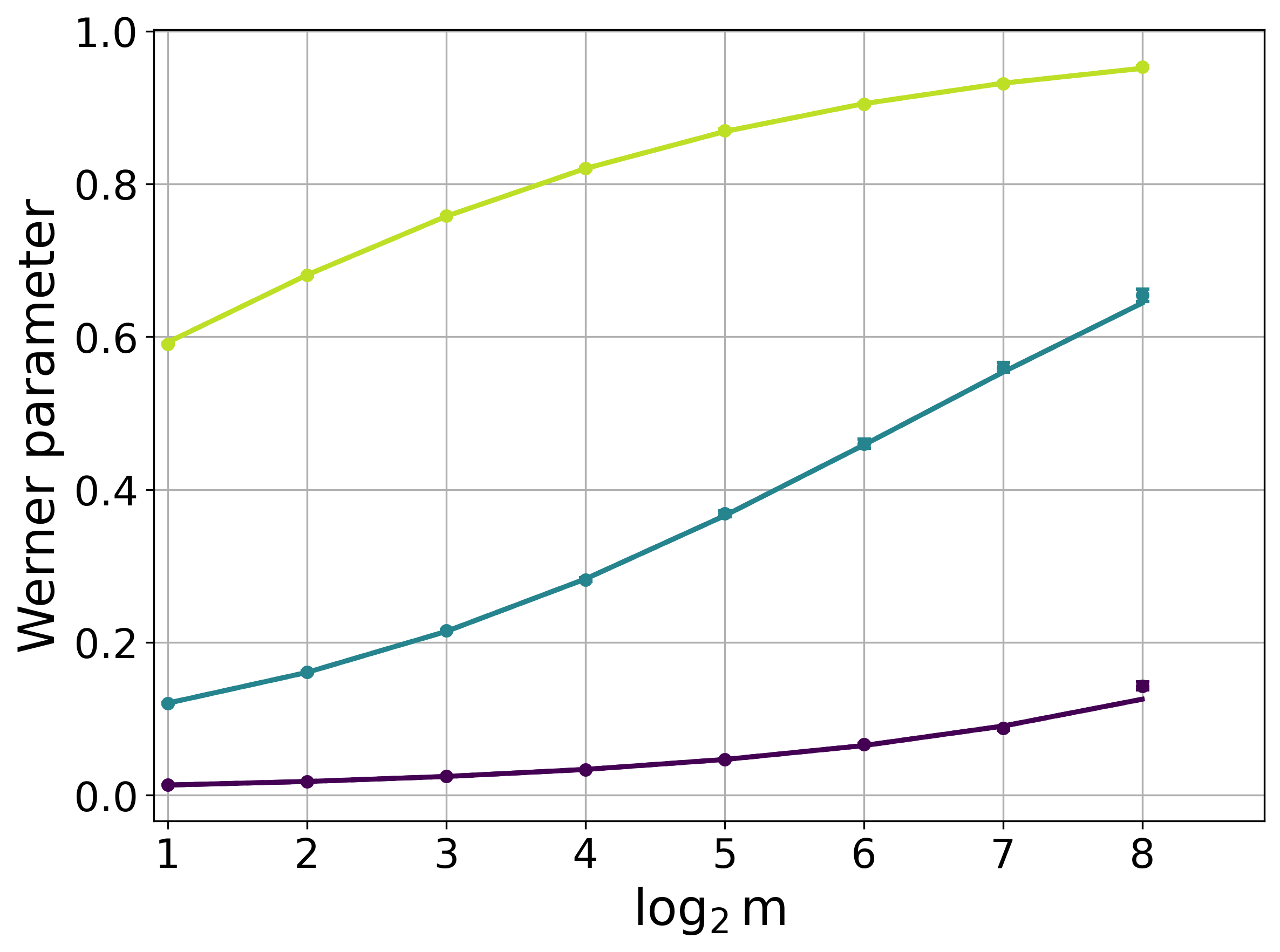}
        \caption{FxdMux, Werner parameter}
        \label{fig:IL2_validation_Werner2000_fxd}
    \end{subfigure}
    \caption{\textbf{Validation of Markov chain models with NetSquid simulation.} For the same parameters $2m=4,\dots, 512$ ($\log_2(m)=1,\dots,8$), $p=10^{-5}$, $t_\mathrm{coh-idle}=\SI{1}{\second},\SI{10}{\second},\SI{100}{\second}$, $t_\mathrm{long}=\SI{1}{\milli\second}$, $t_\mathrm{local}=\SI{1}{\micro\second}$, $p_\mathrm{local}=1$, and $n_\mathrm{coh-active}=\infty$ as in \cref{fig:IL2_rate_werner} \textbf{(a)} shows the steady-state rate for dynamic multiplexing computed with exact analysis (solid line), OSS approximation (dashed line), and NetSquid simulation (circles). Note that the lines for different $t_\mathrm{coh-idle}$ overlap since the rate is independent of $t_\mathrm{coh-idle}$. The lines for the exact analysis and the OSS approximation overlap, which shows that OSS approximates the exact analysis well in this regime. Simulation statistics are obtained over $50$ independent runs with each run spanning until $2000$ end-to-end links generation. Error bars show one standard error in the mean but are smaller than markers. The inset shows the relative difference in rate between the simulation and exact analysis as $\Delta r/r_\mathrm{exact}$ where $\Delta r=r_\mathrm{sim}-r_\mathrm{exact}$. Similarly, and for the same parameters, \textbf{(b)} shows the steady-state Werner parameter for DynMux, \textbf{(c)} the steady-state rate for FxdMux, and \textbf{(d)} the steady-state Werner parameter for FxdMux. Legend of (a) applies to (b), (c), and (d) as well. In (b), the inset for the Werner parameter shows the absolute error $\Delta w=w_\mathrm{sim}-w_\mathrm{exact}$. In (d) the inset is not shown because there is a visible difference between the simulation results and the exact analysis/OSS approximations for large values of $\log_2(m)$. 
    This is investigated in more detail in \cref{fig:IL2_validation_bias_decrease_fxdmux}, where it is shown that the bias decreases when the number of links per run is increased from $2000$ to $5000$, or the first $500$ links are discarded.}
    \label{fig:IL2_validation_full_sim}
\end{figure*}


However, there seems to be slight bias in the simulation to yield higher steady-state Werner parameters than the Markov chain analyses. This can be seen for example in \cref{fig:IL2_validation_full_sim} for the steady-state Werner parameter in the FxdMux policy at $\log_2(m)=8$. \cref{fig:IL2_validation_Werner2000_fxd} highlights the bias more clearly. There, the difference between the simulated steady-state Werner parameter and the steady-state Werner parameter from the exact analysis is seen to increase with the number of chips. 

Our hypotheses as to why the simulation overestimates the Werner parameter are that it does not completely reach steady-state or that the estimates of the long-time averages in \cref{alg: raw data processing} are still influenced by contributions from the transient phase. 
Both hypotheses are motivated by the fact that we expect that with more chips on the node, the system needs more end-to-end links to be generated before steady-state is reached. This could explain why the bias increases with $\log_2(m)$. In particular, the simulation might miss the rare events where the queue size becomes large. These events are rare because it is unlikely that many links are generated on a single segment in one step (in dynamic multiplexing) or that links are repeatedly generated on one segment without generating on the other segment (in fixed multiplexing). The impact of these rare events would be to reduce the Werner parameter, since it leads to links that have to be stored in queue for a long time. We have observed numerically that increasing the number of links per run (\cref{fig:IL2_validation_Werner5000_fxd}) or discarding the first links (\cref{fig:IL2_validation_Wernertc500_fxd}) both reduce the bias.

\begin{figure*}[htbp]
    \centering
    \begin{subfigure}[t]{0.3\textwidth}
        \centering
        \includegraphics[width=\textwidth]{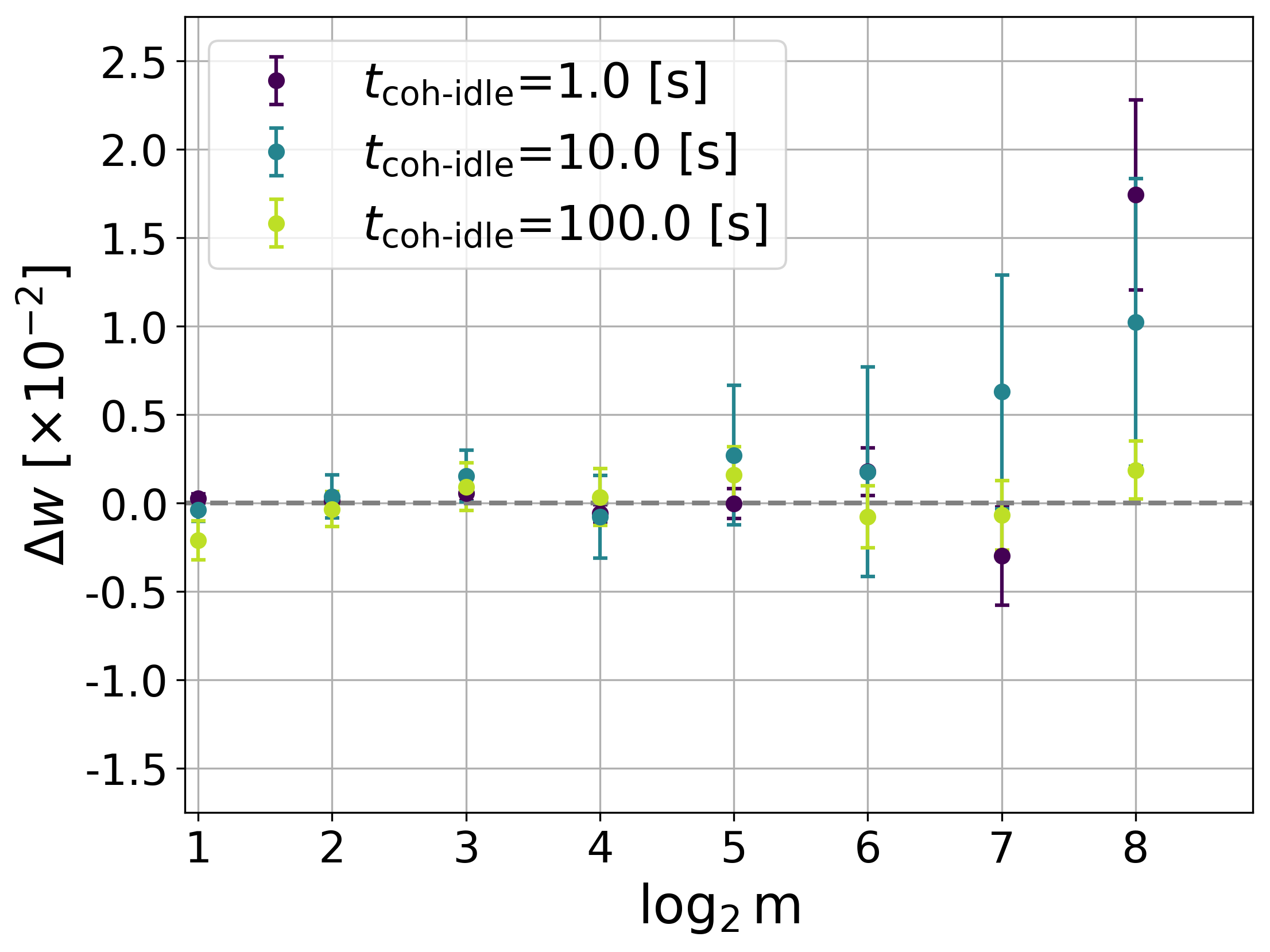}
        \caption{$N_\mathrm{links}=2000$}
        \label{fig:IL2_validation_Werner2000_fxd}
    \end{subfigure}
    \hfill
    \begin{subfigure}[t]{0.3\textwidth}
        \centering
        \includegraphics[width=\textwidth]{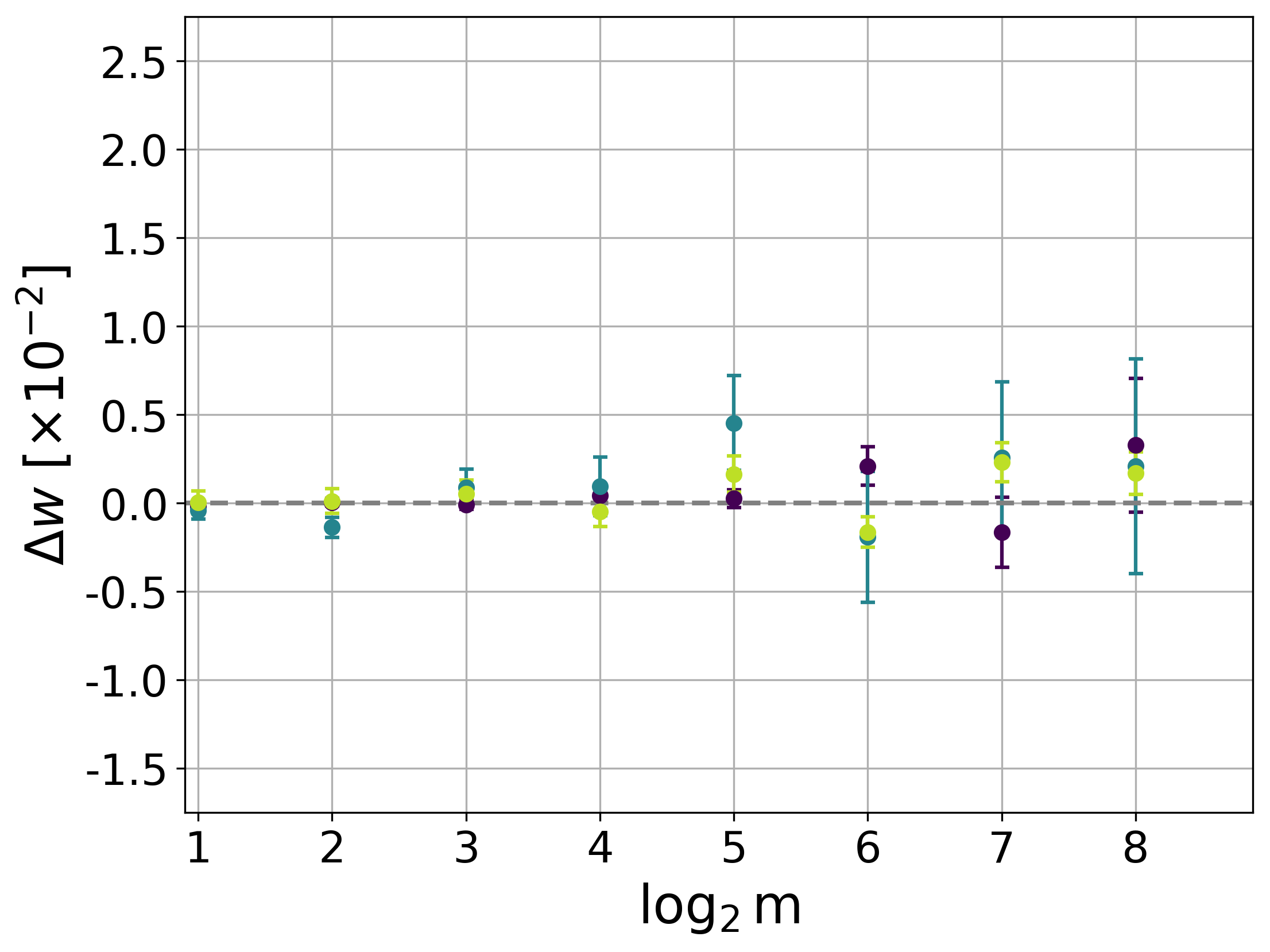}
        \caption{$N_\mathrm{links}=5000$}
        \label{fig:IL2_validation_Werner5000_fxd}
    \end{subfigure}
    \hfill
    \begin{subfigure}[t]{0.3\textwidth}
        \centering
        \includegraphics[width=\textwidth]{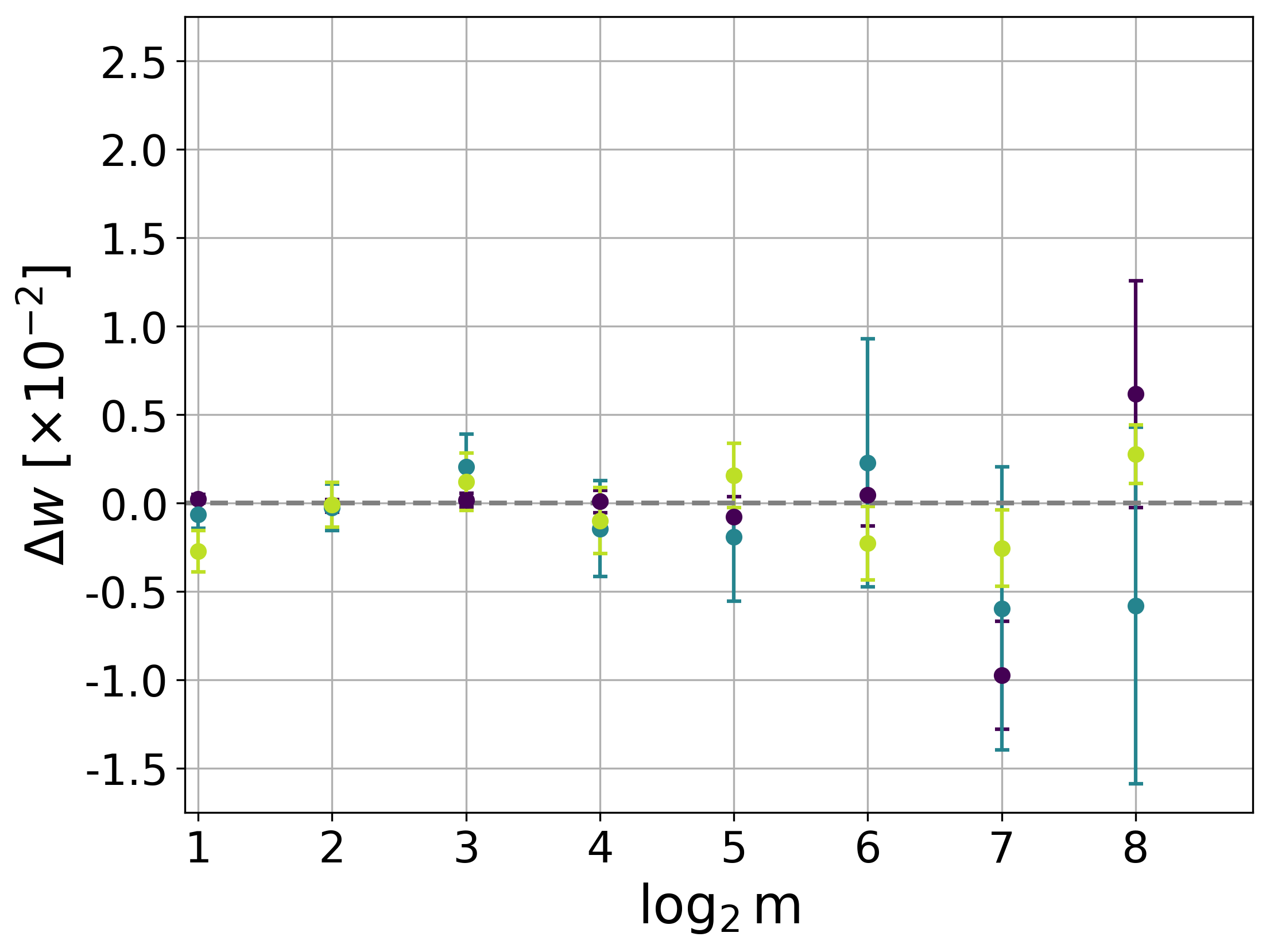}
        \caption{$N_\mathrm{links}=2000$, first $500$ links discarded}
        \label{fig:IL2_validation_Wernertc500_fxd}
    \end{subfigure}
    \caption{\textbf{Bias of NetSquid simulation decreases when using more links per run or when discarding the first links.} The difference in Werner parameter between NetSquid simulation and the exact analysis is shown for FxdMux with the same system parameters as in \cref{fig:IL2_validation_full_sim}. Legend of (a) applies also to (b) and (c). In \textbf{(a)} the difference is shown when the number of links per run is $2000$, as in \cref{fig:IL2_validation_full_sim}. The simulation consistently produces higher Werner parameter than the exact analysis, and this bias increases with the number of chips. In \textbf{(b)} the number of links per run is increased to $5000$, and the bias and magnitude of the error are reduced. In \textbf{(c)} the total number of links was $N_\mathrm{links}=2000$, but the first $500$ links were discarded while estimating the long-time averages in \cref{alg: raw data processing}. This reduces the bias, but the magnitude of the error stays similar.
    }
    \label{fig:IL2_validation_bias_decrease_fxdmux}
\end{figure*}

\end{document}